\def\showauthornotes{0}
\def\showkeys{0}
\def\showdraftbox{0}
\def\showcolorlinks{1}
\def\usemicrotype{1}
\def\showfixme{0}
\def\full{1}
\newtheorem{theorem}{Theorem}[section]
\newtheorem*{theorem*}{Theorem}
\newtheorem{proposition}[theorem]{Proposition}
\newtheorem*{proposition*}{Proposition}
\newtheorem{lemma}[theorem]{Lemma}
\newtheorem*{lemma*}{Lemma}
\newtheorem{corollary}[theorem]{Corollary}
\newtheorem*{conjecture*}{Conjecture}
\newtheorem*{fact*}{Fact}
\newtheorem*{hypothesis*}{Hypothesis}
\theoremstyle{definition}
\newtheorem{definition}[theorem]{Definition}
\theoremstyle{remark}
\newtheorem*{claim*}{Claim}
\newtheorem*{remark*}{Remark}
\newtheorem{observation}[theorem]{Observation}
\newtheorem*{observation*}{Observation}
\newcommand{\savehyperref}[2]{\texorpdfstring{\hyperref[#1]{#2}}{#2}}
\newcommand{\Sref}[1]{\hyperref[#1]{\S\ref*{#1}}}
\newcommand{\Authornote}[2]{{\sffamily\small\color{red}{[#1: #2]}}}
\newcommand{\Authornotecolored}[3]{{\sffamily\small\color{#1}{[#2: #3]}}}
\newcommand{\Authorcomment}[2]{{\sffamily\small\color{gray}{[#1: #2]}}}
\newcommand{\Authorstartcomment}[1]{\sffamily\small\color{gray}[#1: }
\newcommand{\Authorfnote}[2]{\footnote{\color{red}{#1: #2}}}
\newcommand{\Authorfixme}[1]{\Authornote{#1}{\textbf{??}}}
\newcommand{\Authormarginmark}[1]{\marginpar{\textcolor{red}{\fbox{\Large #1:!}}}}
\newcommand{\Authornote}[2]{}
\newcommand{\Authornotecolored}[3]{}
\newcommand{\Authorcomment}[2]{}
\newcommand{\Authorstartcomment}[1]{}
\newcommand{\Authorfnote}[2]{}
\newcommand{\Authorfixme}[1]{}
\newcommand{\Authormarginmark}[1]{}
\newcommand{\Paren}[1]{\left(#1\right)}
\newcommand{\Abs}[1]{\left\lvert#1\right\rvert}
\newcommand{\card}[1]{\lvert#1\rvert}
\newcommand{\norm}[1]{\lVert#1\rVert}
\newcommand{\iprod}[1]{\langle#1\rangle}
\newcommand{\Esymb}{\mathbb{E}}
\newcommand{\Psymb}{\mathbb{P}}
\DeclareMathOperator*{\E}{\Esymb}
\DeclareMathOperator*{\ProbOp}{\Psymb}
\renewcommand{\Pr}{\ProbOp}
\newcommand{\tensor}{\otimes}
\newcommand{\textparen}[1]{\text{(#1)}}
\newcommand{\because}[1]{\textparen{because #1}}
\renewcommand{\because}[1]{\textparen{because #1}}
\newcommand{\defeq}{\stackrel{\mathrm{def}}=}
\newcommand{\mper}{\,.}
\newcommand{\mcom}{\,,}
\newcommand\bdot\bullet
\DeclareMathOperator{\Ind}{\mathbb{I}}
\DeclareMathOperator{\Ind}{\mathds 1}}
\DeclareMathOperator{\Tr}{Tr}
\DeclareMathOperator{\poly}{poly}
\DeclareMathOperator{\polylog}{polylog}
\newcommand{\etal}{et al.\xspace}
\newcommand{\N}{\mathbb N}
\newcommand{\R}{\mathbb R}
\newcommand{\cA}{\mathcal A}
\newcommand{\cC}{\mathcal C}
\newcommand{\cG}{\mathcal G}
\newcommand{\cL}{\mathcal L}
\newcommand{\cN}{\mathcal N}
\newcommand{\cP}{\mathcal P}
\newcommand{\cR}{\mathcal R}
\newcommand{\draftbox}{\begin{center}
  \fbox{%
    \begin{minipage}{2in}%
      \begin{center}%
          \Large\textsc{Working Draft}\\%
        Please do not distribute%
      \end{center}%
    \end{minipage}%
  }%
\end{center}
\vspace{0.2cm}}
\newcommand{\draftbox}{}
\let\epsilon=\varepsilon
\numberwithin{equation}{section}
\newcommand\MYcurrentlabel{xxx}
\newcommand{\MYstore}[2]{%
  \global\expandafter \def \csname MYMEMORY #1 \endcsname{#2}%
}
\newcommand{\MYload}[1]{%
  \csname MYMEMORY #1 \endcsname%
}
\newcommand{\MYnewlabel}[1]{%
  \renewcommand\MYcurrentlabel{#1}%
  \MYoldlabel{#1}%
}
\newcommand{\MYdummylabel}[1]{}
\newcommand{\torestate}[1]{%
  \let\MYoldlabel\label%
  \let\label\MYnewlabel%
  #1%
  \MYstore{\MYcurrentlabel}{#1}%
  \let\label\MYoldlabel%
}
\newcommand{\restatetheorem}[1]{%
  \let\MYoldlabel\label
  \let\label\MYdummylabel
  \begin{theorem*}[Restatement of \prettyref{#1}]
    \MYload{#1}
  \end{theorem*}
  \let\label\MYoldlabel
}
\newcommand{\restatelemma}[1]{%
  \let\MYoldlabel\label
  \let\label\MYdummylabel
  \begin{lemma*}[Restatement of \prettyref{#1}]
    \MYload{#1}
  \end{lemma*}
  \let\label\MYoldlabel
}
\newcommand{\restateprop}[1]{%
  \let\MYoldlabel\label
  \let\label\MYdummylabel
  \begin{proposition*}[Restatement of \prettyref{#1}]
    \MYload{#1}
  \end{proposition*}
  \let\label\MYoldlabel
}
\newcommand{\restatefact}[1]{%
  \let\MYoldlabel\label
  \let\label\MYdummylabel
  \begin{fact*}[Restatement of \prettyref{#1}]
    \MYload{#1}
  \end{fact*}
  \let\label\MYoldlabel
}
\newcommand{\restate}[1]{%
  \let\MYoldlabel\label
  \let\label\MYdummylabel
  \MYload{#1}
  \let\label\MYoldlabel
}
\newcommand{\addreferencesection}{
  \phantomsection
  \addcontentsline{toc}{section}{References}
}
\newcommand{\sse}{\subseteq}
\newcommand{\e}{\epsilon}
\newcommand{\eps}{\epsilon}
\let\origparagraph\paragraph
\renewcommand{\paragraph}[1]{\origparagraph{#1.}}
\newcommand{\cclassmacro}[1]{\texorpdfstring{\textbf{#1}}{#1}\xspace}
\newcommand{\np}{\cclassmacro{NP}}
\let\citet\cite
\theoremstyle{definition}
\DeclareUrlCommand\email{}
\newcommand{\Span}{\mathop{\mathrm{span}}}
\newcommand{\col}{\mathop{\mathrm{col}}}
\newcommand{\diag}{\mathop{\mathrm{diag}}}
\newcommand{\Id}{\mathrm{Id}}
\newcommand{\pref}{\prettyref}
\newcommand{\pE}{{\mathbb{\tilde E}}}
\newcommand{\barn}{\overline{n}}
\newcommand{\ovec}{{\mathbbm 1}}
\newcommand{\novec}{\tilde\ovec}
\newcommand{\sdpval}{\mathsf{sdpval}}
\title{Tight Lower Bounds for Planted Clique \\
in the Degree-4 SOS Program}
\author{%
\normalsize
Prasad Raghavendra \thanks{UC Berkeley,
  \protect\email{prasad@cs.berkeley.edu}. Supported by NSF Career
  Award, NSF CCF-1407779 and the Alfred. P. Sloan Fellowship. }
\and
\normalsize
Tselil Schramm\thanks{UC Berkeley, \protect\email{tschramm@cs.berkeley.edu}.
Supported by an NSF Graduate Research Fellowship (NSF award no 1106400).}
}
\date{}
\begin{document}

\maketitle

\draftbox

\thispagestyle{empty}

\begin{abstract}
    We give a lower bound of $\tilde{\Omega}(\sqrt{n})$ for the degree-4 Sum-of-Squares SDP relaxation for the planted clique problem.
Specifically, we show that on an Erd\"{o}s-R\'{e}nyi graph $G(n,\tfrac{1}{2})$, with high probability there is a feasible point for the degree-4 SOS relaxation of the clique problem with an objective value of $\tilde{\Omega}(\sqrt{n})$,
so that the program cannot distinguish between a random graph and a random graph with a planted clique of size $\tilde{O}(\sqrt{n})$.
This bound is tight.

We build on the works of Deshpande and Montanari and Meka et al., who give lower bounds of $\tilde{\Omega}(n^{1/3})$ and $\tilde{\Omega}(n^{1/4})$ respectively.
We improve on their results by making a perturbation to the SDP solution proposed in their work, then showing that this perturbation remains PSD as the objective value approaches $\tilde{\Omega}(n^{1/2})$.

In an independent work, Hopkins, Kothari and Potechin \cite{HopkinsKP15} have obtained
a similar lower bound for the degree-$4$ SOS relaxation.

\end{abstract}

\clearpage

\section{Introduction}

In the {Maximum Clique} problem, the input consists of a graph $G
= (V,E)$ and the goal is to find the largest subset $S$ of vertices
all of which are connected to each other.  The Maximum Clique problem
is \np-hard to approximate within a $n^{1-\e}$-factor for all $\e > 0$ \cite{Hastad96,Khot01}.

Karp \cite{Karp76} suggested an average case version of the
Maximum Clique problem on random graphs drawn from the  Erd\"{o}s-R\'{e}nyi distribution $\mathbb{G}(n,\tfrac{1}{2})$.
A heuristic argument shows that an Erd\"{o}s-R\'{e}nyi graph $G \sim \mathbb{G}(n,\tfrac{1}{2})$ has a clique of size $(1-o(1))\log n$ with high probability: given such a graph, choose a random vertex, then choose one of its neighbors, then choose a vertex adjacent to both, and continue this process until there is no vertex adjacent to the clique.
After $\log n$ steps, the probability that another vertex can be added is $\tfrac{1}{n}$, and so after about $\log n$ steps this process terminates.
 This heuristic argument can be made precise, and one can show that this greedy algorithm can find a clique of size $(1 + o(1))\log n$ in an instance of $\mathbb{G}(n,\tfrac{1}{2})$ in polynomial time.

Indeed, with some work it can be shown that the largest clique in an instance of $\mathbb{G}(n,\tfrac{1}{2})$ actually has size $(2 \pm o(1))\log n$ with high probability \cite{GrimmetM75,Matula76,BollobasE76}.
But while some clique of size $(1 \pm o(1))\log n$ can easily be found in polynomial time (using the heuristic from the previous paragraph), an efficient algorithm for finding the clique of size $2\log n$ has been much more elusive.
In his seminal paper on the probabilistic analysis of combinatorial algorithms, Karp asked whether there exists a polynomial-time algorithm for finding a clique of  size $(1+\epsilon)\log n$ for any fixed constant $\epsilon > 0$ \cite{Karp76}.
Despite extensive efforts, there has been no algorithmic progress on
this question since.

The {\it planted clique problem} is a natural variant of this problem wherein the
input is promised to be either a graph drawn from $G
\sim \mathbb{G}(n,\frac{1}{2})$ or a graph $G \sim \mathbb{G}(n,\frac{1}{2})$ with a
clique of size $k$ planted within its vertices.  The goal of the
algorithm is to distinguish between the two distributions.

For $k > (2+\e) \log n$, there is a simple quasi-polynomial time
algorithm that distinguishes the two distributions.  The algorithm
simply tries all subsets of $(2 + \epsilon)\log n$ vertices, looking
for a clique.  For a random graph $\mathbb{G}(n,\frac{1}{2})$, there are
no cliques of size $(2+\e) \log n$, but there is one in the planted
distribution.  Clearly, the planted clique problem becomes easier as the planted clique's size $k$ increases.
Yet there are no polynomial-time algorithms known for
this problem for any $k < o(\sqrt{n})$.  For $k = \Omega(\sqrt{n})$, a result of
Alon et al. uses random matrix theory to argue that looking at the
spectrum of the adjacency matrix suffices to solve the decision
problem \cite{AlonKS98}.

The works of \cite{FriezeK08,BrubakerV09} show that, if one were able to efficiently calculate the injective tensor norm of a certain random order-$m$ tensor, then by extending the spectral algorithm of \cite{AlonKS98} one would have a polynomial-time algorithm for $k > n^{1/m}$.
However, there is no known algorithm that efficiently computes the injective tensor norm of an order-$m$ tensor; in fact computing the inective tensor norm is hard to approximate in the general case \cite{HarrowM13}.

While algorithmic progress has been slow, there has been success in proving strong lower bounds for the planted clique problem within specific algorithmic frameworks.
The first such bound was given by Jerrum, who showed that a class of Markov Chain Monte Carlo algorithms require a super-polynomial number of steps to find a clique of size $(1+\eps)\log n$, for any fixed $\epsilon > 0$, in an instance of $\mathbb{G}(n,\tfrac{1}{2})$ \cite{Jerrum92}.
Feige and Krauthgamer showed that $r$-levels of the Lov\'{a}sz-Schriver SDP hierarchy are needed to find a hidden clique of size $k\ge \tilde{\Omega}(\sqrt{n}/2^r)$ \cite{FeigeK00, FeigeK03}.
Feldman et al. show (for the planted bipartite clique problem) that any ``statistical algorithm'' cannot distinguish in a polynomial number of queries between the random and planted cases for $k < \tilde{O}(\sqrt{n})$ \cite{FeldmanGRVX12}.

More recently, there has been an effort to replicate the results of \cite{FeigeK00,FeigeK03} for the Sum-of-Squares (or SOS) hierarchy, a more powerful SDP hierarchy.
The recent work of \cite{MekaPW15} achieves a $\tilde{\Omega}(n^{1/2r})$-lower bound for $r$-rounds of the SOS hierarchy, by demonstrating a feasible solution for the level-$r$ SDP relaxation with a large enough objective value in the random case.
The work of \cite{DeshpandeM15} achieves a sharper $\tilde{\Omega}(n^{1/3})$ lower bound for the Meka-Potechin-Wigderson SDP solution, but only for $r = 2$ rounds; a counterexample of Kelner (which may be found in \cite{Barak14}) demonstrates that the analysis of \cite{DeshpandeM15} is tight for the integrality gap instance of \cite{DeshpandeM15,MekaPW15} within logarithmic factors.

This line of work brings to the fore the question: can a $d = O(1)$-degree
SOS relaxation solve the { \it planted clique} problem for some $k < \sqrt{n}$?
While lower bounds are known for Lov\'{a}sz-Schrijver SDP relaxations
for planted clique \cite{FeigeK00, FeigeK03}, SOS relaxations can in general be
much more powerful than Lov\'{a}sz-Schrijver relaxations.  For example,
while there are instances of unique games that are hard for
$\poly(\log \log n)$-rounds of the Lov\'{a}sz-Schrijver SDP hierarchy
\cite{KhotS09, RaghavendraS09},
recent work has shown that these instances are solved by degree-$8$
SOS hierarchy \cite{BarakBHKSZ12}.

Moreover, even the degree-$4$ SOS relaxation proves to be surprisingly
powerful in a few applications:
\begin{itemize}
\item First, the work of Barak \etal
\cite{BarakBHKSZ12} shows that a degree $4$ SOS relaxation can certify
$2-to-4$ hypercontractivity of low degree polynomials over the
hypercube.  This argument is the reason that hard
instances for Lov\'{a}sz-Schriver and other SDP hierarchies constructed
via the {\it noisy hypercube gadgets} are easily refuted by the SOS
hierarchy.

\item
Second, a degree-$4$ SOS relaxation can certify
that the $2$-to-$4$ norm of a random subspace of dimension at most
$o(\sqrt{n})$ is bounded by a constant (with high probability over the
choice of the subspace) \cite{BarakBHKSZ12}.
    This average-case problem has superficial similarities to the planted clique problem.
\end{itemize}

In this work, we make modest progress towards a lower bound for SOS
relaxations of planted clique by obtaining a nearly tight lower bound
for the degree-$4$ SOS relaxation (corresponding to two rounds, $r = 2$).
More precisely, our main result is the following.
\begin{theorem}\label{thm:main-result}
    Suppose that $G \sim \mathbb{G}(n,\tfrac{1}{2})$.
    Then with probability $1 - O(n^{-4})$, there exists a feasible solution to the SOS-SDP of degree $d = 4$ ($r = 2$) with objective value $\frac{\sqrt{n}}{\polylog n}$.\footnote{We have made no effort to optimize logarithmic factors in this work; a more delicate analysis of the required logarithmic factors is certainly possible.}
\end{theorem}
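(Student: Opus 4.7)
The plan is to construct an explicit feasible pseudo-expectation $\pE$ of degree $4$ with $\sum_i \pE[x_i] = k$ for $k = \tilde\Omega(\sqrt n)$, subject to the clique constraints $\pE[x_i x_j \cdot q(x)] = 0$ whenever $(i,j)\notin E$ (for $\deg q \leq 2$) and the Boolean constraints $\pE[x_i^2 q(x)] = \pE[x_i q(x)]$. The starting point is the Meka--Potechin--Wigderson/Deshpande--Montanari candidate, which for a multiset $S \subseteq V$ with $|S| \leq 4$ is (essentially) $\pE_{\text{MPW}}\bigl[\prod_{i\in S} x_i\bigr] = (k/n)^{|S|} \cdot \Ind[S\text{ forms a clique}]$. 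This solution automatically obeys the clique and Boolean constraints, so the only thing to verify is positivity of the moment matrix $M_{S,T} = \pE[x_S x_T]$. Kelner's counterexample shows that $M$ ceases to be PSD once $k$ exceeds $\tilde\Omega(n^{1/3})$, because certain test vectors $v$ (built from local graph statistics like degree deviations) yield $v^\top M v < 0$. My plan is to add a carefully chosen low-rank correction $\pE^{\text{pert}}$ supported only on monomials $x_S$ whose $S$ is itself a clique, so that the clique and Boolean constraints remain satisfied, while the correction exactly cancels the Kelner-type obstructions to order.

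Concretely, I would block the moment matrix according to $|S|,|T| \in \{0,1,2\}$, obtaining blocks $M_{ab}$ for $a,b \in \{0,1,2\}$, and analyze it by Schur complement: it suffices to show that $M_{22}$ is PSD and that its Schur complement with respect to the top-left $(1+n)\times(1+n)$ block dominates the coupling. For each block I would write $M_{ab} = \mu_{ab} + \Xi_{ab}$ where $\mu_{ab}$ is the ``planted-distribution expectation'' (a rank-one or low-rank PSD contribution coming from the moments a real size-$k$ clique would have) and $\Xi_{ab}$ is a mean-zero fluctuation driven by the random edges of $G \sim \mathbb{G}(n,1/2)$. The fluctuations $\Xi_{ab}$ are (after appropriate rescaling) sums of products of independent $\pm 1/2$ indicators over combinatorial structures in $G$, and I would control their spectral norms by the trace method / matrix Bernstein applied to the corresponding Schur-like graph matrices, using that small subgraph counts in $\mathbb{G}(n,1/2)$ concentrate. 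The role of the perturbation $\pE^{\text{pert}}$ is to subtract off precisely those low-rank fluctuation directions (the ``Kelner'' directions) whose natural spectral norm scales like $(k/n)^{?} \cdot \sqrt n$ and therefore dominates $\mu$ near $k \sim n^{1/3}$; after this subtraction the remaining error should be bounded by $O(k^2/n)$ times a PSD factor, which is what is needed for positivity up to $k = \tilde\Omega(\sqrt n)$.

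The main obstacle will be the third block $M_{22}$ and its interaction with $M_{12}$: the natural rescaling makes $M_{22}$ an $\binom{n}{2}\times\binom{n}{2}$ matrix whose entries are indexed by pairs of edges of $G$, and whose fluctuation piece is a sum of combinatorially nontrivial graph matrices (triangle-, path-, and star-indexed) of varying spectral norms. Verifying that the perturbation simultaneously cancels \emph{every} low-rank direction whose natural scale is $\omega(k^2/n)$, and that the residual admits a uniform spectral-norm bound via matrix concentration, is the technical heart of the argument; this requires an essentially complete enumeration of the low-rank ``bad'' blocks together with a careful accounting of logarithmic factors from high-probability bounds on subgraph counts and spectral norms of random graph matrices. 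Once this bookkeeping is done, a union bound over the relevant test directions yields PSD-ness with failure probability $O(n^{-4})$ and objective value $k = \sqrt n/\polylog n$, as claimed.
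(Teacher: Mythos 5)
Your overall architecture matches the paper's: start from the MPW/Deshpande--Montanari pseudomoments, identify the low-dimensional ``Kelner'' directions that break PSDness at $k \sim n^{1/3}$, correct the degree-4 moments along those directions, and verify feasibility via the Schur complement ($H_{22}'\succeq 0$ and $H_{22}'\succeq H_{12}^\top H_{11}^{-1}H_{12}$) together with trace-method bounds on the graph-indexed fluctuation matrices. However, there is a genuine gap at the heart of the argument: you assert that one can ``add a carefully chosen low-rank correction supported only on monomials $x_S$ whose $S$ is itself a clique'' that ``subtracts off precisely those low-rank fluctuation directions,'' but you give no mechanism for producing a correction that simultaneously (i) vanishes on non-clique entries and depends only on $S_1\cup S_2$ (the hard linear constraints of the relaxation) and (ii) has the required spectral effect along the bad directions. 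A generic low-rank additive term $\pm\sum_j v_jv_j^\top$ aligned with the Kelner test vectors violates both constraints, and the fluctuation piece of the moment matrix is not itself something you can legally subtract, since it is not supported on cliques nor a function of the union. The paper's resolution is a \emph{multiplicative} update: with $a_i(k,\ell)=A_{ik}A_{i\ell}$ and $D_i=\diag(a_i)$, one sets $N'=N+\beta\sum_i D_i(N-\hat K)D_i$, which manifestly preserves the zero pattern and the union-dependence; because $\E[N]$ has a large eigenvalue $\lambda_0$ along the all-ones vector, this conjugation acts approximately like the additive update $\beta\lambda_0\sum_i a_ia_i^\top\approx\beta\lambda_0 n^2\Pi_W$ plus controllable error. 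That diagonal-conjugation trick is the new idea of the construction and is absent from your plan.

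A second, related issue is the sign and role of the correction. The obstruction is not that the fluctuation $Q$ is itself a low-rank negative term to be cancelled; $Q$ has full support, and what is true (the paper's \prettyref{lem:probsubspace}) is that the directions on which $\Pi_2 Q$ has large singular values lie in $W=\Span\{a_1,\dots,a_n\}$. The fix is therefore to \emph{add} positive mass $\approx\beta\lambda_0\Pi_W$ to the diagonal so that the off-diagonal coupling $\Pi_2 Q\Pi_1$ is dominated in the $2\times2$-minor sense ($\theta\lambda_1\gtrsim\|Q\|^2$), not to subtract the fluctuation so that ``the remaining error is bounded by $O(k^2/n)$.'' As written, your cancellation step would fail: after any legal perturbation the noise $Q$ is still present at full strength, and positivity must come from outspending it on the diagonal restricted to $W$. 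Repairing your plan essentially requires importing the multiplicative update and the accompanying lemmas (near-orthogonality of the $a_i$, smallness of $\Pi_{01}\Pi_W\Pi_{01}$, and the bound on $\sum_i D_i\Pi_2\Pi_W\Pi_2 D_i$), at which point you would be reproducing the paper's proof.
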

Note that by the work of \cite{AlonKS98}, this result is tight up to
logarithmic factors.  In an independent work, Hopkins, Kothari and
Potechin \cite{HopkinsKP15} have obtained a similar result.

Our work builds heavily on previous work by Meka, Potechin and
Wigderson \cite{MekaPW15} and Deshpande and Montanari \cite{DeshpandeM15}.
Since the SDP solution constructed in these works is infeasible for $k
> n^{1/3}$, we introduce a modified SDP solution with objective value $\tilde{\Omega}(\sqrt{n})$, and prove that for a random graph $G$ the solution is feasible with high probability.
At the parameter setting for which the objective value becomes $\Omega(n^{1/3})$, the SDP solutions of \cite{DeshpandeM15,MekaPW15} violate the PSDness constraint, or equivalently, there exists a set of test vectors $X$ such that $x^T M x < 0$ for all $x \in X$.
Our feasible SDP solution is a perturbation of their solution--we add spectral mass to the solution along the vectors from the set $X$, then enforce the linear constraints of the SDP program.

\subsection{Notation}
We use the symbol $\succeq$ to denote the PSD ordering on matrices, saying that $A \succeq 0$ if $A$ is PSD and that $A \succeq B$ if $A- B\succeq 0$.
When we wish to hide constant factors for clarity, we use $a \lesssim b$ to denote that $a \le C\cdot b$ for some constant $C$.

We denote by $\ovec_n \in \R^n$ the vector such that $\ovec_n (i) \defeq 1~\forall i \in [n]$, or the all-1's vector.
We denote the normalized version of this vector by $\novec_n \defeq\ovec_n/\|\ovec_n\| $.
Further, we use $J_n \defeq \ovec\ovec^\top$ and $Q_n \defeq \novec\novec^\top $.
We will drop the subscript when $n$ is clear from context.

In our notation, we at times closely follow the notation of \cite{DeshpandeM15}, as our paper builds on their results and we recycle many of their bounds.

For convenience, we will use the shorthand $\barn = n \log n$.
We will abuse notation by using $\binom{n}{2}$ to refer to both the binomial coefficient and to the set $\binom{n}{2} = \{(a,b)~|~ a,b \in [n],~ a\neq b\}$.
We will also use the notation $\binom{n}{\le k}$ to refer to the union of sets $\bigcup_{i=0}^k \binom{n}{i}$.
Further, when we give a vector $v \in \R^{\binom{n}{2}}$, we will identify the entries of $v$ by unordered pairs of elements of $[n]$.

Throughout the paper, we will (unless otherwise stated) work with some fixed instance $G$ of $\mathbb{G}(n,\tfrac{1}{2})$, and denote by $A_i \in \R^n$ the ``centered'' $i$th row of the adjacency matrix of $G$, with $j$th entry equal to $1$ if the edge $(i,j) \in E$, equal to $-1$ if the edge $(i,j) \not\in E$, and equal to $0$ for $j = i$.
We will use $A_{ij}$ to denote the $j$th index of $A_i$.

\subsection{Organization}

In \pref{sec:overview}, we give background material on the degree-4 SOS relaxation for the max-clique problem, describe the integrality gap of Deshpande and Montanari for the planted clique problem, and explain the obstacle they face to reach an integrality gap value of $\tilde\Omega(\sqrt{n})$.
We then describe our integrality gap instance, motivating our construction using the obstacle for the Deshpande-Montanari and Meka-Potechin-Wigderson witness, and give an overview of our proof that our integrality gap instance is feasible.
In \pref{sec:mainproof},  we prove that our witness is PSD, completing the proof of feasibility.
\pref{sec:matrix-conc} contains our concentration bounds for random matrices that arise within our proofs.
In our proof, we reuse several bounds proved by Deshpande and Montanari.
As far as possible, we restate the claims from
\cite{DeshpandeM15} as they are used; for convenience, in
\pref{app:dm-bounds}, we list a few other claims from Deshpande and Montanari that we use in this paper.

\section{Preliminaries and Proof Overview}\label{sec:overview}
In this section, we describe the degree-4 SOS relaxation for the max-clique SDP and give background on the Deshpande-Montanari witness.
We then describe our own modified witness, and give an overview of the proof that our witness is feasible (the difficult part being showing that our witness is PSD).
The full proof of feasibility is deferred to \pref{sec:mainproof}.

\subsection{Degree-4 SOS Relaxation for Max Clique} \label{sec:sdprel}

The degree $d = 4$ SOS relaxation for the maximum clique problem is
a semidefinite program whose variables are $X \in \R^{\binom{n}{\leq 2} \times
  \binom{n}{\leq 2}}$.  For a subset $S \sse V$ with $|S| \leq 2$, the
variable $X_S$ indicates whether $S$ is contained in the maximum
clique.  For a graph $G$ on $n$ vertices, the program can be described as follows.
\begin{align}
  \text{Maximize  } & \sum_{i \in [n]} X_{\{i\}, \{i\}} \label{eq:sdprelaxation} \\
\text{ subject to } & X_{S_1,S_2} \in [0,1] & \forall S_1, S_2 \in \binom{n}{\leq 2} \nonumber\\
& X_{S_1,S_2} = X_{S_3,S_4} & \text{ whenever } S_1 \cup S_2 = S_3 \cup S_4 \nonumber\\
& X_{S_1, S_2 } = 0 & \text{ if } S_1 \cup S_2 \text{ is not a clique in } G
                    \nonumber \\
& X_{\emptyset, \emptyset} = 1 \nonumber\\
& X \succeq 0 \nonumber
\end{align}
It is instructive to think of the variable $X_{S}$ as a \emph{pseudoexpectation} of the product of indicator variables, or a \emph{pseudomoment}:
\[
    X_{S} = \pE\left[\prod_{i \in S} \Ind(i \in \text{clique} )\right].
\]
Intuitively, the constraints of the SDP force the solution to behave somewhat like the moments of a probability distribution over integral solutions, although they needn't correspond to the moments of a true distribution, hence the term \emph{pseudomoment}.
For more background, see e.g. \cite{Barak14}.
The pseudmoment interpretation of the SDP solution motivates the choice of the witness in the prior work.
For example, we may notice that the objective function in this view is simply the pseudoexpectation of the size of the planted clique, $\pE[\sum_{i\in[n]} \Ind(i \in \text{clique})]$.

If $\sdpval(G,4)$ denotes the optimum value of the SDP relaxation on
graph $G$, then clearly $\sdpval(G)$ is at least the size of the
maximum clique in $G$.
In order to prove a lower bound for degree $4$ SOS relaxation on
$\mathbb{G}(n,\frac{1}{2})$, it is sufficient to argue that with overwhelming
probability, $\sdpval(G)$ is significantly larger than the maximum
clique on a random graph.
This amounts to exhibiting a feasible SDP
solution with large objective value, for an overwhelming fraction of graphs sampled from $\mathbb{G}(n,\frac{1}{2})$.
Formally, we will show the following:
\begin{theorem}[Formal version of \pref{thm:main-result}]\label{thm:main-technical}
There exists an absolute constant $c \in \N$ such that
  $$ \Pr_{G \sim \mathbb{G}(n,\frac{1}{2}) } \left\{ \sdpval(G) \geq
  \frac{\sqrt{n}}{\log^c{n}} \right\} \geq 1-O(n^{-4})$$
\end{theorem}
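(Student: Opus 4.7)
The plan is to exhibit, for almost every graph $G \sim \mathbb{G}(n,\tfrac12)$, an explicit feasible solution $\tilde M$ to the program \pref{eq:sdprelaxation} with objective value $\sqrt{n}/\polylog n$. Following the overview in the introduction, the construction starts from the Deshpande--Montanari (DM) witness $M_{\mathrm{DM}}$ at target parameter $\lambda = \sqrt{n}/\polylog n$. The matrix $M_{\mathrm{DM}}$ is constructed so that it automatically realizes the objective $\lambda$, satisfies all of the linear equality constraints $M_{S_1,S_2}=M_{S_3,S_4}$ whenever $S_1\cup S_2 = S_3\cup S_4$, vanishes on non-edges, and has entries in $[0,1]$ with high probability over $G$. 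So the only property that fails is PSDness, and our job is to repair it without disturbing the rest.

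The first step is to diagnose the failure: identify a (short) list of test vectors $x_1,\ldots,x_m \in \R^{\binom{n}{\leq 2}}$ on which $x_\alpha^\top M_{\mathrm{DM}} x_\alpha$ becomes too negative once $\lambda \gg n^{1/3}$. Motivated by Kelner's counterexample and the DM analysis, these vectors should be low-complexity functionals of the adjacency matrix $A$ (roughly, quadratic expressions in the centered rows $A_i$). We then form a perturbation of the shape
\[
\Delta \;=\; \sum_{\alpha} c_\alpha\, y_\alpha y_\alpha^\top,
\]
where each $y_\alpha$ is designed to have large correlation with the corresponding bad direction $x_\alpha$, and the coefficients $c_\alpha$ are chosen so that the negative spectral mass of $M_{\mathrm{DM}}$ along $\mathrm{span}(x_\alpha)$ is neutralized while no new large negative eigenvalue is produced elsewhere. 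Because $\Delta$ is PSD by construction, on its own it only helps; however $\Delta$ generally violates the equality constraints and the vanishing-at-non-edges constraint.

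To restore these constraints, one projects $M_{\mathrm{DM}} + \Delta$ onto the affine subspace $\cL$ cut out by the SDP's linear constraints, producing
\[
\tilde M \;=\; \Pi_{\cL}(M_{\mathrm{DM}} + \Delta).
\]
The main technical obstacle, and the bulk of the work, is proving $\tilde M \succeq 0$. The intended strategy is to split $\tilde M$ block-by-block according to $|S_1 \cup S_2|$, write each block as $M_{\mathrm{DM}}$'s contribution plus a spectrally-controlled error from $\Pi_{\cL}\Delta - \Delta$, and show the error is absorbed by the bulk positive spectrum that $\Delta$ injects. This reduces to controlling $\lmax$ and $\lmin$ of a family of random matrices whose entries are low-degree polynomials in $\{A_{ij}\}$, using matrix Bernstein-type and trace-moment arguments (the concentration toolkit promised in \pref{sec:matrix-conc}), together with several spectral estimates from \cite{DeshpandeM15} restated in \pref{app:dm-bounds}. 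The hard part is choosing the perturbation $\Delta$ precisely enough that after the projection both (i) the dangerous negative directions of $M_{\mathrm{DM}}$ remain cancelled and (ii) the projection artifact stays $o(1)$-small in operator norm compared to the positive bulk.

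The remaining checks are comparatively routine once PSDness is in hand. The objective $\sum_i \tilde M_{\{i\},\{i\}}$ differs from $\lambda$ by a lower-order term, handled by scalar concentration; the off-diagonal entries are shown to lie in $[0,1]$ by the same kinds of bounds on rows of $A$; and the non-edge entries are identically zero by the projection. Assembling the pieces with a union bound over the (constantly many) high-probability events, one obtains $\tilde M$ feasible with objective $\sqrt{n}/\log^{c} n$ except with probability $O(n^{-4})$, proving \pref{thm:main-technical}.
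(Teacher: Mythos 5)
Your high-level motivation (perturb the Deshpande--Montanari witness along its bad directions) matches the paper, but the mechanism you propose --- add a PSD matrix $\Delta=\sum_\alpha c_\alpha y_\alpha y_\alpha^\top$ and then apply the Euclidean projection $\Pi_{\cL}$ onto the affine subspace of linear constraints --- is not what the paper does, and it contains a genuine gap. The projection $\Pi_{\cL}$ must zero out every entry indexed by a non-clique $S_1\cup S_2$ and average entries over equal unions; for a random graph a constant fraction ($1-2^{-\binom{|S_1\cup S_2|}{2}}$) of the entries of $y_\alpha y_\alpha^\top$ get killed, so $\Pi_{\cL}\Delta-\Delta$ is of the \emph{same order} as $\Delta$ itself, not an $o(1)$ artifact relative to the positive bulk. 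A Euclidean projection onto an affine subspace has no reason to preserve positive semidefiniteness, and you give no mechanism by which the cancelled negative directions of $M_{\mathrm{DM}}$ stay cancelled after this massive modification. You have essentially deferred the entire difficulty to an unproved claim.

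The paper's resolution is precisely to avoid the projection: it uses a \emph{multiplicative} update, $N' = N + \beta\sum_{i\in[n]}\hat D_i(N-\hat K)\hat D_i$ where $\hat D_i$ is diagonal with entries $a_i(k,\ell)=A_{ik}A_{i\ell}$ (\pref{def:sol}). Because conjugation by a $\pm1$ diagonal matrix preserves the zero pattern, and because the resulting entry is $\alpha_4\bigl(1+\beta\sum_i\prod_{j\in S_1\cup S_2}A_{ij}\bigr)$, which manifestly depends only on $S_1\cup S_2$ and vanishes off cliques, all linear constraints hold by construction (\pref{prop:lin-const}); only the $[0,1]$ range needs a scalar Chernoff bound. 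The update still acts like the additive one you want because $N$ has a huge eigenvalue along $\ovec$, so $D_w N D_w\approx\lambda_0\, ww^\top$ plus controlled error. The PSDness proof then goes through the Schur complement conditions of \pref{lem:schur}, with the problematic subspace identified concretely as $W=\Span\{a_1,\ldots,a_n\}$ (\pref{lem:probsubspace}) rather than left abstract as in your sketch. To repair your argument you would either have to prove strong spectral control of $\Pi_{\cL}\Delta-\Delta$ (which appears intractable as stated) or replace the additive-then-project step with the multiplicative construction.
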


We obtain \pref{thm:main-technical} by constructing a point, or witness, for each $G \sim G(n,\tfrac{1}{2})$, then proving that the point is feasible with high probability.
We defer the description of our witness to \pref{def:sol} and \pref{def:sol-pE}, as we spend \pref{sec:DM-wit} and \pref{sec:prob-space} motivating our construction; however the curious reader may skip ahead to \pref{def:sol-pE} which does not require the knowledge of additional notation.

\subsection{Deshpande-Montanari Witness} \label{sec:DM-wit}

Henceforth, fix a graph $G$ that is sampled from $\mathbb{G}(n,\frac{1}{2})$.
Both the work of Meka, Potechin and Wigderson \cite{MekaPW15} and
that of Deshpande and Montanari \cite{DeshpandeM15} construct
essentially the same SDP solution for the degree-$4$ SOS relaxation.

This SDP solution assigns to each clique of size $1,\ldots, d$, a value
that depends only on its size (in our case, $d = 4$).
In essence, their solution takes advantage of the independence of the $G(n,p)$ instance.
The motivating observation is that the variable $X_S$ can be thought of as a pseudoexpectation of the indicator that $S$ is a subclique of the planted clique. The idea is then to make this pseudoexpectation of the indicator consistent with the true expectation under the distribution where a clique of size $k$ is planted uniformly at random within the instance of $G(n,p)$.
Thus, every vertex is in the clique ``with uniform probability:''
\[
    \pE[X_{\{i\}}]
    \approx \E[\Ind(i \text{ is in planted clique})]
    = \frac{k}{n}.
\]
Then, the same principle is applied to edges, traingles, and $4$-cliques,
so that
\[
    \pE[X_{S}]
    \approx \Ind(S \text{ is clique}) \cdot \E[\Ind(S \text{ is in planted clique})]
    =  \frac{\binom{k}{|S|}}{\binom{n}{|S|}}\cdot \Paren{\tfrac{1}{2}}^{\binom{|S|}{2}}.
\]

This is the general idea of the SDP solution of \cite{DeshpandeM15}.
More formally, the SDP solution in \cite{DeshpandeM15} is specified by
four parameters $\underline{\alpha} = \{\alpha_i\}_{i=1}^4$ as,
\begin{align*}
  M(G,\underline{\alpha}) = \alpha_{\card{A\cup B}} \cdot \cG_{A \cup B} \mcom
\end{align*}
where for a set of vertices $A \sse V$, $\cG_{A}$ is the indicator that
the subgraph induced on $A$ is a clique.  The parameters $\{\alpha_i\}_{i\in
  [4]}$ determine the value of the objective function, and the
feasibility of the solution.  As a convention, we will define $\alpha_0 = 1$.

  It is easy to check that the solution $M(G,\underline{\alpha})$ satisfies
all the linear constraints of the SOS program \eqref{eq:sdprelaxation}, since it assigns
non-zero values only to cliques in $G$.  The key difficulty is in
showing that the matrix $M$ is PSD for an appropriate choice of
parameters $\underline{\alpha}$.

In order to show that $M(G,\underline{\alpha}) \succeq 0$, it is sufficient
to show that $N(G,\underline{\alpha}) \succeq 0$ where,
\begin{align*}
N_{A,B} = \alpha_{\card{A \cup B}} \cdot \prod_{i \in A \setminus B, j \in
  B \setminus A} \cG_{ij},
\end{align*}
where $\cG_{ij}$ is the indicator for the presence of the edge $(i,j)$.
In words, $N$ is the matrix where the entry $\{a,b,c,d\}$ is
proportional not to the indicator of whether $\{a,b,c,d\}$ is a clique,
but to the indicator of whether $G$ has as a subgraph the bipartite clique with
bipartitions $\{a,b\}$ and $\{c,d\}$.  It is easy to see that the matrix
$M$ is obtained by dropping from $N$ the rows and columns
corresponding $\{a,b\} \in \binom{n}{2}$ where $(a,b) \notin E(G)$.  Hence $N
\succeq 0 \Longrightarrow M \succeq 0$.

Notice that $N$ is a random matrix whose entries
depend on the edges in the random graph $G$.  At the risk of
over-simplification, the approach of both the previous works
\cite{MekaPW15} and \cite{DeshpandeM15} can be broadly summarized as follows:
\begin{enumerate}
\item (Expectation) Show that the expected matrix $\E[N]$ has sufficiently large
positive eigenvalues.
\item (Concentration) Show that with high probability over the choice
  of $G$, the {\it noise} matrix $N - \E[N]$ has bounded eigenvalues, so as to ensure that $N = \E[N] + (N-\E[N]) \succeq
0$
\end{enumerate}

Here we will sketch a few key details of the argument in \cite{DeshpandeM15}.
The matrix $N\in  \R^{\binom{n}{\leq 2} \times \binom{n}{\leq 2}}$ can be
decomposed into blocks $\{N_{ab}\}_{a,b \in \{0,1,2\}}$ where $N_{a,b}
\in \R^{\binom{n}{a} \times \binom{n}{b}}$.  Deshpande and Montanari use
the Schur complements to reduce the problem of proving that $N \succeq
0$ to facts about the blocks $\{N_{ab}\}_{a,b \in \{0,1,2\}}$.
Specifically, they show the following lemma:
\begin{lemma}\label{lem:schur}
    Let $\cA \in \R^{\binom{n}{\leq 2} \times \binom{n}{\leq 2}}$ be the matrix defined so that $\cA_{A,B} = \alpha_{|A|}\alpha_{|B|}$.
    For $a,b\in\{0,1,2\}$, let $H_{a,b}$ be the submatrix of $N(G,\alpha) - \cA$ corresponding to monomials $X_S$ with $|S| = a+b$.
    Then $N(G,\alpha)$ is PSD if and only if
    \begin{align}
	H_{11} &\succeq 0,\label{eq:one-one}\\
	H_{22} - H_{12}^TH_{11}^{-1}H_{12} &\succeq 0 \label{eq:comp}
    \end{align}
\end{lemma}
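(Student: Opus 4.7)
The plan is to decompose $N = \cA + (N - \cA)$ and observe that $\cA$ is a rank-one PSD outer product that is entirely absorbed into the row and column of $N$ indexed by the empty set, so that $N \succeq 0$ reduces cleanly to a standard Schur complement statement about the blocks $H_{11}, H_{12}, H_{22}$.

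First I would note that $\cA = u u^\top$ where $u \in \R^{\binom{n}{\le 2}}$ has $u_A = \alpha_{\card A}$ (using the convention $\alpha_0 = 1$); in particular $\cA$ is rank one and PSD. Next, I would check that the row and column of $N - \cA$ indexed by $A = \emptyset$ vanish identically: for any $B \in \binom{n}{\le 2}$, the product $\prod_{i \in \emptyset \sm B,\, j \in B \sm \emptyset} \cG_{ij}$ is empty and equal to $1$, so $N_{\emptyset, B} = \alpha_{\card B} = \alpha_0 \alpha_{\card B} = \cA_{\emptyset, B}$. Consequently the only nonzero entries of $N - \cA$ lie in the sub-block indexed by sets of size $\ge 1$, and by the definition of the $H_{a,b}$ this sub-block equals
\[
\tilde H \;\defeq\; \begin{pmatrix} H_{11} & H_{12} \\ H_{12}^\top & H_{22} \end{pmatrix}.
\]

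Writing a test vector as $x = (x_\emptyset, x_1, x_2)$ according to the three size-blocks, the quadratic form splits as
\[
x^\top N x \;=\; (u^\top x)^2 + x_1^\top H_{11} x_1 + 2\, x_1^\top H_{12} x_2 + x_2^\top H_{22} x_2.
\]
Because the coefficient of $x_\emptyset$ inside $u^\top x$ equals $\alpha_0 = 1$, for every choice of $(x_1, x_2)$ there is a value of $x_\emptyset$ that makes the squared term vanish. Hence $x^\top N x \ge 0$ for all $x$ if and only if the quadratic form in $(x_1, x_2)$ is nonnegative, i.e.\ if and only if $\tilde H \succeq 0$. A single invocation of the standard Schur complement identity then finishes the proof: $\tilde H \succeq 0$ iff $H_{11} \succeq 0$ and $H_{22} - H_{12}^\top H_{11}^{-1} H_{12} \succeq 0$, with $H_{11}^{-1}$ read as the Moore--Penrose pseudo-inverse on the column space of $H_{11}$ should $H_{11}$ fail to be strictly positive.

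The only step requiring any genuine care is the empty-set cancellation, since it is precisely this observation that removes $\cA$ from the PSD analysis; without it, the reduction to $\tilde H$ would fail. Once that observation is made the rest is a one-line application of a textbook Schur complement lemma, so there is no analytic obstacle here. The purpose of the lemma is not to overcome a difficulty but to repackage the condition ``$N \succeq 0$'' into two cleaner sub-questions---one on the ``degree-2'' block and one on the conditional ``degree-4'' block---that subsequent sections can attack individually.
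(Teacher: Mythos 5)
The paper never proves this lemma---it is quoted from Deshpande and Montanari---but your argument is correct and is the standard one behind it: $\cA = uu^\top$ with $u_\emptyset = \alpha_0 = 1$, the $\emptyset$ row and column of $N-\cA$ vanish, the free choice of $x_\emptyset$ annihilates the rank-one term, and a two-block Schur complement finishes. The one caveat is that when $H_{11}$ is singular the ``only if'' direction of the generalized Schur criterion also requires $\col(H_{12}) \subseteq \col(H_{11})$, not just the two displayed conditions; but the lemma's own use of $H_{11}^{-1}$ presupposes invertibility, and the paper later establishes $H_{11} \succ 0$ with high probability, so nothing is lost in the application.
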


The most significant challenge is to argue that \eqref{eq:comp} holds
with high probability.  In fact, the inequality only holds for the Deshpande-Montanari SDP solution with high
probability for parameters $\alpha$ for which the objective value is $o(n^{1/3})$.

\paragraph{Expected matrix}
The expected matrix $\E[H_{22}]$ is symmetric with respect to
permutations of the vertices.  It forms an {\it association scheme}
(see \cite{MekaPW15,DeshpandeM15}), by
virtue of which its eigenvalues and eigenspaces are well understood.
In particular, the following proposition in \cite{DeshpandeM15} is an
immediate consequence of the theory of association schemes.

    \begin{proposition}[Proposition 4.16 in \cite{DeshpandeM15}]\label{prop:eigenspaces}
   $\E[H_{22}]$ has three eigenspaces, $V_0,V_1,V_2$ such that
    \[
	\E[H_{22}] = \lambda_0 \Pi_{0} + \lambda_1 \Pi_1 + \lambda_2 \Pi_2,
    \]
    where $\Pi_0,\Pi_1,\Pi_2$ are the projections to the spaces
    $V_0,V_1,V_2$ respectively.  The eigenvalues are given by,
\begin{align}
\lambda_0 (\underline{\alpha}) & \defeq \alpha_2 + (n-2)\alpha_3 + \frac{(n-2)(n-3)}{32} \cdot \alpha_4 - \frac{n(n-1)}{2}
     \alpha_2^2 \label{eq:lambda0} \\
\lambda_1 (\underline{\alpha}) & \defeq \alpha_2 + \frac{(n-4)}{2} \alpha_3 - \frac{(n-3)}{16} \alpha_4 \label{eq:lambda1} \\
\lambda_2 (\underline{\alpha}) & \defeq \alpha_2 - \alpha_3 + \frac{\alpha_4}{16}  \label{eq:lambda2}
\end{align}
    Further the eigenspaces are given by,
    \begin{align*}
	V_0
	&= \Span\{\ovec\},\\
	V_{1}
	&= \Span\{ u~|~\iprod{u,\ovec} = 0, ~u_{i,j} = x_i + x_j~\text{for}~x \in \R^{n}\},\\
	\text{and} \qquad V_2 &= R^{\binom{n}{\le 2}} \setminus (V_0 \cup V_1)\mcom
    \end{align*}
    where we have used $\R^{\binom{n}{\le 2}}$ to denote the space of vectors of real numbers indexed by subsets of $n$ of size at most $2$.

\end{proposition}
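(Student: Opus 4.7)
The plan is to recognize $\E[H_{22}]$ as an element of the Bose--Mesner algebra of the Johnson scheme on $2$-subsets of $[n]$, and then to invoke the classical joint diagonalization of that scheme.

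First I would compute the entries of $\E[H_{22}]$ in closed form by cases on $|A \cap B|$ for $A, B \in \binom{n}{2}$. Since $\cA_{A,B} = \alpha_2^2$ uniformly on this block, and the entries of $N(G,\underline{\alpha})$ depend only on $|A \cup B|$ and on the bipartite edge-indicators $\cG_{ij}$ for $i \in A \setminus B$ and $j \in B \setminus A$, the independence of edges under $\mathbb{G}(n,\tfrac{1}{2})$ yields $\E[H_{22}]_{A,A} = \alpha_2 - \alpha_2^2$, $\E[H_{22}]_{A,B} = \tfrac{\alpha_3}{2} - \alpha_2^2$ when $|A \cap B| = 1$ (one required edge, present with probability $\tfrac{1}{2}$), and $\E[H_{22}]_{A,B} = \tfrac{\alpha_4}{16} - \alpha_2^2$ when $|A \cap B| = 0$ (four required, independent, edges, each present with probability $\tfrac{1}{2}$).

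Next I would reinterpret this entrywise formula algebraically. Let $B_s \in \{0,1\}^{\binom{n}{2}\times\binom{n}{2}}$ be the indicator matrix of the relation $\{|A \cap B| = 2-s\}$, for $s \in \{0,1,2\}$, so that $B_0 = I$ and $B_0 + B_1 + B_2 = J_{\binom{n}{2}}$. The calculation of the previous paragraph then gives
\[
\E[H_{22}] \;=\; \alpha_2 B_0 \;+\; \tfrac{\alpha_3}{2} B_1 \;+\; \tfrac{\alpha_4}{16} B_2 \;-\; \alpha_2^2\, J_{\binom{n}{2}}.
\]
The matrices $B_0, B_1, B_2$ are the adjacency matrices of the Johnson scheme on $2$-subsets, whose common eigenspaces are precisely the three $S_n$-isotypic subspaces $V_0, V_1, V_2$ listed in the proposition. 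Rather than quote this as a black box, I would verify directly that each $B_s$ acts as a scalar on each $V_j$: on $\ovec$, the action is by row sum, giving eigenvalues $1$, $2(n-2)$, $\binom{n-2}{2}$ for $B_0, B_1, B_2$; on $u_{ij} = x_i + x_j$ with $\sum_i x_i = 0$, a short index-chase (using $\sum_{\ell \neq i,j} x_\ell = -x_i - x_j$) yields $B_1 u = (n-4)\,u$ and $B_2 u = -(n-3)\,u$. Since $B_0, B_1, B_2$ commute pairwise and $V_0 \oplus V_1$ has total dimension $n$, the orthogonal complement $V_2$ must be the remaining common eigenspace, and its eigenvalues follow from $\Tr B_1 = \Tr B_2 = 0$ (equivalently, from $B_0 + B_1 + B_2 = J_{\binom{n}{2}}$), giving $B_1|_{V_2} = -2$ and $B_2|_{V_2} = 1$. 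Finally, $J_{\binom{n}{2}}$ vanishes on $V_1 \oplus V_2$ and acts as $\binom{n}{2}$ on $V_0$.

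Assembling these per-eigenspace eigenvalues into the displayed expression for $\E[H_{22}]$ reproduces exactly the claimed formulas for $\lambda_0$, $\lambda_1$, $\lambda_2$. There is no real obstacle in the argument: the entire computation is routine once one identifies the association-scheme structure, and that identification uses only the invariance of $\mathbb{G}(n,\tfrac{1}{2})$ under vertex permutations.
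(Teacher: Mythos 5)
Your proposal is correct and is essentially the route the paper itself takes: the paper gives no proof of this proposition, citing it as Proposition~4.16 of Deshpande--Montanari and calling it ``an immediate consequence of the theory of association schemes,'' and your computation of the entries of $\E[H_{22}]$ followed by the Johnson-scheme diagonalization is exactly that argument spelled out; the resulting eigenvalue formulas match \prettyref{eq:lambda0}--\prettyref{eq:lambda2}. One step deserves a touch more care: commutativity of $B_0,B_1,B_2$ plus $\dim(V_0\oplus V_1)=n$ does not by itself force $V_2$ to be a \emph{single} common eigenspace, and the trace identities only pin down the \emph{average} eigenvalue on $V_2$; to conclude that $B_1,B_2$ act as scalars there you should either invoke that $\{I,B_1,B_2\}$ spans a $3$-dimensional commutative algebra closed under multiplication (so there are at most three maximal common eigenspaces), or use the strongly-regular relation $B_1^2=2(n-2)I+(n-2)B_1+4B_2$ together with $B_2=J-I-B_1$ to see that $B_1$ satisfies $(t-(n-4))(t+2)=0$ on $\ker J$, after which the trace count does finish the job.
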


\paragraph{Deviation from Expectation}
Given the lower bound on eigenvalues of the expected matrix
$\E[H_{22}]$, the next step would be to bound the spectral norm of the
noise $H_{22} - \E[H_{22}]$.
However, since the eigenspaces of $\E[H_{22}]$ are stratified (for the given $\underline\alpha$), with
one large eigenvalue and several much smaller eigenvalues, standard
matrix concentration does not suffice to give tight bounds.  To overcome this, Deshpande and Montanari split $H_{22}$ and $H_{12}^TH_{11}^{-1}H_{12}$ along the eigenspaces of $\E[H_{22}]$.

More precisely, let us split $H_{22} - \E[H_{22}]$ as
$$H_{22} - \E[H_{22}] = Q + K$$
where $Q$ includes all multilinear entries, and $K$ includes all
non-multilinear entries, i.e., entries $K(A,B)$ where $A \cap B \neq \emptyset$.
Formally,
$$ Q(A,B) = \begin{cases} H_{22}(A,B) - \E[H_{22}](A,B) & \text{ if } A
  \cap B = \emptyset\\ 0 & \text{ otherwise } \end{cases}\mper$$
The spectral norm of the matrix $Q$ over the eigenspaces $V_0,V_1,V_2$
is carefully bounded in \cite{DeshpandeM15}.
\begin{lemma} (Proposition 4.20, 4.25 in \cite{DeshpandeM15}) \label{lem:qkbounds}
With probability at least $1- O(n^{-4})$, all of the following bounds
hold:
\begin{align}
\norm{\Pi_a Q \Pi_b} & \lesssim \alpha_4 \barn^{3/2} \qquad \forall (a,b) \in
                  \{0,1,2\}^2 \label{eq:qabub} \\
\norm{\Pi_2 Q \Pi_2} & \lesssim \alpha_4 \barn  \label{eq:q22ub}\\
\norm{K} & \lesssim \alpha_3 \barn^{1/2} \label{eq:kub}
\end{align}
\end{lemma}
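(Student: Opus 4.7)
The plan is to expand $Q$ in the Fourier basis $\{A_{ij}\}$, decompose shape-by-shape, and apply matrix concentration (matrix Bernstein or the trace method) to each piece separately. Using $\cG_{ij} = \tfrac{1}{2}(1 + A_{ij})$, for disjoint $A, B \in \binom{n}{2}$ we have
\[
    Q(A,B) = \frac{\alpha_4}{16} \sum_{\emptyset \neq S \subseteq A \times B} \prod_{(i,j) \in S} A_{ij},
\]
so $Q = \sum_{S} Q_S$, where the outer sum runs over the $O(1)$ non-empty bipartite edge patterns between two disjoint $2$-sets of vertices. It therefore suffices to bound each $\|Q_S\|$ (and each $\|\Pi_2 Q_S \Pi_2\|$) and sum.

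For the generic bound $\|\Pi_a Q \Pi_b\| \lesssim \alpha_4 \barn^{3/2}$, I would represent each $Q_S$ as an explicit product of structured matrices. For the single-edge shape, $Q_S = Y^\top A Y$, where $A$ is the centered adjacency matrix of $G$ and $Y \in \R^{n \times \binom{n}{2}}$ is the vertex-pair incidence matrix $Y_{i,P} = \Ind(i \in P)$. A direct computation gives $Y Y^\top = (n-2) I + J$, so $\|Y\| = O(\sqrt{n})$, and standard spectral bounds for random sign matrices yield $\|A\| = O(\sqrt{n \log n})$ with high probability; hence $\|Q_S\| \le \|Y\|^2 \|A\| \lesssim n^{3/2}\, \polylog(n)$, i.e. $\lesssim \barn^{3/2}$ up to the $\alpha_4/16$ scaling. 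The $2$, $3$, and $4$ edge shapes admit similar but more involved decompositions, or else can be handled directly by applying the trace method to $\E[\Tr((Q_S Q_S^\top)^k)]$ with $k = \Theta(\log n)$ and counting closed walks.

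For the tighter bound $\|\Pi_2 Q \Pi_2\| \lesssim \alpha_4 \barn$, the key observation is that $V_0 \oplus V_1$ coincides with the column range of $Y^\top$: by \pref{prop:eigenspaces}, vectors in $V_0 \oplus V_1$ are exactly those of the form $(Y^\top x)_P = x_i + x_j$ for $P = \{i,j\}$. Consequently for the single-edge shape, $\Pi_2 Q_S \Pi_2 = \Pi_2 Y^\top A Y \Pi_2 = 0$ identically, since both the row- and column-ranges of $Y^\top A Y$ sit inside the range of $Y^\top$. For the higher-edge shapes I would isolate rank-$O(n)$ ``symmetric'' pieces $R_S$ supported on $V_0 \oplus V_1$ so that the residual $Q_S - R_S$ satisfies $\|Q_S - R_S\| \lesssim n\, \polylog(n)$ by the trace method; the dominant case is the full four-edge shape $\prod_{i \in A, j \in B} A_{ij}$, whose entries are mean-zero with no low-rank pattern, so matrix concentration gives the $n$-bound directly. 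For $\|K\| \lesssim \alpha_3 \barn^{1/2}$, note that $K$ is supported on pairs $A, B$ with $A \cap B \neq \emptyset$, so $|A \cup B| \leq 3$ and each entry is a polynomial in at most $2$ independent edge variables; block-decomposing by the shared vertex reduces $K$ to a bounded collection of matrices that are essentially $n \times n$ random sign matrices (or low-rank perturbations thereof), whose spectral norm is $O(\sqrt{n \log n}) = O(\sqrt{\barn})$ with high probability.

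The main obstacle is the tight $\Pi_2 Q \Pi_2$ bound: although the single-edge shape vanishes under $\Pi_2 (\cdot) \Pi_2$ for transparent reasons, the two-, three-, and four-edge shapes do not factor through $Y$ so cleanly, and identifying the correct symmetric residuals $R_S$ and then bounding $Q_S - R_S$ by the trace method requires a shape-by-shape case analysis. The eigenspace structure from \pref{prop:eigenspaces} is used throughout to certify that the low-rank subtractions indeed absorb all of the top-order $n^{3/2}$ spectral contributions, leaving only the $\barn$-size pieces supported on $V_2$.
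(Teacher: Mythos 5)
First, note that the paper does not actually reprove this lemma: it is imported wholesale from Deshpande--Montanari (their Propositions 4.20 and 4.25), and the appendix merely lists the constituent per-shape bounds (the $J_{\eta,\nu}$ decomposition, the Wigner-type versus $\barn^{3/2}$-type shapes, the $\tilde J$ versus $J$ correction, and the range observations). Your overall plan --- expand $\cG_{ij}=\tfrac12(1+A_{ij})$, decompose $Q$ by bipartite edge patterns between the two $2$-sets, bound each shape by the trace method, and use range arguments to get the improved $\Pi_2 Q\Pi_2$ bound --- is exactly that argument, and your treatment of \pref{eq:qabub} and of the single-edge part of \pref{eq:q22ub} is sound. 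One caveat: $Q$ is the restriction of the shape matrices to multilinear entries ($|A\cup B|=4$), so $\Pi_2(\text{single-edge part})\Pi_2$ is \emph{not} identically zero after that restriction; the restriction correction $J_{1,\nu}-\tilde J_{1,\nu}$ costs an extra $O(\alpha_4\barn)$ (this is \pref{lem:tildes}), which is within budget for \pref{eq:q22ub} but must be accounted for.

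The genuine gap is in \pref{eq:q22ub} for the two-edge shapes, which you flag as ``the main obstacle'' but do not resolve --- and you misidentify where the difficulty sits. The full four-edge shape and the three-edge shapes are already Wigner-like with norm $\lesssim\alpha_4\barn$ unconditionally (\pref{lem:wigner}), so they need no projection argument at all; the same holds for the two ``perfect matching'' two-edge patterns $A_{ac}A_{bd}$ and $A_{ad}A_{bc}$. The shapes that genuinely have norm $\alpha_4\barn^{3/2}$ and must be killed by $\Pi_2$ are the four ``star'' two-edge patterns $A_{ac}A_{bc}$, $A_{ad}A_{bd}$, $A_{ac}A_{ad}$, $A_{bc}A_{bd}$. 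The resolution is a concrete pairing, not a generic low-rank subtraction: $\tilde J'_{2,3}(A,B)+\tilde J'_{2,5}(A,B)=A_{ac}A_{ad}+A_{bc}A_{bd}$ depends on the row index $\{a,b\}$ as a sum of a function of $a$ and a function of $b$, so every column lies in $V_0\oplus V_1$ and $\Pi_2(\tilde J_{2,3}+\tilde J_{2,5})=0$ (\pref{obs:nullspace}); transposing, $(\tilde J_{2,2}+\tilde J_{2,4})\Pi_2=0$. Together with the $J$ versus $\tilde J$ correction this closes \pref{eq:q22ub}. Without this pairing your ``rank-$O(n)$ symmetric residual $R_S$'' has no instantiation, and one must be careful that each star shape is annihilated by $\Pi_2$ on only \emph{one} side: for instance the columns of $\tilde J_{2,2}$ span essentially the subspace $W\subseteq V_2$ that is the central object of the rest of the paper, so $\Pi_2\tilde J_{2,2}$ is large even though $\tilde J_{2,2}\Pi_2$ (after pairing with $\tilde J_{2,4}$) vanishes.
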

\pref{prop:eigenspaces} and \pref{lem:qkbounds} are sufficient to
conclude that $H_{22} \succeq 0$ for parameter choices of $\alpha$ that
correspond to planted clique of size up to $\omega(n^{1/3})$.  More
precisely, to argue that with high probability $H_{22} \succeq 0$, it
is sufficient to argue that, $\E[H_{22}] \succeq \E[H_{22}] -
H_{22}$, i.e.,
\begin{align*}
\begin{bmatrix}
\lambda_0 & 0 & 0\\
0 & \lambda_1 & 0\\
0 & 0 & \lambda_2
\end{bmatrix} \succeq
\begin{bmatrix}
\norm{\Pi_0 Q \Pi_0} & \norm{\Pi_0 Q \Pi_1} & \norm{\Pi_0 Q \Pi_2}\\
\norm{\Pi_1Q \Pi_0} & \norm{\Pi_1 Q \Pi_1} & \norm{\Pi_1 Q \Pi_2}\\
\norm{\Pi_2 Q \Pi_0} & \norm{\Pi_2 Q \Pi_1} & \norm{\Pi_2 Q \Pi_2}
\end{bmatrix} + \alpha_3 \barn^{1/2} \cdot \Id\mper\footnotemark
\end{align*}
\footnotetext{
    Here, we have identified the matrices $\E[H_{22}]$ and $\E[H_{22}] - H_{22}$, which are matrices in $\R^{\binom{n}{\le 2}}$, with the $3 \times 3$ matrices corresponding to diagonalizing $\E[H_{22}]$ according to the three eigenspaces $V_0,V_1,V_2$ of the expectation $\E[H_{22}]$.
    This is analagous to decomposing any quadratic form $v^\top H_{22} v$ into $v^\top (\Pi_0 + \Pi_1 + \Pi_2)H_{22}(\Pi_0 + \Pi_1 + \Pi_2)v$.
}

Deshpande and Montanari fix $\alpha_1 = \kappa$, $\alpha_2 = 4\kappa^2$, $\alpha_3 = 8 \kappa^3$ and $\alpha_4
= 512 \kappa^4$ for a parameter $\kappa$. Using \pref{prop:eigenspaces} and
\pref{lem:qkbounds}, the above matrix inequality becomes,
\begin{align}
\begin{bmatrix}
n^2 \kappa^4 & 0 & 0\\
0 & n \kappa^3 & 0\\
0 & 0 & \kappa^2
\end{bmatrix} \succeq \kappa^4
\begin{bmatrix}
n^{3/2} & n^{3/2} & n^{3/2}\\
n^{3/2} & n^{3/2} & n^{3/2}\\
n^{3/2} & n^{3/2} & n
\end{bmatrix} \mcom \label{eq:dmh22psd}
\end{align}
which can be shown to hold for $\kappa \ll n^{-2/3}$.  Eventually, it is
necessary to show \eqref{eq:comp}, which is stronger than $H_{22}
\succeq 0$.  This is again achieved by showing bounds on the spectra of $H_{11}^{-1}$
 and $H_{12}$.  We refer the reader to \cite{DeshpandeM15} for more
 details of the arguments.

 \subsection{Problematic Subspace}\label{sec:prob-space}

The SDP solution described above ceases to be PSD at $\kappa \simeq n^{-2/3}$
which corresponds to an objective value of $O(n^{1/3})$.  The specific
obstruction to $H_{22} \succeq 0$ arises out of \eqref{eq:dmh22psd}.
More precisely, the bottom $2 \times 2$ principal minor which yields the
constraint,
\begin{align*}
\begin{bmatrix}
\lambda_1 & \norm{\Pi_1 Q \Pi_2}\\
\norm{\Pi_2 Q\Pi_1} & \lambda_2
\end{bmatrix} \approx
\begin{bmatrix}
n \kappa^3 & -n^{3/2} \kappa^4\\
-n^{3/2} \kappa^4 & \kappa^2
\end{bmatrix} \succeq 0
\end{align*}
forcing $\kappa \ll n^{-2/3}$.  It is clear that the problematic vectors $x
\in \R^{\binom{n}{2}}$ for which $x^T H_{22} x < 0$ are precisely those
 for which $x^T \Pi_2 Q \Pi_1 x < 0$ and $|x^T \Pi_2 Q \Pi_1 x|$ is large, i.e.,
 $\Pi_2 x$ aligns with the subspace $Q (V_1 \oplus V_0)$.

In fact, we identify a specific subspace $W$ that is problematic for
the \cite{DeshpandeM15} solution.  To describe the subspace, let us fix some notation. Define the random variable $A_{ij}$ to be $-1$ if $(i,j) \not\in E$, and $+1$ otherwise.
We follow the convention that $A_{ii} = 0$.
\
\begin{lemma} \label{lem:probsubspace}

Let the vectors $a_1,\ldots,a_n \in \R^{\binom{n}{2}}$ be defined so that $
a_{i}(k,\ell) \defeq A_{ik}A_{i\ell}$, and let $W \defeq \Span\{a_1,\ldots,a_n\}$.  Then with
probability at least $1- O(n^{-4})$,
$$ \norm{\Pi_2 Q  - \Pi_2 \Pi_W Q} \lesssim \alpha_4 \barn$$
\end{lemma}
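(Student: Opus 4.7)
Since $\Pi_2 Q - \Pi_2 \Pi_W Q = \Pi_2(I - \Pi_W) Q$, it suffices to bound $\|\Pi_2 (I-\Pi_W) Q\|$. The plan is to expand $Q$ by the monomial shape of each term in $\prod_{(c,a) \in C \times A}(1 + A_{ca})$: for disjoint $C,A$,
\[
Q(C,A) = \frac{\alpha_4}{16}\sum_{\emptyset \neq S \subseteq C \times A} \prod_{(c,a)\in S} A_{ca},
\]
and grouping nonempty $S$ by its bipartite shape in the biclique $K_{C,A}$ yields six classes---single edges, matchings, pairs sharing a $C$-vertex, pairs sharing an $A$-vertex, triples, and the full biclique---giving a decomposition $Q = \frac{\alpha_4}{16}(Q^{(1)} + Q^{(2m)} + Q^{(2C)} + Q^{(2A)} + Q^{(3)} + Q^{(4)})$.

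For each piece I would show either that $\Pi_2$ annihilates it, that its columns lie in $W$, or that its operator norm is already $\lesssim \barn$. Ignoring the multilinearity restriction $\Ind(C \cap A = \emptyset)$, the columns of $Q^{(1)}$ and $Q^{(2C)}$ lie in the range of the embedding $\tilde I \colon \R^n \to \R^{\binom{n}{2}}$ defined by $(\tilde I x)(\{c_1,c_2\}) = x_{c_1}+x_{c_2}$, which by \pref{prop:eigenspaces} is exactly $V_0 \oplus V_1$, so $\Pi_2$ kills these pieces. The unrestricted columns of $Q^{(2A)}$ are $a_{a_1}+a_{a_2} \in W$, so $(I-\Pi_W)$ kills this piece. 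The matching piece $Q^{(2m)}(C,A) = A_{c_1 a_1}A_{c_2 a_2} + A_{c_1 a_2}A_{c_2 a_1}$ is a symmetrized tensor square of the centered adjacency matrix on unordered pairs, hence has operator norm $\lesssim n \ll \barn$ by standard random-matrix bounds. Finally, $Q^{(3)}$ and $Q^{(4)}$ are random $\{0,\pm 1\}$-valued matrices whose rows have squared norm $\binom{n-2}{2}$ and pairwise inner products concentrated at scale $\lesssim \barn$, so a Gram-matrix / trace-moment argument yields $\|Q^{(3)}\|, \|Q^{(4)}\| \lesssim \barn$.

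For the three pieces handled ``up to the multilinearity restriction,'' I then bound the sparse correction $\Delta^{(*)}(C,A) = Q^{(*)}_{\mathrm{unrestr}}(C,A) \cdot \Ind(C \cap A \neq \emptyset)$ for $* \in \{1,2C,2A\}$; each $\Delta^{(*)}$ is supported on $O(n)$ entries per row (those with $|C\cap A|=1$) of bounded magnitude, and $\|\Delta^{(*)}\| \lesssim \barn$ follows from the matrix concentration bounds carried out in \pref{sec:matrix-conc}. Summing over all six pieces and the three corrections, and absorbing the constant factor $\alpha_4/16$, gives the claimed bound $\|\Pi_2(I-\Pi_W)Q\| \lesssim \alpha_4 \barn$.

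The main obstacle will be the matrix-concentration step: the entries of $Q^{(3)}$, $Q^{(4)}$, and the $\Delta^{(*)}$ are degree-up-to-four polynomials in the edge variables $\{A_{ij}\}$, and entries in different rows share edges, so matrix Bernstein does not apply directly. A trace-moment or graph-matrix calculation in the style of \cite{DeshpandeM15, MekaPW15} will be required, and controlling the polylogarithmic factors carefully---so that the final bound is $\barn$ rather than $\barn^{3/2}$, which is what would suffice in the prior work---is the main technical burden.
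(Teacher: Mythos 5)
Your proof is correct and follows essentially the same route as the paper's: your six shape classes are exactly the matrices $J_{\eta,\nu}$ of \pref{def:Js}, your $\Pi_2$-annihilation claims for the single edges and row-sharing pairs are \pref{obs:nullspace}, your observation that the column-sharing pairs have columns $a_{a_1}+a_{a_2}\in W$ is the key step of the paper's proof, and the remaining Wigner-type and multilinearity-correction bounds are \pref{lem:wigner} and \pref{lem:tildes}, which the paper imports from Deshpande--Montanari rather than re-deriving. One small caveat: the Gram-matrix remark for $Q^{(3)},Q^{(4)}$ would by itself only yield $n^{3/2}$, so, as you correctly note at the end, the $\barn$ bound for those pieces genuinely requires the $k=\Theta(\log n)$ trace-moment calculation carried out in \cite{DeshpandeM15}.
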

\begin{proof}
This is an immediate observation from the various matrix norm bounds
in \cite{DeshpandeM15} (specifically \pref{lem:tildes}, \pref{lem:wigner} and \pref{obs:nullspace}).
We defer the detailed proof to \pref{app:misc}.
\end{proof}
Since $\norm{\Pi_2 Q \Pi_1} \gg \alpha_4 \barn$, the above lemma implies
that all the vectors with large singular values for $Q$ are
within the subspace $W$.  Furthermore, we will show the following
lemma which clearly articulates that $W$ is the sole obstruction to $H_{22}
\succeq 0$.

\begin{lemma}\label{lem:calcs}
Suppose $\underline{\alpha} \in \R_+^{4}$ satisfies
\begin{align}
  \min(\lambda_0(\underline{\alpha}), \lambda_1(\underline{\alpha}),
  \lambda_2(\underline{\alpha})) & \gg \alpha_3 \barn^{1/2} \mcom \label{eq:cond1} \\
\lambda_0(\underline{\alpha})  > \lambda_1(\underline{\alpha}) &\gg \alpha_4
  \barn^{3/2} \mcom\label{eq:cond2}\\
  \lambda_2(\underline{\alpha}) & \gg \alpha_4 \barn \label{eq:cond3}
\end{align}
then with probability $1-O(n^{-4})$,
    \[
	 H_{22}
\succeq
\frac{1}{4} \cdot \E[H_{22}]
- \frac{16\norm{Q}^2}{\lambda_1}\cdot \Pi_2 \Pi_W \Pi_2.
    \]
\end{lemma}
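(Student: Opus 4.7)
The plan is to start by decomposing $H_{22} = \E[H_{22}] + Q + K$ and then show that each perturbation can be controlled either by $\E[H_{22}]$ or by the explicit correction term $\frac{16\|Q\|^2}{\lambda_1}\Pi_2\Pi_W\Pi_2$. First, using $\Pi_0 + \Pi_1 + \Pi_2 = I$ and \pref{lem:qkbounds}, the non-multilinear part $K$ satisfies $\|K\| \lesssim \alpha_3\barn^{1/2}$, which by condition \eqref{eq:cond1} is dominated by $\min(\lambda_0,\lambda_1,\lambda_2)$, giving $K \succeq -\tfrac{1}{8}\E[H_{22}]$. Next, I would break $Q = \sum_{a,b} \Pi_a Q \Pi_b$ into the following pieces: the block $(\Pi_0+\Pi_1) Q (\Pi_0+\Pi_1)$ has spectral norm $\lesssim \alpha_4\barn^{3/2}$, which by condition \eqref{eq:cond2} is dominated by $\lambda_1 \le \lambda_0$, so this block $\succeq -\tfrac{1}{16}\E[H_{22}]$. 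The diagonal block $\Pi_2 Q \Pi_2$ has norm $\lesssim \alpha_4\barn$, which by condition \eqref{eq:cond3} is dominated by $\lambda_2$, so $\Pi_2 Q\Pi_2 \succeq -\tfrac{1}{16}\E[H_{22}]$.

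The main obstacle is the cross term $\Pi_2 Q(\Pi_0+\Pi_1) + (\Pi_0+\Pi_1)Q\Pi_2$, whose spectral norm may be as large as $\alpha_4\barn^{3/2}$, too large to absorb directly into $\lambda_2 \Pi_2$. Here I would invoke \pref{lem:probsubspace} to split $\Pi_2 Q = \Pi_2\Pi_W Q + R$ with $\|R\|\lesssim \alpha_4\barn$, giving
\[
\Pi_2 Q(\Pi_0+\Pi_1) + (\Pi_0+\Pi_1)Q\Pi_2 = \bigl[\Pi_2\Pi_W Q(\Pi_0+\Pi_1) + (\Pi_0+\Pi_1)Q\Pi_W\Pi_2\bigr] + \bigl[R(\Pi_0+\Pi_1) + (\Pi_0+\Pi_1)R^T\bigr].
\]
For the main piece, apply the PSD AM--GM inequality $AB + B^T A^T \succeq -\epsilon AA^T - \tfrac{1}{\epsilon}B^T B$ with $A = \Pi_2\Pi_W$, $B = Q(\Pi_0+\Pi_1)$, and $\epsilon = 16\|Q\|^2/\lambda_1$. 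Using $AA^T = \Pi_2\Pi_W\Pi_2$ and $B^T B \preceq \|Q\|^2(\Pi_0+\Pi_1)$, this produces exactly
\[
-\tfrac{16\|Q\|^2}{\lambda_1}\Pi_2\Pi_W\Pi_2 - \tfrac{\lambda_1}{16}(\Pi_0+\Pi_1) \succeq -\tfrac{16\|Q\|^2}{\lambda_1}\Pi_2\Pi_W\Pi_2 - \tfrac{1}{16}\E[H_{22}],
\]
where the last step uses $\lambda_0 > \lambda_1$ from \eqref{eq:cond2}.

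For the error piece involving $R$, apply the same AM--GM with $A = R$, $B = \Pi_0+\Pi_1$, and an intermediate $\epsilon$ (e.g.\ of the form $\barn^{1/4}/\lambda_1$) chosen so that both $\epsilon\|R\|^2 \ll \lambda_2$ and $\tfrac{1}{\epsilon} \ll \lambda_1$. This is possible because conditions \eqref{eq:cond2}--\eqref{eq:cond3} give $\lambda_1\lambda_2 \gg (\alpha_4\barn)^{2}\gtrsim \|R\|^2$, so the piece is absorbed as $\succeq -\tfrac{1}{16}\E[H_{22}]$. Summing: the four $-\tfrac{1}{16}\E[H_{22}]$ contributions and the $-\tfrac{1}{8}\E[H_{22}]$ contribution from $K$ total $-\tfrac{3}{8}\E[H_{22}]$, yielding
\[
H_{22} + \tfrac{16\|Q\|^2}{\lambda_1}\Pi_2\Pi_W\Pi_2 \succeq \tfrac{5}{8}\E[H_{22}] \succeq \tfrac{1}{4}\E[H_{22}],
\]
which is the desired conclusion. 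The heart of the argument is the step using \pref{lem:probsubspace} to identify $\Pi_2\Pi_W\Pi_2$ as the correct direction to charge the otherwise-too-large cross term; everything else is careful bookkeeping.
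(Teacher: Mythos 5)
Your proposal is correct and follows essentially the same route as the paper: both decompose $H_{22}$ into $K$, the diagonal $Q$-blocks, and the cross terms, use \pref{lem:probsubspace} to isolate the $W$-aligned part of $\Pi_2 Q(\Pi_0+\Pi_1)$ as the piece to be charged against $\frac{16\norm{Q}^2}{\lambda_1}\Pi_2\Pi_W\Pi_2$, and absorb everything else into a constant fraction of $\E[H_{22}]$ via conditions \prettyref{eq:cond1}--\prettyref{eq:cond3}. The only cosmetic difference is that you invoke the operator AM--GM inequality where the paper packages the same estimates as a scalar quadratic-form argument (for the $W$-aligned block) and $2\times2$ principal minors (for the residual), with a slightly different allocation of the $\E[H_{22}]$ budget.
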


\begin{proof}  Fix $\theta = \frac{16 \norm{Q}^2}{\lambda_1}$.
    Recall that $H_{22} - \E[H_{22}] = Q + K$.
We can write the matrix $$H_{22}+ \theta \cdot \Pi_2
\Pi_W \Pi_2  = B_{W^{\perp}} + B_W + B_K + \frac{1}{4} \E[H_{22}]\mcom$$
where
\begin{align*}
  B_{W^{\perp}} = \frac{1}{4} \E[H_{22}] + \begin{bmatrix}
 \Pi_0 Q \Pi_0 & \Pi_0 Q \Pi_1 & \Pi_0 Q (I-\Pi_W)\Pi_2\\
\Pi_1Q \Pi_0  & \Pi_1 Q \Pi_1 & \Pi_1 Q (I-\Pi_W) \Pi_2\\
\Pi_2 (I-\Pi_W) Q \Pi_0 & \Pi_2 (I-\Pi_W) Q \Pi_1 & \Pi_2 Q \Pi_2
\end{bmatrix}\footnotemark
\end{align*}
\footnotetext{Here again we diagonalize according to the subspaces $V_0,V_1,V_2$, as in \pref{eq:dmh22psd}}
and
\begin{align*}
 B_W = \frac{1}{4} \E[H_{22}] + \begin{bmatrix}
 0 & 0 & \Pi_0 Q \Pi_W \Pi_2\\
0  & 0 & \Pi_1 Q \Pi_W \Pi_2 \\
\Pi_2 \Pi_W Q \Pi_0 & \Pi_2 \Pi_W Q \Pi_1 & \theta \cdot \Pi_2 \Pi_W \Pi_2
\end{bmatrix}
\end{align*}
and $B_K = K + \frac{1}{4} \E[H_{22}]$.

It is sufficient to show that $B_{W^{\perp}}, B_W$ and $B_K \succeq
0$.  Using \pref{prop:eigenspaces} and \eqref{eq:kub}, $B_K \succeq
 (\lambda_0-\alpha_3 \barn^{1/2}) \Pi_0 + (\lambda_1 - \alpha_3 \barn^{1/2}) \Pi_1
 + (\lambda_2 - \alpha_3 \barn^{1/2})\Pi_2 \succeq 0$ when condidition \pref{eq:cond1} holds.
Using \pref{prop:eigenspaces}, \pref{lem:qkbounds} and
\pref{lem:probsubspace} we can write,

\begin{align*}
 B_{W^{\perp}} \succeq \frac{1}{4} \cdot \begin{bmatrix}
\lambda_0 & 0 & 0\\
0 & \lambda_1 & 0\\
0 & 0 & \lambda_2
\end{bmatrix} - \alpha_4 \cdot \begin{bmatrix}
\barn^{3/2} & \barn^{3/2} & \barn\\
\barn^{3/2} & \barn^{3/2} & \barn\\
\barn & \barn & \barn
\end{bmatrix}
\end{align*}
which is PSD given the bounds on $\lambda_1, \lambda_2, \lambda_3$ in conditions \pref{eq:cond2} and \pref{eq:cond3}.  To see this, one
shows that all the $2 \times 2$ principal minors are PSD.

On the other hand, for any  $x \in \R^{\binom{n}{2}}$, we can write
\begin{align*}
x^T B_W x & \geq  \lambda_0 \norm{\Pi_0 x}^2 + \frac{\theta}{2}\norm{\Pi_W \Pi_2 x}^2 - 2\norm{Q} \norm{\Pi_W \Pi_2 x} \norm{\Pi_0 x} \\
& + \lambda_1 \norm{\Pi_1x}^2 + \frac{\theta}{2} \norm{\Pi_W \Pi_2
  x}^2 -  2\norm{Q}\norm{\Pi_W \Pi_2x} \norm{\Pi_1x}
\end{align*}
Now we will appeal to the fact that a quadratic $r(p,q) = ap^2 + 2bpq + cq^2 \geq 0$ for all $p, q \in \R$ if $b^2 <
4ac$ and $a > 0$.  Since $\theta \lambda_1, \theta\lambda_0 \geq 16\norm{Q}^2$ by condition \pref{eq:cond2}, it is easily seen
that the above quadratic form is always non-negative, implying that
$B_W \succeq 0$.
\end{proof}
An immediate corollary of the proof of the above lemma is the
following.
\begin{corollary} \label{corr:calcs}
Under the hypothesis of \pref{lem:calcs}, with probability $1-O(n^{-4})$,
    \[
	 H_{22} - K
\succeq
\frac{1}{2} \cdot \E[H_{22}]
- \frac{16\norm{Q}^2}{\lambda_1}\cdot \Pi_2 \Pi_W \Pi_2.
    \]
\end{corollary}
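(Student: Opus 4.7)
The plan is to reuse the matrix decomposition constructed in the proof of \pref{lem:calcs} and observe that the only piece of spectral mass that was spent on dominating $K$ can now be reclaimed. Recall from that proof the identity
\[
  H_{22} + \theta\,\Pi_2 \Pi_W \Pi_2 \;=\; B_{W^{\perp}} + B_W + B_K + \tfrac{1}{4}\,\E[H_{22}],
\]
with $\theta = 16\norm{Q}^2/\lambda_1$ and $B_K = K + \tfrac{1}{4}\E[H_{22}]$. In the proof of \pref{lem:calcs}, $B_{W^{\perp}} \succeq 0$ and $B_W \succeq 0$ are established using only \pref{prop:eigenspaces}, \pref{lem:qkbounds}, \pref{lem:probsubspace}, and conditions~\eqref{eq:cond2} and~\eqref{eq:cond3}, while condition~\eqref{eq:cond1} is invoked \emph{only} to argue $B_K \succeq 0$ (so as to absorb the non-multilinear noise $K$ against a piece of $\E[H_{22}]$).

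Since the corollary targets $H_{22} - K$ rather than $H_{22}$, the noise $K$ no longer has to be dominated. First I would move $K$ to the left-hand side of the identity above and absorb $\tfrac{1}{4}\E[H_{22}]$ from $B_K$ into the existing $\tfrac{1}{4}\E[H_{22}]$ summand, obtaining
\[
  H_{22} - K + \theta\,\Pi_2 \Pi_W \Pi_2 \;=\; B_{W^{\perp}} + B_W + \tfrac{1}{2}\,\E[H_{22}].
\]
Second, I would invoke the PSDness of $B_{W^{\perp}}$ and $B_W$ (still valid, as they only use conditions~\eqref{eq:cond2}--\eqref{eq:cond3}, which hold by the hypothesis of \pref{lem:calcs}), to drop those two terms and conclude that
\[
  H_{22} - K + \theta\,\Pi_2 \Pi_W \Pi_2 \;\succeq\; \tfrac{1}{2}\,\E[H_{22}].
\]
Rearranging and substituting $\theta = 16\norm{Q}^2/\lambda_1$ yields the claim.

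There is no genuine obstacle in this proof: all of the matrix-analytic work was carried out in \pref{lem:calcs}, and the corollary is a bookkeeping observation that the $\tfrac{1}{4}\E[H_{22}]$ previously spent on controlling $K$ is recovered for free when $K$ is removed from the target.
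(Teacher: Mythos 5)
Your proof is correct and is essentially identical to the paper's: the authors likewise observe that $H_{22} - K + \theta\,\Pi_2\Pi_W\Pi_2 = B_{W^\perp} + B_W + \tfrac{1}{2}\E[H_{22}]$ and drop the two PSD terms $B_{W^\perp}, B_W$ established in \pref{lem:calcs}. No further comment is needed.
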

The above corollary is a consequence of the fact that $H_{22} - K =
B_W+ B_{W^{\perp}} + \frac{1}{2} \E[H_{22}]$.

\subsection{The Corrected Witness}

Suppose we have an unconstrained matrix $M$ that we wish to modify as little as possible so as
to ensure $M \succeq 0$.  Given a test vector $w$ so that $w^T M w < 0$, the
natural update to make is to take $M' = M + \beta \cdot ww^T$ for a
suitably chosen $\beta$.  This would suggest creating a new SDP
solution by setting $H_{22}' = H_{22} + \beta \sum_{i \in [n]} a_i a_i^T $.

Unfortunately, the SOS SDP relaxation has certain hard constraints,
namely that the non-clique entries are fixed at zero.  Moreover,
the entry $X_{S_1,S_2}$ must depend only on $S_1 \cup S_2$. Setting the SDP solution matrix to $H_{22} +
\beta \sum_{i \in [n]} a_i a_i^T$ would almost certainly violate both these constraints.
It is thus natural to consider multiplicative updates to the entries of the
matrix which clearly preserve the zero entries of the matrix.

Specifically, the idea would be to consider an update of the form $M' = M+ \beta D_w M D_w$ where
$D_w$ is the diagonal matrix with entries given by the vector $w$.  If
the matrix $M$ has a significantly large eigenvalue along $\ovec$,
i.e., $M \succeq \lambda_0 \cdot \novec\novec^\top + \Delta$, for some matrix $\Delta$ with $\|\Delta\| \ll \lambda_1$, then this multiplicative update has a
similar effect as an additive update,
$$M' \succeq M + \beta \cdot \lambda_0 \cdot w w^T  + \beta D_w \Delta D_w\mcom$$
where the norm of the final ``error'' term $\beta D_w \Delta D_w$ is relatively small.
Recall that, in our setting, the Deshpande Montanari SDP solution matrix $N$ does
have a large eigenvalue along $\novec$.  We now formally describe our SDP solution, first as a matrix according to the intuition given above, and then as a set of pseudomoments.
\begin{definition}[Corrected SDP Witness, matrix view]\label{def:sol}
Let $\hat a_1,\ldots,\hat a_n \in \R^{\binom{n}{\le 2}}$ be defined so that
\[
    \hat a_{i}(A) = \begin{cases}
	0 & |A| < 2\\
	a_{i}(A) & |A| = 2.
    \end{cases}
\]
Define $\hat D_{i} \in \R^{\binom{n}{\le 2}}$ to be the diagonal matrix with $\hat a_i$ on the diagonal.
Define $\hat K$ to be the restriction of $N(G,\underline{\alpha})$ to the non-multilinear entries.
Also let
\[
    N'(G,\underline{\alpha}) = N(G,\underline{\alpha}) + \beta \cdot \sum_{i\in[n]} \hat D_{i} \left(N(G,\underline{\alpha}) - \hat K\right) \hat D_i,
\]
where $\beta = \frac{1}{100 \sqrt{n}\log n}$.
Then our \emph{SDP witness} is the matrix $M'$, defined so that
\begin{align*}
M'(G,\underline\alpha) = \cP\bigg(N'(G,\underline\alpha)\bigg),
\end{align*}
where $\cP$ is the projection that zeros out rows and columns corresponding to pairs $(i,j) \not\in E$.
\end{definition}
\begin{definition}[Corrected SDP Witness, pseudomoments view]\label{def:sol-pE}
    Let $\beta = \frac{1}{100\sqrt{n}\log n}$, and let $\underline{\alpha} \in \R_+^4$ be a set of parameters, to be fixed later.
    For a subset $S \subseteq [n]$, let $G[S]$ be the graph induced on $G$ by $S$.
For any subset of at most $4$ vertices $S \subset [n]$, $|S| \le 4$, we define
\begin{equation*}
    \pE[X_{S}](G,\underline{\alpha}) =
	\begin{cases}
	    c_4(\underline{\alpha})\cdot \binom{\kappa}{4}\big/\binom{n}{4} + \beta \sum_{v \in [n]} (-1)^{\text{\#\{edges from $v$ to $S$ in $G$\}}}
	    & |S| =4 \text{ and } G[S] =\text{clique},\\
	    c_{|S|}(\underline{\alpha}) \cdot \binom{\kappa}{|S|}\big/\binom{n}{|S|} & |S| \le 3 \text{ and } G[S] =\text{clique},\\
	    0 & \text{otherwise}\mcom
	\end{cases}
\end{equation*}
where $c_{|S|}(\underline{\alpha}) $ is some factor chosen for each $|S| \in \{0,\ldots,4\}$ depending on the choice of $\underline{\alpha}$, which we will set later to ensure that the final moments matrix is PSD.
\end{definition}

\begin{proposition}\label{prop:lin-const}
    For $\beta = \frac{1}{100 \sqrt{n} \log n}$, and $\alpha_4 \le \tfrac{1}{2}$, with probability at least $1-
  O(n^{-5})$, the solution $N'(G,\underline{\alpha})$ does not violate any of the linear constraints of the planted clique SDP.
\end{proposition}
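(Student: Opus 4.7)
The plan is to exploit the highly localized structure of the correction term. Since $\hat a_i(A) = 0$ whenever $|A| < 2$ and $\hat K$ agrees with $N$ on every non-multilinear entry, the perturbation $\beta \sum_i \hat D_i (N - \hat K) \hat D_i$ vanishes everywhere except on blocks $(A,B)$ with $|A| = |B| = 2$ and $A \cap B = \emptyset$. Every linear constraint that touches only the remaining (unperturbed) entries is inherited directly from the Deshpande--Montanari witness $N$, so the argument should reduce to checking three things for the multilinear degree-$4$ entries: (i) consistency across the three bipartitions of a $4$-set, (ii) the zero pattern on non-cliques, and (iii) the entrywise $[0,1]$ bound.

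For (i), I would compute
\[
    \bigl(\hat D_i (N - \hat K) \hat D_i\bigr)_{\{a,b\},\{c,d\}}
    = A_{ia}A_{ib}A_{ic}A_{id} \cdot N_{\{a,b\},\{c,d\}},
\]
and observe that when $\{a,b,c,d\}$ is a clique we have $N_{A,B} = \alpha_4$ for every bipartition, while the four-fold product is manifestly symmetric in $a,b,c,d$. Hence the three entries $M'_{\{a,b\},\{c,d\}}$, $M'_{\{a,c\},\{b,d\}}$, $M'_{\{a,d\},\{b,c\}}$ coincide. For (ii), if $\{a,b,c,d\}$ is not a clique then in each bipartition either a missing edge lies inside one of the two pairs (so $\cP$ kills the corresponding row or column) or it is a cross-edge (so the edge-indicator product built into $N_{A,B}$ itself vanishes), giving zero either way.

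The quantitative step (iii) is the main technical piece, but it should reduce to a clean Hoeffding argument. For a fixed clique $\{a,b,c,d\}$ the perturbed entry equals
\[
    \alpha_4\Paren{1 + \beta \cdot \tsum_i A_{ia}A_{ib}A_{ic}A_{id}}.
\]
The four terms with $i \in \{a,b,c,d\}$ vanish because $A_{ii} = 0$, and the remaining $n-4$ terms are iid uniform $\pm 1$ variables (each is a product of four independent edge signs, and across distinct $i$'s they involve disjoint edges; note also that the sum is independent of the event that $\{a,b,c,d\}$ is a clique). Hoeffding then bounds the sum by $C\sqrt{n\log n}$ with failure probability at most $n^{-9}$ for a suitable absolute constant $C$, and a union bound over the at most $n^4$ choices of $\{a,b,c,d\}$ yields the claimed $O(n^{-5})$ failure probability. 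On the good event the perturbation has magnitude $O(\alpha_4/\sqrt{\log n})$, so for $\alpha_4 \le \tfrac12$ every multilinear degree-$4$ entry of $M'$ lies safely inside $[0,1]$.

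I expect the main obstacle to be bookkeeping rather than depth: one must confirm that $\cP$ together with the built-in edge indicators of $N$ handles every non-clique configuration in every bipartition simultaneously, so that consistency is preserved automatically off of the clique case. The constraint $X_{\emptyset,\emptyset} = 1$ and every lower-degree linear constraint hold by direct inheritance from $N$, since the perturbation never touches those entries.
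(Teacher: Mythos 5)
Your proposal is correct and follows essentially the same route as the paper's proof: the lower-degree entries are untouched, the degree-$4$ entry reduces to $\alpha_4\cdot\Ind[\text{clique}]\cdot(1+\beta\sum_i A_{ia}A_{ib}A_{ic}A_{id})$, which manifestly depends only on the $4$-set and vanishes off cliques, and the $[0,1]$ bound follows from Hoeffding plus a union bound over the $O(n^4)$ subsets. Your items (i)--(ii) simply spell out in more detail what the paper states in one line, so there is no substantive difference.
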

\begin{proof}
  First, $M'(S_1,S_2) = M(S_1,S_2)$ whenever $\vert S_1 \cup S_2 \vert < 4$ so these
  entries satisfy the constraints of the SDP.
  If $\card{S_1 \cup S_2} = 4$ then $M'(S_1,S_2)$ is given by,

$$ M'(S_1,S_2) = \alpha_4 \cdot \Ind[S_1 \cup S_2 \text{ is a clique}] \cdot \left( 1+ \beta \sum_{i \in
    [n]} \prod_{j \in S_1 \cup S_2} A_{ij} \right) \mper$$

Notice that $M'(S_1,S_2)$ is non-zero only if $S_1 \cup S_2$ is a clique, and
it depends only on $S_1 \cup S_2$.  Moreover, $\sum_{i \in [n]} \prod_{j \in
  S_1 \cup S_2} A_{ij}$ is a sum over iid mean $0$ random variables and
therefore satisfies,
$$ \Pr\left\{ \Abs{\sum_{i \in [n]} \prod_{j \in S_1 \cup S_2} A_{ij}}
    \leq 100 \sqrt{n} \log{n} \right\} \geq 1- O(n^{-10}) \mper$$
A simple union bound over all subsets $S_1 \cup S_2 \in \binom{n}{4}$ shows
that $M'(S_1,S_2) \in [0,1]$ for all of them with probability at least $1-
O(n^{-5})$.
\end{proof}

It now remains to verify that $N'(G,\underline{\alpha}) \succeq 0$.
We will do this by verifying the Schur complement conditions, as in \cite{DeshpandeM15}.
Analogous to the submatrix $H_{22}$, one can consider the corresponding submatrix
$H'_{22}$ of $N'$.  The expression for $H'$ is as follows:
\[
    H_{22}' \defeq H_{22} + \sum_{i\in[n]} D_i (H_{22} +\tfrac{1}{16} \alpha_2^2J_{\binom{n}{2}} -K) D_i,
\]
Here $D_i$ is the matrix with $a_i$ on the diagonal, and $K$ is the
matrix corresponding to the non-multilinear entries (entries
corresponding to monomials like $x_a^2 x_b x_c$), and $J_{\binom{n}{2}}$ is the all-$1$s
matrix.
The matrices $H_{12}$ and $H_{11}$ are unchanged, and so we must simply verify that $H_{22}' \succeq H_{12}^\top H_{11}^{-1} H_{12}$ and that $H_{22}'\succeq 0$.

This concludes our proof overview. In \pref{sec:mainproof}, we verify the Schur complement conditions and prove our main result, and in \pref{sec:matrix-conc} we give the random matrix concentration results upon which we rely throughout the proof.

\ifnum\full=1

\section{Proof of the Main Result} \label{sec:mainproof}
In this section, we will demonstrate that $H'_{22} \succeq 0$, and that $H'_{22} \succeq H_{12}^\top H_{11}^{-1} H_{12}$.
This will allow us to conclude that our solution matrix is PSD, and therefore is a feasible point for the degree-4 SOS relaxation.

\paragraph{Parameters}
Before we proceed further, it will be convenient to parametrize
the choice of $\alpha_1,\alpha_2,\alpha_3$ and $\alpha_4$.  In particular, it will be useful to
fix,
\begin{align}
\alpha_1 \defeq  \frac{\rho}{n^{1/2}} \qquad  \alpha_2 \defeq \frac{\gamma \rho^2}{n} \qquad
\alpha_3 \defeq     \frac{\gamma^3 \rho^3}{n^{3/2}}  \qquad \alpha_4 \defeq \frac{\gamma^6
\rho^4}{n^2} \mper \label{eq:alpha-choice}
\end{align}
for two parameters $\gamma,\rho$, which we will finally fix to $\gamma = \log^{4} n$ and $\rho = \log^{-20}n$.  For
this setting of parameters,  the eigenvalues $\lambda_0,\lambda_1,\lambda_2$ from \pref{prop:eigenspaces} are bounded by,
\begin{align}
\lambda_0 \geq \frac{\alpha_4 n^2}{64} = \frac{\gamma^6 \rho^4}{64} \qquad
\lambda_1 \geq \frac{\alpha_3 n}{4} = \frac{\gamma^3 \rho^3}{4 n^{1/2}} \qquad
\lambda_2 \geq \frac{\alpha_2}{2} =  \frac{\gamma \rho^2}{2n} \label{eq:lambdas}
\end{align}
When convenient, we will also use the shorthand $c_1 \defeq n^{1/2} \alpha_1$, $c_2 \defeq n \alpha_2$, $c_3 \defeq n^{3/2}\alpha_3$, and $c_4 \defeq n^2\alpha_4$.

\subsection{Proving that $H'_{22} \succeq 0$}

Here we will make a first step towards verifying the Schur complement conditions of \pref{lem:schur} by showing that $H'_{22} \succeq 0$.
Specifically, we will show the following stronger claim.

\begin{theorem}\label{thm:psd}
For $\beta = \frac{1}{100 \sqrt{n} \log{n}}$ and $\gamma = \log^{4} {n}$, $\rho <
\log^{-20}n$, the following holds with probability at least $1- O(n^{-4})$,

$$ H'_{22} \succeq \frac{1}{8} \E[H_{22}] + \frac{\beta  \lambda_0}{16} \cdot \Pi_W $$

\end{theorem}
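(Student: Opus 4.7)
The plan is to decompose $H'_{22} = H_{22} + \beta \sum_{i \in [n]} D_i M_0 D_i$ where $M_0 \defeq H_{22} + \tfrac{1}{16}\alpha_2^2 J - K$, and bound each piece separately using the spectral estimates of \pref{sec:prob-space} together with matrix concentration results proved in \pref{sec:matrix-conc}.

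The first step is to apply \pref{corr:calcs} to get $H_{22} - K \succeq \tfrac{1}{2}\E[H_{22}] - \tfrac{16\|Q\|^2}{\lambda_1}\Pi_2 \Pi_W \Pi_2$; the hypotheses \eqref{eq:cond1}--\eqref{eq:cond3} are easily verified from the parameter choices \eqref{eq:alpha-choice} and the eigenvalue bounds \eqref{eq:lambdas}. Adding the PSD correction $\tfrac{1}{16}\alpha_2^2 J$, which is a positive multiple of $\Pi_0$, only strengthens the $V_0$ component, and yields
\[
    M_0 \succeq \tfrac{\lambda_0}{2}\Pi_0 + \tfrac{\lambda_1}{2}\Pi_1 + \tfrac{\lambda_2}{2}\Pi_2 - \tfrac{16\|Q\|^2}{\lambda_1}\Pi_2 \Pi_W \Pi_2\mper
\]

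The next step is to conjugate this inequality by $D_i$ and sum over $i$. The dominant contribution is the $V_0$ part: since $\Pi_0 = \novec \novec^\top$ and $D_i \novec = a_i/\sqrt{\binom{n}{2}}$, we have $\sum_i D_i \Pi_0 D_i = \tfrac{1}{\binom{n}{2}} \sum_i a_i a_i^\top$, and a Gram-matrix concentration bound on $(\langle a_i, a_j\rangle)_{i,j\in[n]}$ from \pref{sec:matrix-conc} will imply $\sum_i a_i a_i^\top \succeq \tfrac{1}{2}\binom{n-1}{2}\Pi_W$ with high probability. After multiplying by $\tfrac{\beta \lambda_0}{2}$ this contributes $\approx \tfrac{\beta\lambda_0}{4}\Pi_W$, comfortably larger than the target $\tfrac{\beta\lambda_0}{16}\Pi_W$. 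The remaining conjugated blocks $\sum_i D_i \Pi_1 D_i$, $\sum_i D_i \Pi_2 D_i$, and $\sum_i D_i \Pi_2 \Pi_W \Pi_2 D_i$ will be shown in \pref{sec:matrix-conc} to have small enough spectral norm that, after multiplying by $\beta$ together with the coefficients $\lambda_1/2$, $\lambda_2/2$, and $16\|Q\|^2/\lambda_1$, they are absorbed by the slack $\tfrac{1}{8}\E[H_{22}]$ that we keep in reserve.

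Finally, for the leading $H_{22}$ summand we apply \pref{lem:calcs} directly to get $H_{22} \succeq \tfrac{1}{4}\E[H_{22}] - \tfrac{16\|Q\|^2}{\lambda_1}\Pi_2\Pi_W\Pi_2$; the subtracted piece is dominated by the $V_2$ part of $\tfrac{1}{8}\E[H_{22}]$ using $\lambda_2 \gg \|Q\|^2/\lambda_1$, which follows from \eqref{eq:lambdas} together with the bound $\|Q\|^2 \lesssim \alpha_4^2 \barn^3$ implicit in \pref{lem:qkbounds}. Combining these with the contribution of $\beta \sum_i D_i M_0 D_i$ delivers the claimed inequality.

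The main obstacle will be the matrix concentration estimates deferred to \pref{sec:matrix-conc}: both the Gram-matrix bound for $\sum_i a_i a_i^\top$ and the spectral-norm bounds on the error sums $\sum_i D_i \Pi_a D_i$ and $\sum_i D_i \Pi_2 \Pi_W \Pi_2 D_i$. The difficulty is that the rank-one summands $D_i X D_i$ are strongly correlated through shared edges of $G$ (each $A_{ij}$ appears in both $a_i$ and $a_j$), so a black-box matrix Bernstein bound is too lossy; one must instead exploit the bilinear $\pm 1$ structure $a_i(k,\ell) = A_{ik}A_{i\ell}$ and use trace/moment methods to shave the extraneous factors down to the $\polylog n$ budget permitted by $\beta = 1/(100\sqrt{n}\log n)$.
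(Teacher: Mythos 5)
Your decomposition and the opening steps (applying \pref{lem:calcs} to the bare $H_{22}$ summand, \pref{corr:calcs} to $H_{22}-K$, extracting $\sum_i D_i\Pi_0 D_i = P_W/\binom{n}{2}$, and invoking the Gram-matrix concentration for $P_W$) match the paper's proof, as do the deferred concentration estimates you identify (\pref{lem:w-proj}, \pref{lem:01w}, \pref{lem:W2}). But your final step has a genuine gap: you propose to dominate the subtracted term $\tfrac{16\norm{Q}^2}{\lambda_1}\Pi_2\Pi_W\Pi_2$ coming from $H_{22}$ by the $V_2$ part of $\tfrac{1}{8}\E[H_{22}]$, on the strength of ``$\lambda_2 \gg \norm{Q}^2/\lambda_1$.'' That inequality is false for these parameters. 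With $\norm{Q}\lesssim \alpha_4\barn^{3/2}$ and $\lambda_1\gtrsim \alpha_3 n$ one gets $\theta \defeq \norm{Q}^2/\lambda_1 \lesssim \gamma^9\rho^5\log^3 n\cdot n^{-1/2}$, while $\lambda_2 \approx \gamma\rho^2/n$, so $\theta/\lambda_2 \approx \gamma^8\rho^3\log^3 n\cdot n^{1/2} = n^{1/2}/\polylog n \gg 1$ (and this upper bound on $\norm{Q}$ is essentially tight, per the discussion in \pref{sec:prob-space}). Indeed, if $\lambda_2\gg\theta$ held, the uncorrected Deshpande--Montanari witness would already satisfy $H_{22}\succeq 0$ at objective value $\tilde\Omega(\sqrt n)$ and the entire correction would be pointless; the failure of $\lambda_2\Pi_2$ to absorb $\theta\Pi_2\Pi_W\Pi_2$ is exactly the obstruction the theorem is designed to overcome.

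The fix --- and the heart of the argument --- is that this term must be absorbed by the new spectral mass on $W$ produced by the correction itself. Using \pref{lem:01w} to replace $\Pi_2\Pi_W\Pi_2$ by $\Pi_W + O(\log^2 n/n)\cdot\Id$, one absorbs $\theta\Pi_W$ into the surplus $\approx\tfrac{\beta\lambda_0}{8}\Pi_W$ that you correctly computed from the $\Pi_0$ block of the conjugated sum; the relevant check is $\theta/(\beta\lambda_0)\lesssim \gamma^3\rho\log^4 n\ll 1$, which does hold, and the residual $O(\theta\log^2 n/n)\cdot\Id$ goes into $\lambda_2\Pi_2$. Two smaller remarks: the conjugated $\Pi_1$, $\Pi_2$, and $J$ blocks need no spectral bounds at all --- they are PSD and carry positive coefficients, so they are simply dropped; the only conjugated matrix requiring an upper bound is $\sum_i D_i\Pi_2\Pi_W\Pi_2 D_i$, handled by \pref{lem:W2}, whose $O(n)\Pi_0 + \log^2 n\cdot\Id$ bound is then absorbed using $\beta\theta n/\lambda_0 \ll 1$ and $\beta\theta\log^2 n/\lambda_2\ll 1$.
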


\begin{proof}
Fix $\theta = \frac{\norm{Q}^2}{\lambda_1}$.
By definition of $H'_{22}$, we have
    \begin{align*}
	H'_{22}
	&=  H_{22}  + \beta \cdot \sum_{i\in[n]}  D_i (H_{22} +\tfrac{1}{16} \alpha_2^2 J- K)D_i .
    \end{align*}
   Define $P_W = \sum_{i\in[n]} a_i a_i^T$.
    We can apply \pref{lem:calcs} to the $H_{22}$ term and
    \pref{corr:calcs} for $H_{22} - K$,
    \begin{align*}
	H'_{22}
	&\succeq  \frac{1}{4} \E[H_{22}] - \theta\cdot \Pi_2 \Pi_W
          \Pi_2  + \beta \cdot \sum_{i\in[n]}  D_i \left(\frac{1}{2}\E[H_{22}] -\theta
          \Pi_2 \Pi_W \Pi_2 + \frac{1}{16} \alpha_2^2 J\right)D_i .\\
	&\succeq  \frac{1}{4} \E[H_{22}] - \theta \cdot \Pi_2 \Pi_W
          \Pi_2  + \beta \cdot \sum_{i\in[n]}  D_i \left(\frac{\lambda_0}{2} \Pi_0 -\theta
          \Pi_2 \Pi_W \Pi_2 \right)D_i . \qquad(\text{dropping } \Pi_1,\Pi_2, J)\\
      & \succeq    \frac{1}{4} \E[H_{22}]  -  \theta \cdot \Pi_2 \Pi_W
          \Pi_2 + \beta \frac{\lambda_0}{4n^2} P_W  - \beta \theta \sum_{i \in [n]} D_i \Pi_2 \Pi_W \Pi_2 D_i
        \qquad (\text{using} D_i \Pi_0 D_i = a_i a_i^T / \binom{n}{2} )
    \end{align*}
Now we will appeal to a few matrix concentration bounds that we show in \pref{sec:matrix-conc}.
First, with probability $1- O(n^{-5})$, the vectors $\{a_{i}^{\otimes
  2}\}$ are nearly orthogonal, and therefore form a well-conditioned basis for the subspace $W$.
$$ P_W \succeq \frac{n^2}{2}  \cdot \Pi_W.  \qquad (\text{see \pref{lem:w-proj}}) \mper $$
Also, the vectors $\{a_i^{\otimes 2}\}$ have negligible projection on to the
eigenspaces $V_0, V_1$ which implies that with overwhelming probability,
$$\left\|\Pi_2 \Pi_W \Pi_2 - \Pi_W \right\| \le \frac{\log^2 n}{n}    \cdot \Id \mcom \qquad (\text{see \pref{lem:01w}}) \mper$$
Finally, $W$ is an $n$ dimensional space.  Each $D_i \Pi_2 \Pi_W
\Pi_2 D_i$ has only $n$ non-zero singular values each of which is $O(1)$.
Moreover, multiplying on the left and right by $D_i$ acts as a random
linear transformation/ random change of basis.  Intuitively, this
suggests that $\sum_{i }D_i \Pi_2 \Pi_W \Pi_2 D_i$ has $n^2$ eigenvalues
all of which are roughly $O(1)$.  In fact,
with probability $1-O(n^{-5})$,
$$ \sum_i D_i \Pi_2 \Pi_W \Pi_2 D_i \preceq O(n) \cdot \Pi_0 + \log^2{n} \cdot \Id \qquad (\text{see \pref{lem:W2}})$$
Substituting these bounds we get,
\begin{align*}
H' \succeq \frac{1}{4} \E[H_{22}] + \left( \frac{\beta  \lambda_0}{8} -
  \theta \right) \cdot \Pi_W - \left(  \frac{\theta \log^2 n}{n} + \beta \theta \log^2 n
  \right) \cdot \Id - \beta \theta \cdot O(n) \cdot \Pi_0
\end{align*}
By \pref{lem:qkbounds}, with probability at least $1-O(n^{-4})$,
$\norm{Q} \lesssim \alpha_4 \barn^{3/2}$.  Substituting this bound
for $\theta = \frac{\norm{Q}^2}{\lambda_1}$ along with \eqref{eq:lambdas}, finishes
the proof for our choice of parameters.
The details are presented below for completeness.
$$\frac{\theta}{\beta \lambda_0}  \lesssim \frac{\alpha_4^2 \barn^3}{\alpha_3 n} \cdot
\frac{\sqrt{n} \log n}{ \alpha_4 n^2} = {\log^4 n} \cdot \gamma^3\rho
 \ll 1 \mcom$$
$$ \frac{\beta \theta n }{\lambda_0} \lesssim \frac{1}{\sqrt{n} \log n} \cdot \frac{\alpha_4^2 \barn^3}{
  \alpha_3 n } \cdot n \cdot \frac{1}{\alpha_4n^2} \lesssim \log^2{n}\cdot \gamma^3 \rho \ll 1$$
Clearly $\lambda_0 > \lambda_1 > \lambda_2$ and
$$  \lambda_2 \gg 100 \theta \left(\beta \log^2{n}+ \frac{\log^2 n}{n}\right) \mcom$$
because
$$ \frac{\theta \beta \log^2{n}}{\lambda_2} \lesssim  \frac{\alpha_4^2 \barn^3}{\alpha_3 n} \cdot \frac{1}{\sqrt{n} \log n} \cdot
\frac{\log^2{n}}{\alpha_2}   = {\log^4{n}} \cdot \gamma^8 \rho^3 \ll 1 \mper $$

\end{proof}

\subsection{Bounding singular values of $H_{12}$}
Towards bounding the eigenvalues of $H_{21}^T H_{11}^{-1} H_{12}$,
Deshpande and Montanari \cite{DeshpandeM15} observe the following
properties of $H_{21}$ with regards to the spaces $V_0,V_1$
and $V_2$.
\begin{lemma}[Consequence of Propositions 4.18 and 4.27 in \cite{DeshpandeM15}]\label{lem:H12-stuff}
    Let $Q_n\in\R^{n\times n}$ be the orthogonal projector to the space spanned by $\ovec_n$.
    Let $p = \tfrac{1}{2}$.
    For the matrix $H_{21}$, we have that for sufficiently large $n$, with probability $1-O(n^{-5})$,
    $$
	\|\E[H_{21}] - H_{21}\| \le \alpha_3 \barn\mcom$$ and
    \begin{align*}
	\|\Pi_0 \E[H_{21}] Q_n\| &\le \tfrac{1}{4} n^{3/2}\alpha_3 + 2\alpha_2 n^{1/2}
	&	\Pi_0 \E[H_{21}] Q_n^{\perp} &= 0\\
	\Pi_1 \E[H_{21}] Q_n &= 0 &
	\|\Pi_1 \E[H_{21}] Q_n^{\perp}\| &\le \alpha_2 n^{1/2} \\
	\Pi_2 \E[H_{21}] Q_n &= 0 &
	\Pi_2 \E[H_{21}] Q_n^\perp &= 0.
\end{align*}
\end{lemma}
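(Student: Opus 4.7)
The plan is to separate the lemma into two largely independent pieces: the \emph{deterministic identities} about $\E[H_{21}]$ on the eigenspaces $V_0,V_1,V_2$ of $\E[H_{22}]$ and on $\Span\{\ovec_n\}$ and its complement, and the \emph{random concentration bound} $\|\E[H_{21}] - H_{21}\| \le \alpha_3 \barn$. The first is a direct (if fiddly) calculation; the second is a standard application of random matrix concentration.

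For the deterministic identities, I would begin by reading off $\E[H_{21}]$ entry-wise from the definitions of $N$ and $\cA$: the entry at row $\{i,j\}$ and column $\{k\}$ is $\alpha_2(1-\alpha_1)$ when $k \in \{i,j\}$ and $\tfrac{\alpha_3}{4} - \alpha_1\alpha_2$ otherwise. A one-line sum then gives
\[
    \E[H_{21}]\ovec_n \;=\; \Big[2\alpha_2(1-\alpha_1) + (n-2)\big(\tfrac{\alpha_3}{4} - \alpha_1\alpha_2\big)\Big]\cdot \ovec_{\binom{n}{2}}\mcom
\]
which lies in $V_0$, so $\Pi_1\E[H_{21}]Q_n = \Pi_2\E[H_{21}]Q_n = 0$, while the norm bound on $\|\Pi_0\E[H_{21}]Q_n\|$ follows by multiplying the scalar coefficient above by the length ratio $\|\ovec_{\binom{n}{2}}\|/\|\ovec_n\| = \sqrt{(n-1)/2}$. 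Analogously, for any $x \perp \ovec_n$ a direct computation gives $(\E[H_{21}]x)_{\{i,j\}} = (\alpha_2 - \tfrac{\alpha_3}{4})(x_i + x_j)$, a vector of the form that spans $V_1$ in \pref{prop:eigenspaces}; this yields $\Pi_0\E[H_{21}]Q_n^\perp = \Pi_2\E[H_{21}]Q_n^\perp = 0$, and the identity $\sum_{i<j}(x_i + x_j)^2 = (n-2)\|x\|^2$ on $\ovec_n^\perp$ gives $\|\Pi_1\E[H_{21}]Q_n^\perp\| \le \alpha_2 n^{1/2}$.

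For the concentration bound, only the multilinear entries of $H_{21}$ are random; substituting $\cG_{ab} = \tfrac{1}{2}(1+A_{ab})$ rewrites each centered entry (for $k \notin \{i,j\}$) as $\tfrac{\alpha_3}{4}(A_{ik} + A_{jk} + A_{ik}A_{jk})$. I would then decompose $H_{21} - \E[H_{21}] = L + M + R$, where $L$ is the linear-in-$A$ part, $M$ is the quadratic part, and $R$ is an $O(1)$-rank boundary correction zeroing out the $k \in \{i,j\}$ diagonals. For $L$, the key observation is $(Ly)_{\{i,j\}} = \tfrac{\alpha_3}{4}\big((Ay)_i + (Ay)_j\big)$, giving $\|L\| \lesssim \alpha_3 \sqrt{n} \cdot \|A\|$. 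For $M$, the identity $(My)_{\{i,j\}} = \tfrac{\alpha_3}{4}(A\diag(y)A^\top)_{ij}$ (up to low-rank corrections) gives $\|M\| \lesssim \alpha_3 \|A\|^2$. Plugging in the Wigner-type bound $\|A\| = O(\sqrt{n})$ with probability $1-O(n^{-5})$ (available from \pref{lem:wigner}) yields $\|H_{21} - \E[H_{21}]\| \lesssim \alpha_3 n$, which is comfortably below the target $\alpha_3 \barn = \alpha_3 n \log n$.

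The main obstacle is bookkeeping rather than any genuine difficulty: identifying the correct image subspace for each action of $\E[H_{21}]$, tracking constants through the $L, M, R$ decomposition, and checking that the boundary correction $R$ together with the diagonal leak in $M$ contributes at most $O(\alpha_3 \sqrt{n})$, which is easily absorbed by the logarithmic slack in $\alpha_3 \barn$.
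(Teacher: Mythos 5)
The paper itself does not prove this lemma --- it is imported wholesale as a consequence of Propositions 4.18 and 4.27 of \cite{DeshpandeM15} --- so there is no in-paper argument to compare against; your proposal is a correct self-contained derivation along the natural route. The entry-wise formula for $\E[H_{21}]$ is right ($\alpha_2(1-\alpha_1)$ at positions with $k\in\{i,j\}$ and $\tfrac{\alpha_3}{4}-\alpha_1\alpha_2$ elsewhere), the image of $\ovec_n$ is a multiple of $\ovec_{\binom{n}{2}}\in V_0$ while the image of $\ovec_n^{\perp}$ lands in $V_1$ by the characterization in \pref{prop:eigenspaces}, and the norm bounds follow from $\|\ovec_{\binom{n}{2}}\|/\|\ovec_n\|=\sqrt{(n-1)/2}$ and $\sum_{i<j}(x_i+x_j)^2=(n-2)\|x\|^2$ on $\ovec_n^\perp$ (note these use $\alpha_1\alpha_2\le\alpha_3/4$ and $\alpha_3\le 8\alpha_2$, which hold for the parameters of \pref{eq:alpha-choice} though they are not hypotheses of the lemma as stated). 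The $L+M+R$ split likewise works and in fact yields the stronger bound $O(\alpha_3 n)$ in place of $\alpha_3\barn$. Three small corrections: (i) $R$ is not $O(1)$-rank --- it has rank $n$ --- but each of its rows has $2$ nonzero entries and each column $n-1$, all of magnitude $\tfrac{\alpha_3}{4}$, so $\|R\|\le\sqrt{\|R\|_1\,\|R\|_\infty}=O(\alpha_3\sqrt{n})$ as you claim; (ii) the step $\|M\|\lesssim\alpha_3\|A\|^2$ needs the Frobenius-norm inequality $\|A\diag(y)A^{\top}\|_F\le\|A\|^2\,\|y\|$, since the entries of $My$ are the off-diagonal entries of $A\diag(y)A^{\top}$ and it is the Frobenius norm of that matrix, not its operator norm, that controls $\|My\|$; (iii) \pref{lem:wigner} concerns the $\binom{n}{2}\times\binom{n}{2}$ matrices $J_{\eta,\nu}$, not the $n\times n$ matrix $A$, so the bound $\|A\|=O(\sqrt{n})$ should be cited as the standard F\"uredi--Koml\'os estimate or re-derived by the trace method of \pref{sec:matrix-conc}.
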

Unfortunately, the bound of \cite{DeshpandeM15} on $\norm{\E[H_{21}]- H_{21}}$ is insufficient for our purposes, and we
require a more fine-grained bound on the deviation from expectation.  In fact,
outside the problematic subspace $W$, we show that $H_{21} -
\E[H_{21}]$ is much better behaved.

\begin{proposition}\label{prop:H12P2}
    Let $W = \Span_{i\in[n]}\{a_i\}$, and let $\Pi_W$ be the projector to that space.
    With probability at least $1-O(n^{-4})$, the following holds for every $x \in
    \R^{\binom{n}{2}} $
   \[
    	\|x^T \Pi_2 (H_{21} - \E[H_{21}])\|^2 \lesssim \alpha_3^2 \barn^2 \left(
 \|\Pi_W x\|^2  +
\frac{\log^2{n}}{n} \norm{x}^2 \right)
    \]
\end{proposition}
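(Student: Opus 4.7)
The plan is to compute the vector $w = (H_{21} - \E[H_{21}])^T \Pi_2 x \in \R^n$ in closed form, and then bound its squared norm by splitting off a ``Wigner-type'' noise part from a part naturally controlled by $\|\Pi_W x\|$. Writing $y = \Pi_2 x$ and letting $Y \in \R^{n \times n}$ be the symmetric matrix with $Y_{a,b} = y(\{a,b\})$ for $a \neq b$ and $Y_{a,a} = 0$, the entries of $H_{21} - \E[H_{21}]$ have the explicit form $\tfrac{\alpha_3}{4}(A_{ac} + A_{bc} + A_{ac}A_{bc})$ on $(\{a,b\},c)$ with $c \notin \{a,b\}$, so each coordinate of $w$ is a simple linear combination of entries of $Y$ weighted by entries of $A$.

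The crucial structural observation is that because $y \in V_2$, the description of $V_0$ and $V_1$ in \pref{prop:eigenspaces} forces $\sum_{b \neq a} Y_{a,b} = 0$ for every $a \in [n]$. Substituted into $\sum_{\{a,b\}: a,b\neq c} y(\{a,b\})(A_{ac}+A_{bc})$, this identity makes the contribution of each row $a$ collapse to $-A_{ac} Y_{a,c}$, so the total ``linear-in-$A$'' contribution to $w_c$ is exactly $-(AY)_{c,c}$. The ``quadratic-in-$A$'' contribution $\sum_{\{a,b\}} y(\{a,b\}) A_{ac}A_{bc}$ is precisely $\langle y, a_c\rangle$ by definition of $a_c$. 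Hence
\[
    w_c \;=\; \tfrac{\alpha_3}{4}\bigl[\langle y, a_c\rangle \,-\, (AY)_{c,c}\bigr],
\]
and therefore $\|w\|^2 \lesssim \alpha_3^2 \bigl(\sum_c \langle y,a_c\rangle^2 + \sum_c (AY)_{c,c}^2\bigr)$.

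It then remains to bound each of these two sums. For the ``structured'' sum, $\sum_c \langle y, a_c\rangle^2 = y^T P_W y$ with $P_W = \sum_c a_c a_c^T$; \pref{lem:w-proj} (together with the matching upper bound $\|P_W\| \lesssim n^2$, which follows from the same near-orthogonality of the $a_c$'s) gives $y^T P_W y \lesssim n^2 \|\Pi_W y\|^2$, and \pref{lem:01w} lets me replace $\|\Pi_W y\|^2 = x^T \Pi_2 \Pi_W \Pi_2 x$ by $\|\Pi_W x\|^2 + \tfrac{\log^2 n}{n}\|x\|^2$. For the ``noise'' sum, $\sum_c (AY)_{c,c}^2 \le \|AY\|_F^2 \le \|A\|^2 \|Y\|_F^2 \lesssim n\|y\|^2$, using the Wigner bound $\|A\| \lesssim \sqrt{n}$ from \pref{lem:wigner} and $\|Y\|_F^2 = 2\|y\|^2 \le 2\|x\|^2$. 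Multiplying through by $\alpha_3^2$ and using $\barn = n\log n$, both contributions are easily absorbed into $\alpha_3^2 \barn^2 \bigl(\|\Pi_W x\|^2 + \tfrac{\log^2 n}{n}\|x\|^2\bigr)$. The main technical step is the telescoping identity of the second paragraph, which is exactly where the $\Pi_2$-projection earns its keep; all remaining matrix-norm bounds are short consequences of results already developed in the paper or of the concentration results in \pref{sec:matrix-conc}.
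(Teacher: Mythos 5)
Your proof is correct, and while it rests on the same structural fact as the paper's argument --- that the columns of $U = H_{21}-\E[H_{21}]$ decompose into a piece lying in $W$ and a piece annihilated by pairing with $V_2$ --- the execution is genuinely different and in one respect more careful. The paper observes that each column of $U$ lies (essentially) in $W \cup V_1$, concludes $\Pi_2 U = \Pi_2\Pi_W U$, and then factors the bound as $\|x^T\Pi_2\Pi_W\|\cdot\|U\|$, which requires the operator-norm bound $\|U\|\lesssim\alpha_3\barn$ imported from Deshpande--Montanari (\prettyref{lem:H12-stuff}). You instead compute each coordinate of $x^T\Pi_2 U$ exactly, landing on $w_c=\tfrac{\alpha_3}{4}\bigl[\langle y,a_c\rangle-(AY)_{c,c}\bigr]$, and then bound the two sums separately via $P_W\preceq O(n^2)\Pi_W$ (\prettyref{lem:w-proj}) and $\|A\|\lesssim\sqrt n$; this bypasses the deviation bound on $\|U\|$ entirely. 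Your residual term $-(AY)_{c,c}$ is not an artifact: it is exactly $-\langle y,e_c\rangle$ where $e_c$ is the correction accounting for the entries $(\{a,b\},c)$ with $c\in\{a,b\}$, where $N$ is the deterministic constant $\alpha_2$ so that $U$ vanishes there rather than obeying the multilinear formula. The paper's assertion that the columns of $U$ lie in $W\cup V_1$ is only true modulo this correction, so your version makes rigorous a point the paper glosses over (the correction is small enough that the paper's bound survives either way, and your estimate $\sum_c(AY)_{c,c}^2\lesssim n\|x\|^2$ fits comfortably inside the $\alpha_3^2\barn^2\cdot\tfrac{\log^2 n}{n}\|x\|^2$ budget). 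Two minor nits: the bound $\|A\|\lesssim\sqrt n$ for the centered adjacency matrix is standard but is not actually what \prettyref{lem:wigner} states (that lemma concerns the $J_{\eta,\nu}$ matrices), so you should either cite a F\"uredi--Koml\'os-type bound or note that the same trace-method argument gives it; and the step replacing $\|\Pi_W\Pi_2 x\|^2$ by $\|\Pi_W x\|^2+\tfrac{\log^2 n}{n}\|x\|^2$ via \prettyref{lem:01w} is the same expansion the paper performs, so it should be spelled out rather than asserted.
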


\begin{proof}
    From \pref{lem:H12-stuff},
    we have that
    \[
	\Pi_2 \E[H_{21}] = 0.
    \]
    Thus we may work exclusively with the difference from the expectation;
    for convenience, we let $U = H_{21} - \E[H_{21}]$.
    Fix $A = \{a,b\}$ and $c$, so that $|\{a,b\}| = 2$.
    By inspection, the entry $(A,c)$ of $U$ is given by the polynomial
    \[
	U_{A,c} = \frac{\alpha_3}{4}(A_{ac}A_{bc} + (A_{ac} + A_{bc})).
    \]
    Thus, the columns of $U$ are in $W \cup V_1$.
    So we have that
    \begin{align*}
	\Pi_2 U
	&= \Pi_2(\Pi_W + \Pi_{V_1\setminus W} + \Pi_{W\cup V_1}^{\perp}) U \\
	&= \Pi_2 \Pi_W U  + \Pi_2\Pi_{V_1\setminus W} U + \Pi_{W\cup V_1}^{\perp}U\\
	&= \Pi_2\Pi_W U ,
    \end{align*}
    where the latter two terms were eliminated because $V_1 \perp V_2$ and $W\cup V_1 \perp \Span\{\col(U)\}$.

    In \pref{lem:01w},
    we bound $\|\Pi_{01} \Pi_W \Pi_{01}\| \le O(\frac{\log^2 n}{ n})$.
    So we have that
    \begin{align*}
	\|x^T \Pi_2 \Pi_W\|^2
	&\le x^T(\Id -\Pi_{01})\Pi_W(\Id- \Pi_{01})x\\
	&= \|\Pi_W x\|^2 - 2x^T\Pi_{01}\Pi_W x + x^T\Pi_{01} \Pi_W \Pi_{01} x\\
	&\le \|\Pi_W x\|^2 + 2\|\Pi_W x\|\|\Pi_{01}\Pi_W x\| + \norm{x}^2 \|\Pi_{01}\Pi_W\Pi_{01}\|\\
	&\lesssim
 2\|\Pi_W x\|^2 + \frac{2\log^2n}{n} \cdot \norm{x}^2,
    \end{align*}
    where we have applied the inequality $a^2 + b^2 \ge 2ab$.
    The conclusion follows by noting that $\|x^T\Pi_2 H_{21}\|  =
    \|x^T\Pi_2\Pi_W U\| \le \|x^T\Pi_2\Pi_W\|\cdot \|U\|$, and that by \pref{lem:H12-stuff}
    $\|U\| \lesssim \alpha_3 \barn $.
\end{proof}

\subsection{ Bounding singluar values of $H_{12}^T H_{11}^{-1} H_{12}$}

   We will bound $H_{21}H_{11}^{-1} H_{12}$, as in our bounds on $H_{22}'$, by splitting the matrix up according to the eigenspaces $\Pi_0,\Pi_1,\Pi_2$.

   \begin{theorem}\label{thm:h21h11h12}
       Let $c_1,\ldots, c_4$ be as defined in \pref{eq:alpha-choice}. For the choice of $\underline{\alpha}$ in \pref{eq:alpha-choice}, we have that with probability $1-O(n^{-5})$,
\begin{align*}
H_{12}^T H_{11}^{-1} H_{12} \preceq \frac{c_3^2}{c_2} \cdot \Pi_0
  + \frac{c_2^2 + c_3^2 \log^2 n}{c_1 n^{1/2}} \cdot \Pi_1 + \frac{c_3 \log^{4}
  n}{ c_1 n^{3/2}} \Pi_2 + \frac{c_3^2 \log^2 n}{c_1 n^{1/2}} \cdot \Pi_W
\end{align*}
   \end{theorem}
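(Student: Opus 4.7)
The plan is to bound $H_{12}^T H_{11}^{-1} H_{12}$ block-by-block with respect to the three eigenspaces $V_0, V_1, V_2$ of $\E[H_{22}]$ on the $\binom{n}{2}$-side and with respect to the $\novec_n$ versus $\novec_n^\perp$ split on the $n$-side. First I would control the spectrum of $H_{11}^{-1}$. A direct calculation gives $\E[H_{11}(i,j)] = \alpha_2/2 - \alpha_1^2$ for $i \ne j$ and $\alpha_1 - \alpha_1^2$ on the diagonal, so $\E[H_{11}]$ has leading eigenvalue $\sim c_2/2$ along $\novec_n$ and remaining eigenvalue $\sim c_1 / n^{1/2}$ on $\novec_n^\perp$. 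A Wigner-type concentration bound (analogous to \pref{lem:wigner}) yields $\|H_{11} - \E[H_{11}]\| = O(\alpha_2 n^{1/2} \log n)$, which for our parameter choice is negligible compared to $c_1/n^{1/2}$. Hence with probability $1 - O(n^{-5})$,
\[
    H_{11}^{-1} \preceq \frac{2}{c_2}\, Q_n + \frac{2 n^{1/2}}{c_1}\, Q_n^\perp.
\]

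Second, for each pair $(a,b) \in \{0,1,2\}^2$ I would bound $\Pi_a H_{12}^T H_{11}^{-1} H_{12} \Pi_b$ by splitting $H_{11}^{-1}$ along $Q_n$ and $Q_n^\perp$, using \pref{lem:H12-stuff} to control $\Pi_a \E[H_{21}]$ on each piece, and using $\|H_{21} - \E[H_{21}]\| \lesssim \alpha_3 \barn$ for the fluctuation. For $(a,b) = (0,0)$, the $Q_n$ contribution dominates and $\|\Pi_0 \E[H_{21}] Q_n\| \lesssim c_3$ yields the $c_3^2/c_2$ bound. For $(a,b) = (1,1)$, since $\Pi_1 \E[H_{21}] Q_n = 0$ only the $Q_n^\perp$ direction contributes, and $\|\Pi_1 \E[H_{21}] Q_n^\perp\| \lesssim c_2 / n^{1/2}$ combined with the fluctuation term yields $(c_2^2 + c_3^2 \log^2 n)/(c_1 n^{1/2})$ after multiplying by $n^{1/2}/c_1$.

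Third, for the $\Pi_2$ block I would invoke the refined bound of \pref{prop:H12P2}: since $\Pi_2 \E[H_{21}] = 0$ we have $\Pi_2 H_{21} = \Pi_2 (H_{21} - \E[H_{21}])$, hence
\[
    \|x^T \Pi_2 H_{21}\|^2 \lesssim \alpha_3^2 \barn^2 \Bigl( \|\Pi_W x\|^2 + \tfrac{\log^2 n}{n}\|x\|^2 \Bigr) = \tfrac{c_3^2 \log^2 n}{n} \Bigl( \|\Pi_W x\|^2 + \tfrac{\log^2 n}{n}\|x\|^2 \Bigr).
\]
Multiplying by $\|H_{11}^{-1}\| \lesssim n^{1/2}/c_1$ produces the $\Pi_W$ coefficient $c_3^2 \log^2 n / (c_1 n^{1/2})$ and a residual $\Pi_2$ coefficient of order $c_3^2 \log^4 n / (c_1 n^{3/2})$, which is absorbed into the $c_3 \log^4 n/(c_1 n^{3/2})$ term of the theorem since $c_3 = \gamma^3 \rho^3 \ll 1$ for our parameter choice. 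The cross blocks $\Pi_a(\cdot)\Pi_b$ with $a \ne b$ are handled by the standard $2 B \preceq \epsilon^{-1} A + \epsilon C$ trick applied to the $3 \times 3$ array of block norms, so that all off-diagonal terms are absorbed into the diagonal bounds up to constants.

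The main obstacle is propagating the fine $W$-structure of \pref{prop:H12P2} through $H_{11}^{-1}$ without mixing the $Q_n$ and $Q_n^\perp$ directions, whose norms in $H_{11}^{-1}$ differ by the large factor $c_2 n^{1/2} / c_1 \gg 1$. The key observation is that $\Pi_2 \E[H_{21}] Q_n = 0$, so the $Q_n$-projection of $\Pi_2 H_{21}$ consists entirely of fluctuations and is small enough that the $\Pi_2$-bound is genuinely driven by the $Q_n^\perp$ component (with the favorable $n^{1/2}/c_1$ norm). Once this separation is carefully maintained, the remainder of the proof is routine bookkeeping: recasting the matrix of block-norm bounds as a PSD inequality on $\R^{\binom{n}{2}}$, and substituting the parameter choice \pref{eq:alpha-choice} to match the statement exactly.
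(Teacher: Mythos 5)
Your proposal is correct and follows essentially the same route as the paper: bound $H_{11}^{-1}$ along $Q_n$ and $Q_n^\perp$ (the paper simply cites the Deshpande--Montanari bound, \pref{lem:H11}, rather than re-deriving it), decompose $H_{12}x$ into the expectation pieces on $\Pi_0,\Pi_1,\Pi_2$ plus the fluctuation pieces, control the $\Pi_2$ fluctuation via \pref{prop:H12P2} to extract the $\Pi_W$ term, and absorb cross terms by a Cauchy--Schwarz argument (\pref{obs:simple-cauchy-schwartz} in the paper). The only cosmetic difference is that the paper does not need to separate $Q_n$ from $Q_n^\perp$ inside the $\Pi_2$ block at all, since it simply uses the larger $1/\alpha_1$ eigenvalue of $H_{11}^{-1}$ there, which already yields the stated coefficients.
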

   \begin{proof}
    For each $a,b\in\{0,12\}$, define the matrix $U_{ab} = \Pi_a
    (\E[H_{21}] - H_{21}) \Pi_b$.
    We can verify that for our choice of $\underline\alpha$ the conditions of \pref{lem:H11} are met, and so we conclude that with probability
    $1-O(n^{-5})$,
    $$ H_{11}^{-1} \preceq \frac{1}{n(\alpha_2 - 2\alpha_1^2)}  \cdot Q_n
    + \frac{1}{\alpha_1} \cdot Q_n^{\perp}$$

    For $x \in \R^{\binom{n}{2}}$, let $y_{U} = (\E[H_{12}] - H_{12})
    \Pi_2x$, $z_U = (\E[H_{12}] - H_{12}) (\Pi_0 + \Pi_1)x$,
    $x_{A} = \E[H_{12}] \Pi_0 x$, $x_{B} =  \E[H_{12}] \Pi_1 x$ and $x_{C} =
     \E[H_{12}] \Pi_2 x$.
In order to simplify our computations, we will use the following
observation.
\begin{observation} \label{obs:simple-cauchy-schwartz}
  Given $A \in \R^{m \times m}$ with $A \succeq 0$ and vectors
  $x_1,\ldots, x_t \in \R^m$,
  $$ \left( \sum_{i \in [t]} x_i \right)^T A \left( \sum_{i \in [t]} x_i \right)
  \leq t \cdot \sum_{i \in [t]} x_{i}^T A x_i  $$
\end{observation}
\begin{proof}  If we set $y_i = A^{1/2}x_i$ then the inequality reduces
  to,
$$ \norm{ \sum_{i \in [t]} y_i}^2 \leq t \sum_{i \in [t]}
\norm{y_i}^2 \mcom$$
which is an immediate consequence of triangle inequality for $\norm{\cdot}$
and Cauchy-Schwartz inequality.
\end{proof}

Using \pref{obs:simple-cauchy-schwartz},
    $$x^T H_{12}^T H_{11}^{-1} H_{12} x \leq 5 \left(   x_{A}^T H_{11}^{-1} x_{A} + x_B^T H_{11}^{-1} x_B + x_{C}^T
    H_{11}^{-1} x_C  + y_U^T H_{11}^{-1} y_U + z_{U}^T H_{11}^{-1} z_U\right)\mper$$

  To simplify the calculations and make the dominant terms apparent,
  let us fix $ \alpha_1 = c_1 / \sqrt{n}$, $\alpha_2 = c_2/n$, $\alpha_3 = c_3/n^{3/2}$ and
  $\alpha_{4} = c_4/n$ wherein each $c_i \in [ 1/\log^{200} n,1 ]$.
  First, observe that $\frac{1}{\alpha_1} \gg \frac{1}{n(\alpha_2 -
    2\alpha_1^2)}$ for this setting of parameters.

  For the terms $x_A^T H_{11}^{-1} x_A$, $x_B^T H_{11}^{-1} x_B$ and
  $x_C^T H_{11}^{-1} x_C$ we can write,
  \begin{align*}
    x_{A}^T H_{11}^{-1} x_A & \lesssim \norm{\Pi_0x}^2 \left(
    \frac{1}{n(\alpha_2 - 2\alpha_1^2)} \cdot \norm{\Pi_0
    \E[H_{21}] Q_n}^2 +  \frac{1}{\alpha_1}  \cdot \norm{\Pi_0
    \E[H_{21}] Q^{\perp}_n}^2 \right) \nonumber \\
    & \lesssim \frac{(\alpha_2  + \alpha_3 n)^2}{(\alpha_2 -2 \alpha_1^2)} \cdot
      \norm{\Pi_0 x}^2 \lesssim \frac{c_3^2}{c_2} \norm{\Pi_0x}^2 \\
    x_{B}^T H_{11}^{-1} x_B  & \lesssim \norm{\Pi_1x}^2 \left(
    \frac{1}{n(\alpha_2 - 2\alpha_1^2)} \cdot \norm{\Pi_1
    \E[H_{21}] Q_n}^2 +  \frac{1}{\alpha_1}  \cdot \norm{\Pi_1
    \E[H_{21}] Q^{\perp}_n}^2 \right) \nonumber \\
  & \lesssim \frac{\alpha_2^2 n}{\alpha_1} \norm{\Pi_1x}^2 \lesssim
    \frac{c_2^2}{c_1 n^{1/2}} \cdot \norm{\Pi_1 x}^2 \\
     x_{C}^T H_{11}^{-1} x_C & \lesssim \norm{\Pi_2x}^2 \left(
    \frac{1}{n(\alpha_2 -2\alpha_1^2)} \cdot \norm{\Pi_2
    \E[H_{21}] Q_n}^2 +  \frac{1}{\alpha_1}  \cdot \norm{\Pi_2
    \E[H_{21}] Q^{\perp}_n}^2 \right) \nonumber \\
    & = 0 \\
     z_{U}^T H_{11}^{-1} z_U & \lesssim \norm{ (\Pi_0+\Pi_1)x}^2 \left(
    \frac{1}{n(\alpha_2 -2\alpha_1^2)} \cdot \norm{
    H_{21} - \E[H_{21}]}^2 \norm{Q_n}^2 +  \frac{1}{\alpha_1}  \cdot
                              \norm{H_{21} - \E[H_{21}]}^2
                              \norm{Q^{\perp}_n}^2 \right) \nonumber \\
    & \lesssim \left( \norm{\Pi_0x}^2 + \norm{\Pi_1x}^2\right) \left(
      \frac{\alpha_3^2 \barn^2}{n(\alpha_2 - 2\alpha_1^2)} + \frac{\alpha_3^2 \barn^2}{\alpha_1}  \right)
      \lesssim   \left( \norm{\Pi_0x}^2 + \norm{\Pi_1x}^2\right)  \cdot \frac{c_3^2 \log^2 n}{c_1
      n^{1/2}} \\
\intertext{Finally, we have}
y_U^T H_{11}^{-1} y_{U} & \lesssim \norm{(H_{12} - \E[H_{12}]) \Pi_2
                         x}^2 \cdot \left( \frac{1}{n(\alpha_2 - 2\alpha_1^2)} + \frac{1}{\alpha_1}\right) \nonumber \\
  & \lesssim  \left( \frac{\alpha_3^2 \barn^2}{n(\alpha_2 - 2\alpha_1^2)} +
    \frac{\alpha_3^2 \barn^2}{\alpha_1}\right)  \cdot \norm{\Pi_W x}^2
    +  \left( \frac{\alpha_3^2 \barn^2 \log^2 n}{n^2(\alpha_2 - 2\alpha_1^2)} + \frac{
    \alpha_3^2 \barn^2 \log^2
    n}{n \alpha_1}\right) \cdot \norm{x}^2 \nonumber \\
& \lesssim \frac{c_3^2 \log^2n}{c_1 n^{1/2}} \cdot \norm{\Pi_W x}^2 +
  \frac{c_3^2 \log^4 n}{c_1 n^{3/2}} \cdot \norm{x}^2
\end{align*}
The conclusion follows by grouping the projections, and taking the dominating terms as $n$ grows.
\end{proof}

\subsection{Proof of $H'_{22} \succeq H_{12}^T H_{11}^{-1} H_{12}$}
\begin{theorem}\label{thm:schur-2}
    For the choice of $\underline\alpha$ given in \pref{eq:alpha-choice}, we have that $H'_{22} \succeq H_{12}^T H_{11}^{-1} H_{12}$ with probability $1-O(n^{-4})$.
\end{theorem}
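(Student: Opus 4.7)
The plan is to combine the lower bound on $H'_{22}$ from \pref{thm:psd} with the upper bound on $H_{12}^T H_{11}^{-1} H_{12}$ from \pref{thm:h21h11h12} and verify that the difference is PSD by a term-by-term comparison along the four projectors $\Pi_0, \Pi_1, \Pi_2, \Pi_W$.

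Concretely, first I would invoke \pref{thm:psd} to obtain, with probability $1 - O(n^{-4})$,
\[
H'_{22} \;\succeq\; \frac{1}{8} \E[H_{22}] + \frac{\beta \lambda_0}{16}\cdot \Pi_W \;=\; \frac{\lambda_0}{8}\Pi_0 + \frac{\lambda_1}{8}\Pi_1 + \frac{\lambda_2}{8}\Pi_2 + \frac{\beta \lambda_0}{16}\Pi_W,
\]
using the eigendecomposition from \pref{prop:eigenspaces}. Then I would apply \pref{thm:h21h11h12} to obtain the matching upper bound with coefficients $\tfrac{c_3^2}{c_2}$, $\tfrac{c_2^2 + c_3^2 \log^2 n}{c_1 n^{1/2}}$, $\tfrac{c_3^2 \log^4 n}{c_1 n^{3/2}}$, and $\tfrac{c_3^2 \log^2 n}{c_1 n^{1/2}}$ on $\Pi_0, \Pi_1, \Pi_2, \Pi_W$ respectively. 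Since the projectors $\Pi_0, \Pi_1, \Pi_2$ are pairwise orthogonal and $\Pi_W$ is separately PSD, it suffices to verify the four scalar inequalities:
\[
\tfrac{\lambda_0}{8} \ge \tfrac{c_3^2}{c_2}, \qquad
\tfrac{\lambda_1}{8} \ge \tfrac{c_2^2 + c_3^2 \log^2 n}{c_1 n^{1/2}}, \qquad
\tfrac{\lambda_2}{8} \ge \tfrac{c_3^2 \log^4 n}{c_1 n^{3/2}}, \qquad
\tfrac{\beta \lambda_0}{16} \ge \tfrac{c_3^2 \log^2 n}{c_1 n^{1/2}}.
\]

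Each inequality reduces to a routine check under the parametrization $c_1 = \rho,\, c_2 = \gamma\rho^2,\, c_3 = \gamma^3\rho^3,\, c_4 = \gamma^6 \rho^4$ together with the lower bounds on $\lambda_0,\lambda_1,\lambda_2$ from \pref{eq:lambdas} and the final choice $\gamma = \log^4 n$, $\rho = \log^{-20} n$, $\beta = (100\sqrt{n}\log n)^{-1}$. For instance, the $\Pi_0$-inequality becomes $\gamma \gtrsim 1$; the $\Pi_1$-inequality reduces to $\gamma^3 \rho^2 \log^2 n \ll 1$, which holds by a comfortable polylog margin; the $\Pi_2$-inequality becomes $\sqrt{n} \gg \gamma^5 \rho^3 \log^4 n$, which is trivially satisfied for large $n$; and the $\Pi_W$-inequality reduces to $\beta \sqrt{n} \log^2 n \cdot \rho \ll 1$, again easily verified.

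I do not expect any step here to be a serious obstacle: all of the heavy lifting has been done in \pref{thm:psd} (which delivers the crucial additive $\Pi_W$-mass coming from our perturbation) and in \pref{thm:h21h11h12} (which, through \pref{prop:H12P2}, channels the $\Pi_2$-contribution almost entirely into the $\Pi_W$ direction). The only thing to watch for is ensuring the polylog slack in $\gamma$ and $\rho$ is large enough to swallow both the $\log^2 n$ losses in the matrix-concentration bounds and the $\log^4 n$ loss on $\Pi_2$; the parameter choices $\gamma = \log^4 n$ and $\rho = \log^{-20} n$ were engineered precisely so every inequality holds with polylogarithmic margin. A final union bound over the $O(n^{-4})$-probability events from \pref{thm:psd} and \pref{thm:h21h11h12} yields the claim.
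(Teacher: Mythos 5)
Your proposal is correct and follows essentially the same route as the paper's proof: invoke \pref{thm:psd} and \pref{thm:h21h11h12}, subtract coefficient-by-coefficient along $\Pi_0,\Pi_1,\Pi_2,\Pi_W$, and verify the four scalar inequalities under the parametrization \pref{eq:alpha-choice}, exactly as the paper does. The only nit is that your $\Pi_W$ condition should read $\rho\log^2 n/(\beta\sqrt{n})\ll 1$ (i.e.\ $\rho\log^3 n\ll 1$) rather than $\beta\sqrt{n}\cdot\rho\log^2 n\ll 1$ --- the factor $\beta\sqrt{n}=\Theta(1/\log n)$ belongs in the denominator --- but both versions hold with ample polylogarithmic slack, so the conclusion is unaffected.
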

\begin{proof}
Recall that by \pref{thm:psd} with probability at least $1 - O(n^{-4})$,
\begin{align*}
 H_{22}' \succeq \frac{\lambda_0}{4} \cdot \Pi_0 + \frac{\lambda_1}{4}
  \cdot \Pi_1 + \frac{\lambda_2}{4} \cdot \Pi_2 + \frac{\beta \lambda_0}{16} \Pi_W \mper
\end{align*}
By our choice of the parameters $\alpha_1,\alpha_2,\alpha_3, \alpha_4$,
the conclusion of \pref{thm:h21h11h12} implies that
\begin{align*}
H_{22}' - H_{12}^{T} H_{11}^{-1} H_{12}
&\succeq
\left(\frac{1}{4}\lambda_0 - \frac{c_3^2}{c_2}\right)\Pi_0 +
\left(\frac{1}{4}\lambda_1 - \frac{c_2^2 + c_3^2\log^2 n}{c_1n^{1/2}}\right)\Pi_1\\
&\qquad +
\left(\frac{1}{4}\lambda_2 - \frac{c_3\log^4 n}{c_1n^{3/2}}\right)\Pi_2 +
\left(\frac{1}{16}\beta\lambda_0 - \frac{c_3^2\log^2 n}{c_1 n^{1/2}}\right)\Pi_W\\
&\succeq 0\mcom
\end{align*}
as desired.
The details of the calculation are spelled out below for the sake of completeness; we verify that the coefficient of each projector is non-negative.

\noindent For the space $\Pi_0$,
\begin{align*}
\frac{c_3^2}{c_2} \cdot \frac{1}{\lambda_0} = \frac{c_3^2}{c_2 c_4} =
  \gamma^{-1} \ll 1
\end{align*}
For $\Pi_1$,
\begin{align*}
 \frac{c_2^2 + c_3^2 \log^2 n}{c_1 n^{1/2}}  \cdot \frac{1}{\lambda_1} \lesssim \frac{c_2^2 + c_3^2
  \log^2 n}{c_3 c_1 } = \frac{1}{\gamma} + \gamma^3 \rho^2 \log^2n \ll 1 \mper
\end{align*}
For $\Pi_2$,
\begin{align*}
 \frac{c_3 \log^{4}
  n}{ c_1 n^{3/2}} \cdot \frac{1}{\lambda_2} \leq \frac{1}{n^{1/2}} \frac{c_3 \log^4
  n}{c_1 c_2} \ll 1 \mper
\end{align*}
Finally for $\Pi_W$,
\begin{align*}
 \frac{c_3^2 \log^2 n}{c_1 n^{1/2}} \cdot \frac{1}{\beta \lambda_0} = \frac{c_3^2\log^{3}n}{c_1
  c_4} = \rho \log^{3}n \ll 1 \mper
\end{align*}
This concludes the proof.
\end{proof}

\subsection{Proof of Main Theorem}
We finally have the components needed to prove \pref{thm:main-technical}.

\begin{proof}[Proof of \pref{thm:main-technical}]
    First, we recall that independent of our choice of $\underline \alpha$, the SDP solution defined in \pref{def:sol} does not violate any of the linear constraints of \pref{eq:sdprelaxation}, as shown in \pref{prop:lin-const}.
    To meet the program constraints, it remains to show that for the choice of $\underline \alpha$ given in \pref{eq:alpha-choice}, our solution is PSD.

    The solution matrix $M'(G,\underline\alpha)$ is a principal submatrix of $N'(G,\underline \alpha)$, and so $N' \succeq 0$ implies $M' \succeq 0$.
    We prove that $N'$ satisfies the Schur complement conditions from \pref{lem:schur} with high probability.
    Observing that $H'_{11} = H_{11}$ and $H_{12}' = H_{12}$, we apply \pref{thm:schur-2}, which states that for our choice of $\underline \alpha$, $H_{22}' \succeq H_{12}^T H_{11}^{-1} H_{21}$.
    For the $H_{11}$ term, we apply the lower bound on the eigenvalues of $H_{11}$ given by \cite{DeshpandeM15} (\pref{lem:H11}), which state that so long as $\alpha_1 - \alpha_2 \gg \alpha_2 n^{-1/2} $ and $\alpha_2 - 2\alpha_1^2 \ge 0$, we have $H_{11} \succeq 0$ with probability $1-O(n^{-5})$.
    For our choice of $\alpha_1,\alpha_2$, we have
    \[
	\frac{\alpha_2}{ n^{1/2}} = \frac{\gamma \rho^2}{ n^{3/2}} \ll \frac{\rho}{ n^{1/2}} - \frac{\gamma \rho^2}{ n} = \alpha_1 - \alpha_2\mcom
    \]
    and
    \[
	\alpha_2 - 2\alpha_1^2 = \frac{\gamma \rho^2 - 2\rho^2}{n} \gg 0,
    \]
    and so we may conclude that $H_{11} \succeq 0$.
    Therefore by a union bound, the conditions of \pref{lem:schur} are satisfied with probability $1-O(n^{-4})$, and $N'(G,\underline\alpha) \succeq 0$  and so our solution satisfies the PSDness constraint.

    It remains only to prove that the objective value is $\tilde\Omega(\sqrt{n})$.
    The objective value is simply $\sum_{i\in[n]} \alpha_1 = n \alpha_1 = \rho n^{1/2}$, concluding our proof.
\end{proof}

\section{Concentration of Projected Matrices}\label{sec:matrix-conc}
In this section, we give bounds on the spectra of random matrices that
are part of the correction term.
Though we are able to recycle many of the spectral bounds of Deshpande and Montanari \cite{DeshpandeM15}, in our modification to their witness, we introduce new matrices which also require description and norm bounds.

We obtain our bounds by employing the trace power method.
The trace power method uses the fact that if $X \in \R^{n\times n}$ is a symmetric matrix, then for even powers $k$, $\Tr(X^k) = \sum_{i \in [n]} \lambda_i(X)^k \ge \lambda_{\max}(X)^k$.
By bounding $\E[\Tr(X^k)]^{1/k}$ for a sufficiently large $k$, we
essentially obtain bounds on the infinity norm of the vector of
eigenvalues, i.e., a bound on the spectral norm of the matrix $X$.
A formal statement follows, and the proof is given in \pref{app:misc} for completeness.

\begin{lemma}
    \label{lem:trace-power}
    Suppose an $n \times n$ random matrix $M$ satisfies $\E[\Tr(M^k)] \le n^{\alpha k + \beta}\cdot (\gamma k)!$ for any even integer $k$, where $\alpha, \beta,\gamma$ are constants.
    Then $$ \Pr\Paren{\|M\| \lesssim \eta^{-1/\log n} \cdot n^{\alpha} \cdot \log n} \ge 1-\eta \mper $$
\end{lemma}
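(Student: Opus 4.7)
The plan is a direct application of the trace power method combined with Markov's inequality, tuning the exponent $k$ as a function of $n$. Since $M$ is symmetric (implicit from the hypothesis on $\Tr(M^k)$), its eigenvalues $\lambda_1,\ldots,\lambda_n$ are real, and for any even $k$ each $\lambda_i^k \ge 0$. Therefore
\[
\|M\|^k \;=\; \max_i \lambda_i^k \;\le\; \sum_i \lambda_i^k \;=\; \Tr(M^k),
\]
which turns a tail bound on $\Tr(M^k)$ into a tail bound on $\|M\|$.

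Next I would apply Markov's inequality together with the hypothesis: for any threshold $t > 0$ and even integer $k$,
\[
\Pr\!\left(\|M\| \ge t\right) \;\le\; \Pr\!\left(\Tr(M^k) \ge t^k\right) \;\le\; \frac{\E[\Tr(M^k)]}{t^k} \;\le\; \frac{n^{\alpha k + \beta}\,(\gamma k)!}{t^k}.
\]
Setting the right-hand side equal to $\eta$ and solving for $t$ gives
\[
t \;=\; n^{\alpha}\cdot n^{\beta/k}\cdot ((\gamma k)!)^{1/k}\cdot \eta^{-1/k}.
\]

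To optimize, I would choose $k$ to be the smallest even integer at least $\log n$. This makes $n^{\beta/k} = O(1)$ and $\eta^{-1/k} \le \eta^{-1/\log n}$. Stirling's formula gives $((\gamma k)!)^{1/k} = \Theta\bigl((\gamma k/e)^{\gamma}\bigr)$, which for constant $\gamma$ is $O(\log n)$ after absorbing $\gamma$-dependent factors into the $\lesssim$ notation. Multiplying these three contributions yields
\[
t \;\lesssim\; \eta^{-1/\log n}\cdot n^{\alpha}\cdot \log n,
\]
which is precisely the claimed bound.

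The only step that requires mild care is the Stirling estimate on $((\gamma k)!)^{1/k}$ and the verification that all $\gamma$- and $\beta$-dependent factors can be folded into the implicit $\lesssim$ constant (treating $\alpha,\beta,\gamma$ as absolute constants, as in the hypothesis). Otherwise the argument is a one-line Markov bound; the choice $k \asymp \log n$ is forced by wanting to kill the $n^{\beta/k}$ factor while keeping $\eta^{-1/k}$ and $((\gamma k)!)^{1/k}$ under control, and there is no real combinatorial or probabilistic obstacle.
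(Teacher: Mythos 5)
Your proposal is correct and is essentially identical to the paper's proof: Markov's inequality applied to $\Tr(M^k)$, followed by Stirling's approximation and the choice $k \asymp \log n$. (Both you and the paper gloss over the fact that $((\gamma k)!)^{1/k} \asymp (\gamma k/e)^{\gamma} = \Theta(\log^{\gamma} n)$, which is not $O(\log n)$ when $\gamma > 1$; this only affects the unoptimized polylogarithmic factor and is harmless in every application in the paper.)
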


Our concentration proofs consist of, for each matrix in question,
obtaining a bound on $\E[\Tr(X^k)]$.
The expression $\E[\Tr(X^k)]$ is a sum over products along closed paths of length $k$ in the entries of $X$.
In our case, the entries of the random matrix $X$ are themselves
low degree polynomials in random variables $\{ A_{ij}\}_{i \in [n], j
  \in [n]}$ where $A_{ij}$ is the centered random variable that
indicates whether the edge $(i,j)$ is part of the random graph $G$.
Thus $\Tr(X^k)$ can be written out as a polynomial in the random
variables $\{A_{ij}\}_{i,j \in [n]}$.  Since the random variables $\{A_{ij}\}_{i,j \in
  [n]}$ are centered (i.e., $\E[A_{ij}] = 0$), almost all of the terms
in $\E[\Tr(X^k)]$ vanish to zero.  The nonzero terms are precisely those monomials in which every variable appears with even multiplicity.

For the purpose of moment calculations, we borrow much of our
terminology from the work of Deshpande and Montanari \cite{DeshpandeM15}.
Every monomial in random variables $\{A_{ij}\}_{i, j \in [n]}$
corresponds to a {\it labelled graph} $(F = (V,E), \ell)$ that
consists of a graph $F=(V,E)$ and a labelling
$\ell:V \to [n]$ that maps its vertices to $[n]$.  A labelling of $F$
{\it contributes} (is nonzero in expectation), if and only if every pair $\{i,j\}$
appears an even number of times as a label of an edge in $F$.
The problem of bounding $\E[\Tr(X^k)]$ reduces to counting the number
of the number of contributing labeled graphs.

For example, for a given matrix $X$, we may have a bound on the number
of ``vertices'' and ``edges'' in a term of $\E[\Tr(X^k)]$ as a function of $k$.
In that case, we may use the following proposition, which allows us to bound the number of such graphs in which every variable $A_{ij}$, corresponding to an edge between vertices $i$ and $j$, appears at least twice.

\begin{proposition}
    \label{prop:eulerian}

Let $F = (V,E)$ be a multigraph and let $\ell: V\to [n]$ be a labelling
such that each pair $(i,j)$ appears an even number of times as the
label of an edge in $E$. Then,
$$ \card{\{\ell(v) | v \in V\} } \leq \frac{|E|}{2} + (\# \text{ connected components
of F})$$
\end{proposition}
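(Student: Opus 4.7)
The plan is to form a quotient multigraph on the label set and exploit the fact that in any connected (multi)graph with $v$ vertices there are at least $v-1$ edges. Concretely, let $V' = \{\ell(v) : v\in V\} \subseteq [n]$ and let $\tilde F$ be the image multigraph on vertex set $V'$, obtained by replacing each edge $\{u,v\}$ of $F$ with the edge $\{\ell(u),\ell(v)\}$ (keeping multiplicities). This has exactly $|E|$ edges, and the hypothesis on $\ell$ says that every pair $\{i,j\}$ appears with even multiplicity in $\tilde F$.

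Next, I would form $H$ from $\tilde F$ by halving every edge multiplicity; since the multiplicities are even, this is well defined, and $|E(H)| = |E|/2$. Crucially, $H$ has the same vertex set $V'$ and the same connected components as $\tilde F$, since only multiplicities change, not which pairs are incident.

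For the main counting step, I would apply to each connected component $H_i$ of $H$ the elementary fact that a connected multigraph on $v$ vertices has at least $v-1$ edges, so $|V(H_i)| \le |E(H_i)| + 1$. Summing over the $c(\tilde F)$ components,
\[
|V'| \;=\; \sum_i |V(H_i)| \;\le\; \sum_i |E(H_i)| + c(\tilde F) \;=\; |E(H)| + c(\tilde F) \;=\; \tfrac{1}{2}|E| + c(\tilde F).
\]

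Finally, I would observe that the quotient map $\ell\colon V\to V'$ can only identify vertices within the same component or merge components together, so $c(\tilde F) \le c(F)$, the number of connected components of the original $F$. Combining yields $|V'|\le |E|/2 + c(F)$, which is the claim. There is no serious obstacle here; the only subtlety worth flagging is that the evenness hypothesis is stated globally (over all of $E$) rather than per-component, but this is precisely why one must halve multiplicities after passing to the image multigraph $\tilde F$ rather than attempting to halve edges within each component of $F$ directly.
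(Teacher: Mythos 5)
Your proof is correct and follows essentially the same route as the paper: pass to the quotient multigraph on labels, use the evenness hypothesis to cut the edge count to $|E|/2$ (you halve multiplicities where the paper collapses parallel edges, an immaterial difference), and then apply the bound that a connected multigraph on $v$ vertices has at least $v-1$ edges. The observation that quotienting cannot increase the number of components is also how the paper closes the argument.
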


\begin{proof}
From $F$, we form a new graph $F'$ by identifying all the nodes with
the same label; thus, the number of nodes in $F'$ is the number of
labels in $F$.  We then collapse the parallel edges in $F'$ to form the
graph $H$; since each labelled edge appears at least twice, the number
of edges in $H$ is at most half that in $F$.
The number of nodes in $H$ (and thus labels in $F$) is at most the
number of edges in $H$ plus the number of connected components; this
is tight when $H$ is a forest.  Thus the number of distinct labels in
$F$ is at most $|E|/2 + c$, where $c$ is the number of components in $F$.
\end{proof}

We apply this lemma, as well as simple inductive arguments, to bound the number of contributing terms in $\E[X^k]$ for the matrices $X$ in question, and this allows us to bound their norms.
We give the concentration proofs the following subsection.

\subsection{Proofs of Concentration}
Let $G$ be an instance of $\mathbb{G}(n,\tfrac{1}{2})$.
As in the preceeding sections, define the vector $A_i \in \R^n$ so that $A_i(j) = 1$ if $(i,j) \in E(G)$, $A_i(j) = -1$ if $(i,j) \not \in E(G)$, and $A_i(i) = 0$.
Again as in the preceeding sections, define $a_1,\ldots,a_n\in\R^{\binom{n}{2}}$
by setting $a_i$ to be the restriction of
$A_i^{\tensor 2}$ to coordinates corresponding to unordered pairs,
i.e., $a_{i}(\{c,d\}) = A_{ic}A_{id}$ for all $\{c,d\} \in \binom{n}{2}$.
We will continue to use the notation $W = \Span_{i\in[n]}(a_i)$, and the notation $D_i = \diag(a_i)$.

We begin with a lemma that shows that the $a_i$ are close to an orthogonal basis for $W$.
\begin{lemma}
    \label{lem:w-proj}
    If $P_W \defeq \sum_{i} a_i a_i^T$ then with probability at least $1 - O(n^{-5})$,
    \[
	\tfrac{1}{n^2}(1+o(1))\cdot P_W \succeq \Pi_W \succeq (1-o(1))\tfrac{1}{n^2}\cdot P_W
    \]
\end{lemma}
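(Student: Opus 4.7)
The plan is to recast both sides of the sandwich in terms of the $n\times n$ Gram matrix of the $a_i$. Let $A \in \R^{\binom{n}{2}\times n}$ be the matrix with columns $a_1,\ldots,a_n$, so $P_W = AA^T$, and (on the event that the $a_i$ are linearly independent, which will follow from the computation itself) $\Pi_W = A(A^T A)^{-1}A^T$. On the range $W$ the eigenvalues of $P_W$ are the squared singular values of $A$ while $\Pi_W$ is the identity, and off $W$ both operators vanish. Consequently, the desired two-sided bound is equivalent to the statement that every eigenvalue of $A^T A$ lies in $[(1-o(1))n^2,\,(1+o(1))n^2]$.

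Next, I would compute the entries of $A^T A$. The diagonal is deterministic: $(A^T A)_{ii} = \|a_i\|^2 = (n-1)(n-2) = (1-o(1))n^2$. For $i \ne j$, setting $B_k \defeq A_{ik}A_{jk}$ and $X_{ij} \defeq \sum_{k \notin \{i,j\}} B_k$,
\[
    (A^T A)_{ij} \;=\; \langle a_i, a_j \rangle \;=\; \sum_{k \ne \ell;\; k,\ell \notin \{i,j\}} B_k B_\ell \;=\; X_{ij}^2 - (n-2),
\]
which is a centered degree-$4$ polynomial in the Rademacher variables $\{A_{uv}\}$. Writing $E \defeq A^T A - \diag(A^T A)$, the task reduces to proving $\|E\| = o(n^2)$ with probability $1 - O(n^{-5})$. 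Wigner-type heuristics --- each entry of $E$ has standard deviation $\Theta(n)$ --- suggest the target bound $\|E\| \lesssim n^{3/2}\polylog n$.

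To establish this spectral bound I would invoke the trace power method of \pref{lem:trace-power}. Expanding $\Tr(E^k) = \sum_{i_1,\ldots,i_k} \prod_{p} E_{i_p i_{p+1}}$ and substituting the four-variable expansion of each $E_{i_p i_{p+1}}$ produces a sum over labelled multigraphs on $4k$ edges, and only those with every edge label appearing with even multiplicity contribute in expectation. Using \pref{prop:eulerian} to bound the number of distinct vertex labels by $2k + c$ (with $c$ the number of connected components), and tracking how the component structure interacts with the walk indices $i_1,\ldots,i_k$, I anticipate a bound of the form $\E[\Tr(E^k)] \le n^{(3/2)k + O(1)}(Ck)!$. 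Then \pref{lem:trace-power} yields $\|E\| \lesssim n^{3/2}\polylog n$ with probability $1-O(n^{-5})$, and combined with the diagonal estimate this confines the eigenvalues of $A^T A$ to $(1\pm o(1))n^2$, exactly as required.

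I expect the main obstacle to be the combinatorial accounting inside the trace bound. The centering $E_{ij} = X_{ij}^2 - (n-2)$ is crucial: it removes the ``diagonal'' $k_p = \ell_p$ contribution inside each factor, whose uncancelled sum would only yield the weaker estimate $\|E\| \lesssim n^2$. Correctly exploiting this cancellation, while keeping track of the correlations introduced by walk vertices $i_1,\ldots,i_k$ reappearing in several factors of the product, is the part of the argument that will require the most care.
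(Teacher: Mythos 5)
Your proposal follows essentially the same route as the paper: pass to the $n\times n$ Gram matrix with entries $\iprod{a_i,a_j}$, observe that its diagonal is deterministic and of order $n^2$ (up to a factor of $2$ from the unordered-pair indexing, which both you and the paper elide), and bound the centered off-diagonal part by $O(n^{3/2}\polylog n)$ via the trace power method. The one step you explicitly leave open---improving the generic Eulerian count of $2k+c$ distinct labels down to $\tfrac{3}{2}k+O(1)$, without which one only gets the useless bound $\|E\|\lesssim n^2$---is precisely where the paper's proof does its work: it views each factor $\sum_{p\neq q}A_{ip}A_{iq}A_{jp}A_{jq}$ as a $4$-cycle (``link'') on $\{i,p,j,q\}$ with $i,j$ opposite, and shows by induction on the number of links that a contributing labelling of the resulting closed necklace has at most $3k/2+1$ distinct labels: a peripheral label occurring only once would force its two center neighbors to carry the same label (excluded by the zero diagonal, i.e.\ the centering you correctly identified as crucial), while a center label occurring only once forces its peripheral neighbors to pair up by label, so the link can be contracted and the induction applied. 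Your outline is therefore correct and matches the paper's argument; only that inductive label-counting step remains to be supplied.
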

\begin{proof}
    By definition, the vectors $a_1,\ldots,a_n$ form a basis for the
    subspace $W$.

    Let $\mathcal{R}$ be the matrix whose $i$th row is $a_i$.
    We will use matrix concentration to analyze the eigenvalues of $\cR \cR^T$, which are identical to the nonzero eigenvalues of $P_W = \cR^T \cR$.

    The $(i,j)$th entry of $\cR \cR^T$ is $\iprod{a_i,a_j} = \frac{1}{2} \iprod{A_i^{\tensor 2}, A_j^{\tensor 2}} = \frac{1}{2} \iprod{A_i,A_j}^2$.
    When $i = j$, this is precisely $\tfrac{1}{2}(n-1)^2$, and so $2\cR \cR^T = (n-1)^2 \cdot \Id_n + B$, where $B$ is a matrix that is $0$ on the diagonal and equal to $\iprod{A_i^{\tensor 2}, A_j^{\tensor 2}}$ in the $(i,j)$th entry for $i \neq j$.

    Let $M = B - \E[B] = B - (n-2)(J_n-\Id_n)$.
    We will use the trace power method to prove that $\| M \| = O(n^{3/2})$.
    The $(i,j)$th entry of $M$ is given by $0$ for $i=j$, and when $i\neq j$
\[
	M(i,j) = \iprod{A_i, A_j}^2 - (n-2) = \Paren{\sum_{p,q} A_{ip}A_{iq}
	A_{jp}A_{jq}} - (n-2) = \sum_{p\neq q} A_{ip}A_{iq}A_{jp}A_{jq}.
    \]

 The expression $\Tr(M^k)$ is a sum over monomial products over variables
    $\{A_{ip}\}_{i,p\in[n]}$, where each monomial product corresponds to a
    labelling $\cL:F \to [n]$ of a graph $F$.
    Each entry in $M_{ij}$ corresponds to a sum over {\it links}, where each link is a cycle of length $4$, with the vertices $i,j$ on opposite ends of the cycle, and the necessarily distinct vertices $p,q$ are on the other opposite ends of a cycle.
  We will refer to $i,j$ as the {\it center vertices} and $p,q$ as
    the {\it peripheral} vertices of the link.
Each edge $(u,v)$ of the link is weighted by $A_{uv}$.
    Since $A_{ii} = 0$ for all $i \in [n]$, for every {\it
      contributing labelling}, it can never be the case that one of $p,q = i$.
       Each monomial product in the summation $\Tr(M^k)$ corresponds to a labelling $(F,\cL)$ of the graph $F$,
    where $F$ is a cycle with $k$ links.
    $F$ has $4k$ edges, and in total it has $3k$ vertices.
    \begin{center}
    
\begin{center}
\begin{tikzpicture}[scale = 0.75]
    \def\hspc{2}
    \def\vspc{0.7}
    \def\shift{4}
\draw (\hspc, -\vspc) node[circle,fill=white] (d) {$p$}
--(0,0) node[circle,fill=white] (i) {$i$}
-- (\hspc, \vspc) node[circle,fill=white] (c) {$q$};
\draw
(\shift+ \hspc, -\vspc) node[circle,fill=white] (d2) {$p'$}
--(\shift,0) node[circle,fill=white] (i2) {$j$}
-- (\shift + \hspc, \vspc) node[circle,fill=white] (c2) {$q'$};
\draw (-\shift+ \hspc, -\vspc) node[circle,fill=white] (d0) {$p''$}
--(-\shift,0) node[circle,fill=white] (i0) {};
\draw (i0) -- (-\shift + \hspc, \vspc) node[circle,fill=white] (c0) {$q''$};
\draw (c0)--(i);
\draw (d0)--(i);
\draw (c)--(i2);
\draw (d)--(i2);
\draw[very thick] (-.2*\hspc,- \vspc - 0.5)--(-.1*\hspc ,- \vspc - 0.7) -- (1.1*\hspc + \hspc,-\vspc-0.7) -- (1.2*\hspc + \hspc ,- \vspc - 0.5);

\draw
(c2)--
(\shift +2*\hspc,0) node[circle,fill=white] (c3) {}
-- (d2);
\end{tikzpicture}
\end{center}

\end{center}

    The quantity $\Tr(M^k)$ is equal to the sum over all labellings of $F$.
    Taking the expectation, terms in $\E[\Tr(M^k)]$ which contain a variable $A_{uv}$ with multiplicity 1 have expectation $0$.
    Thus, $\E[\Tr(M^k)]$ is equal to the number of labellings of $F$ in which every edge appears an even number of times.

    We prove that any such contributing labelling $\cL:F \to [n]$ has at most $3k/2 + 1$ unique vertex labels.
    We proceed by induction on $k$, the length of the cycle.
    In the base case, we have a cycle on two links; by inspection no such cycle can have more than $5$ labels, and the base case holds.

    Now, consider a cycle of length $k$.
    If every label appears twice, then we are done, since there are
    $3k$ vertices in $F$.
    Thus there must be a vertex that appears only once.

    There can be no {\it peripheral} vertex whose label does not repeat, since the two center vertices neighboring a single peripheral vertex cannot have the same label in a contributing term, as $M(i,i) = 0$.
    Now, if there exists a {\it center vertex} $i$ whose label does not repeat, it must be that there is a matching between its $p,q$ neighbors so that every vertex is matched to a vertex of the same label; we identify these same-label vertices and remove $i$ and two of its neighbors from the graph, leaving us with a cycle of length $k-1$, having removed at most one label from the graph.
    The induction hypothesis now applies, and we have a total of at most $3(k-1)/2 + 2 \le 3k/2 +1$ labels, as desired.

    Thus, there are at most $3k/2 + 1$ unique labels in any contributing term of $\E[\Tr(M^k)]$.
    We may thus conclude that $\E[\Tr(M^k)] \le n^{3k/2 + 1} \cdot (3k/2 + 1)^{3k}$, and applying \pref{lem:trace-power}, we have that $\|M\| \lesssim \barn^{3/2}$ with probability at least $1 - O(n^{-5})$.

    Therefore, $2 \cR \cR^T = ((n-1)^2 - n + 2) \Id_n + (n-2)J_n + M$, and we
    may conclude that all eigenvalues of $\cR \cR^T$ are $(1\pm o(1))\cdot n^2$, which
    implies the same of $P_W = \cR^T \cR$.
    Since the range of $P_W$ and $\Pi_W$ is the same, we finally have that with probability $1-o(1)$
    \[
	(1+o(1))/n^2 \cdot P_W \succeq \Pi_W \succeq (1-o(1))/n^2 \cdot P_W,
    \]
    as desired.
\end{proof}

The following lemma allows us to approximate the projector to $V_0 \cup V_1$by a matrix that is easy to describe; we will use this matrix as an approximation to the projector in later proofs.
\begin{lemma}
    \label{lem:P01}
    Let $\Pi_{01}$ be the projection to the vector space $V_0 \cup V_1$.
    Let $P_{01}\in\R^{\binom{n}{2},\binom{n}{2}}$ be a matrix defined as follows:
    \[
	P_{01}(ab,cd) = \begin{cases}
	    \tfrac{2}{n-1} & |\{a,b,c,d\}| = 2\\
	    \tfrac{1}{n-1} & |\{a,b,c,d\}| = 3\\
	    0 & |\{a,b,c,d\}| = 4.
	\end{cases}
    \]
    Then
    \[
	\Pi_{01} \succeq P_{01} \succeq (\tfrac{n-2}{n-1})\cdot \Pi_{01},
    \]
\end{lemma}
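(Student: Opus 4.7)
My strategy is to recognize $P_{01}$ as a simple multiple of $T T^{\top}$, where $T \colon \R^n \to \R^{\binom{n}{2}}$ is the ``pair-sum'' operator defined by $(Tx)_{\{i,j\}} \defeq x_i + x_j$, whose range is exactly $V_0 \oplus V_1$. Since $\Pi_{01}$ is the orthogonal projector onto this range, we have the standard identity $\Pi_{01} = T(T^{\top}T)^{-1}T^{\top}$, so the entire claim will reduce to a routine spectral calculation on an $n$-dimensional space.

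The first step is to verify entrywise that $P_{01} = \tfrac{1}{n-1}\, T T^{\top}$. The value $(TT^{\top})(\{a,b\},\{c,d\})$ counts the number of index matches between the two pairs: it equals $2$ when $\{a,b\}=\{c,d\}$, equals $1$ when $|\{a,b\}\cap\{c,d\}|=1$, and equals $0$ when the pairs are disjoint. After dividing by $n-1$ this reproduces exactly the defining values of $P_{01}$.

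The second step is to diagonalize $T^{\top} T$. A direct computation gives $(T^{\top} T x)_i = (n-2) x_i + \sum_j x_j$, so $T^{\top} T = (n-2) I_n + J_n$, whose spectrum consists of the eigenvalue $2(n-1)$ on $\Span\{\ovec_n\}$ and the eigenvalue $n-2$ on $\ovec_n^{\perp}$. Because $T$ sends $\Span\{\ovec_n\}$ onto $V_0$ (since $T\ovec_n = 2\ovec$) and $\ovec_n^{\perp}$ onto $V_1$ (this is essentially the definition of $V_1$), the nonzero eigenspaces of $T T^{\top}$ are precisely $V_0$ and $V_1$, with the same eigenvalues. Dividing by $n-1$ yields a clean spectral decomposition of $P_{01}$ on $V_0 \oplus V_1$, while $P_{01}$ annihilates the orthogonal complement.

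The final step is to compare these eigenvalues against those of $\Pi_{01} = \Pi_{V_0} + \Pi_{V_1}$. Since $P_{01}$ and $\Pi_{01}$ are simultaneously diagonal in any orthonormal basis adapted to $V_0$, $V_1$, and $(V_0 \oplus V_1)^{\perp}$, the PSD sandwich reduces to a pair of scalar inequalities on each subspace, which follow immediately from the computed eigenvalues. The only step demanding real care is the entrywise identification in the first step, where the unordered-pair conventions on $\binom{n}{2}$-indexed matrices must be handled consistently (particularly on the diagonal, where the ``two matches'' contribution is easy to miscount); once $P_{01} = \tfrac{1}{n-1}\, T T^{\top}$ is in hand, the rest is elementary linear algebra.
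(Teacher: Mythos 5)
Your route is essentially the paper's: the paper writes $P_{01} = \sum_{i\in[n]} v_i v_i^\top$ with $v_i$ the normalized indicator vector of the pairs containing $i$, which is exactly $\tfrac{1}{n-1}TT^\top$ in your notation, and it likewise passes to the $n\times n$ Gram matrix ($U^\top U = \tfrac{n-2}{n-1}\Id_n + \tfrac{1}{n-1}J_n$, i.e.\ $\tfrac{1}{n-1}T^\top T$) to read off the nonzero spectrum. Your entrywise identification $P_{01}=\tfrac{1}{n-1}TT^\top$, the computation $T^\top T=(n-2)\Id_n+J_n$, and the identification of the nonzero eigenspaces of $TT^\top$ with $V_0$ and $V_1$ are all correct.

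The gap is in your last step, where you claim the two scalar comparisons ``follow immediately from the computed eigenvalues.'' They do not: your own computation gives
\[
P_{01} \;=\; \tfrac{1}{n-1}\,TT^\top \;=\; \tfrac{2(n-1)}{n-1}\,\Pi_0 \;+\; \tfrac{n-2}{n-1}\,\Pi_1 \;=\; 2\,\Pi_0 + \tfrac{n-2}{n-1}\,\Pi_1,
\]
so $P_{01}$ has eigenvalue $2$ on $V_0$ (concretely, every row of $P_{01}$ sums to $\tfrac{2}{n-1}+2(n-2)\cdot\tfrac{1}{n-1}=2$, so $P_{01}\ovec=2\ovec$), and the upper half $\Pi_{01}\succeq P_{01}$ of the stated sandwich is false; what your argument actually establishes is $2\,\Pi_{01}\succeq P_{01}\succeq \tfrac{n-2}{n-1}\,\Pi_{01}$. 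To be fair, the paper's own proof makes the same leap---it asserts $\Pi_{01}\succeq\sum_i v_iv_i^\top$ from the equal-column-space observation even though the top eigenvalue of $U^\top U$ is $\tfrac{n-2}{n-1}+\tfrac{n}{n-1}=2$---so the defect is really in the statement of \pref{lem:P01}. It is harmless downstream: only the lower-bound direction (equivalently $\Pi_{01}\preceq\tfrac{n-1}{n-2}P_{01}$) is what gets used in the trace-power argument of \pref{lem:01w} (and hence in \pref{lem:W2}), where constant factors are irrelevant. So you should either prove only the lower bound, or prove the corrected sandwich with the extra factor $2$; as written, your final sentence asserts an inequality that your own spectral decomposition refutes.
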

\begin{proof}
    We will write down a basis for $V_0\cup V_1$, take a summation over its outer products, and then argue that this summation approximates $\Pi_{01}$.
    The vectors $v_1,\ldots,v_{n} \in \R^{\binom{n}{2}}$ are a basis for $V_1 \cup V_0$:
    \[
	v_i(a,b) = \begin{cases}
	    \frac{1}{\sqrt{n-1}} & \{a,b\} = \{i,\cdot\}\\
	    0 & \text{otherwise.}
	\end{cases}
    \]
    For any two $v_i,v_j$, we have $\iprod{v_i,v_j} = \tfrac{1}{n-1}$.
    Let $U \in \R^{n^2 \times n}$ be the matrix whose $i$th column is given by $v_i$.
    Notice that the eigenvalues of $\sum_{i} v_i v_i^T = UU^T$ are equal to the eigenvalues of $U^TU$, and that $U^TU = \tfrac{1}{n-1} J_n + \tfrac{n-2}{n-1}\Id_n$.
    Therefore,
    as both matrices have the same column and row spaces,
    \[
	\Pi_{01} \succeq \sum_{i} v_i v_i^T \succeq \tfrac{n-2}{n-1}\Pi_{01},
    \]
    Now, let $P_{01} = \sum_{i} v_i v_i^T$; we can explicitly calculate the entries of $P_{01}$,
    \[
	P_{01}(ab,cd) = \begin{cases}
	    \tfrac{2}{n-1} & |\{a,b,c,d\}| = 2\\
	    \tfrac{1}{n-1} & |\{a,b,c,d\}| = 3\\
	    0 & |\{a,b,c,d\}| = 4.
	\end{cases}
    \]
    The conclusion follows.
\end{proof}

We will require the fact that $W$ lies mostly outside of $V_0 \cup V_1$, which we prove in the following lemma.
\begin{lemma}
    \label{lem:01w}
    With probability at least $1 - O(n^{-\omega(\log n)})$,
    \[
	\|\Pi_{01}\Pi_W\Pi_{01}\| \le O\Paren{\frac{\log^2 n}{n}}.
    \]
\end{lemma}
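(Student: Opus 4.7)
The plan is to reduce the statement to an operator-norm bound on an explicit $n\times n$ Gram matrix, which I can then control via standard concentration facts about the centered adjacency matrix $A$ and the row sums $\sigma_i \defeq \sum_k A_{ik}$.

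First, \pref{lem:w-proj} gives $\Pi_W \preceq (1+o(1))\cdot \tfrac{1}{n^2} P_W$ where $P_W \defeq \sum_i a_i a_i^\top$, so it suffices to establish $\|\Pi_{01} P_W \Pi_{01}\| = O(n \log^2 n)$. By \pref{lem:P01}, I can write $\Pi_{01} = P_{01} + E$ with $P_{01} = V V^\top$ (where $V$ has columns $v_k$) and $0 \preceq E \preceq \tfrac{1}{n-1}\Id$. Expanding $\Pi_{01} P_W \Pi_{01}$ and using $\|P_W\| \le (1+o(1))n^2$ from \pref{lem:w-proj} shows the $E$-cross-terms contribute only $O(n)$, so it remains to bound $\|P_{01} P_W P_{01}\|$.

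Next, with $R$ the matrix whose rows are the $a_i$ (so that $P_W = R^\top R$), and using $V^\top V = \tfrac{n-2}{n-1}\Id + \tfrac{1}{n-1} J_n$, I bound
\[
\|P_{01} P_W P_{01}\| \;\le\; \tfrac{n-2}{n-1}\|(RV)(RV)^\top\| \;+\; \tfrac{1}{n-1}\|RV\ovec\|^2.
\]
A direct computation yields $(RV\ovec)_i = (\sigma_i^2 - (n-1))/\sqrt{n-1}$, so Hoeffding tail bounds on $\sigma_i^2-(n-1)$ together with a union bound over $i$ reduce the second summand to a lower-order $O(\log^2 n)$. For the main term, define $u_i \defeq \sigma_i A_i - \ovec \in \R^n$ and let $U$ be the matrix with rows $u_i$, so that $U = \diag(\sigma)A - J_n$. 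A direct entrywise expansion of $\iprod{V^\top a_i, V^\top a_j}$ will show
\[
    (n-1)\cdot (RV)(RV)^\top \;=\; UU^\top \;+\; T \;-\; 2 J_n \;+\; \Id,
\]
where $T \defeq A\diag(\sigma) + \diag(\sigma) A$. Since $\|U\| \le \max_i |\sigma_i| \cdot \|A\| + n$, combining the Wigner bound $\|A\| = O(\sqrt{n})$ with the tail bound $\max_i|\sigma_i| = O(\sqrt{n}\log n)$ yields $\|UU^\top\| = O(n^2\log^2 n)$, while the $T$ and $J_n$ contributions are lower order. Chaining back through the reductions gives the claimed $\|\Pi_{01}\Pi_W\Pi_{01}\| = O(\log^2 n / n)$.

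The main obstacle will be achieving the required failure probability of $n^{-\omega(\log n)}$ for each of the tail bounds: $\|A\| = O(\sqrt{n})$ via the standard $\epsilon$-net/trace-moment argument, and $\max_i|\sigma_i| = O(\sqrt{n}\log n)$ via Hoeffding plus a union bound. Polynomial-probability versions of these bounds would already suffice for the $O(\log n/n)$ version of the statement, but the super-polynomial confidence requires slightly sharper large-deviation estimates.
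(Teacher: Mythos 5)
Your proposal is correct in outline but takes a genuinely different route from the paper. Both proofs start the same way, replacing $\Pi_W$ by $\tfrac{1+o(1)}{n^2}P_W$ (\pref{lem:w-proj}) and $\Pi_{01}$ by the explicit $P_{01}$ (\pref{lem:P01}); but from there the paper bounds $\E\bigl[\Tr\bigl((P_{01}P_W)^k\bigr)\bigr]$ by the trace power method, counting contributing labellings of a cycle of $k$ stars via \pref{prop:eulerian} and invoking \pref{lem:trace-power}. You instead factor $P_{01}=VV^\top$ and $P_W=R^\top R$ and reduce everything to the explicit $n\times n$ matrix $RV$, whose closed form $\sqrt{n-1}\,RV=\diag(\sigma)A-J_n+\Id$ you compute correctly (as is the identity $(n-1)(RV)(RV)^\top=UU^\top+T-2J_n+\Id$ and $(RV\ovec)_i=(\sigma_i^2-(n-1))/\sqrt{n-1}$). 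The bound then follows from $\|A\|=O(\sqrt n)$ and $\max_i|\sigma_i|=O(\sqrt n\log n)$, both of which indeed hold with failure probability $e^{-\Omega(\log^2 n)}$ or better, so the probability requirement you flag as the main obstacle is not a real one. Your route avoids bespoke path-counting and makes the source of the $\log^2 n$ factor transparent (it is essentially $\max_i\sigma_i^2/n$); the paper's method is less structure-dependent and matches the machinery it uses for its other concentration lemmas.

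One quantitative slip to fix: the second summand $\tfrac{1}{n-1}\|RV\ovec\|^2=\tfrac{1}{(n-1)^2}\sum_i\bigl(\sigma_i^2-(n-1)\bigr)^2$ is \emph{not} $O(\log^2 n)$. Since $\Var(\sigma_i^2)\approx 2n^2$, its expectation is already $\Theta(n)$, and the union-bound-over-$i$ argument you propose only yields $O(n\log^4 n)$. This does not sink the proof, because the target for $\|P_{01}P_WP_{01}\|$ is $O(n\log^2 n)$, so an $O(n\cdot\mathrm{polylog}\,n)$ contribution here is tolerable; but as written your chain of reductions gives $\|\Pi_{01}\Pi_W\Pi_{01}\|=O(\log^4 n/n)$ rather than the stated $O(\log^2 n/n)$. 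To literally match the lemma you would need a slightly sharper bound on this term (e.g.\ truncation plus Bernstein to show $\sum_i(\sigma_i^2-(n-1))^2=O(n^3)$ with superpolynomially small failure probability), or simply note that the extra $\log^2 n$ factor is harmless for every downstream use of the lemma, since the paper does not optimize logarithmic factors.
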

\begin{proof}
    Call $M = \Pi_{01} \Pi_W \Pi_{01}$.
    We will apply the trace power method to $M$.
    By \pref{lem:P01} and \pref{lem:w-proj}, we may exchange $\Pi_W$ for $\frac{(1 + o(1))}{n^2} \sum_i a_i a_i^T$ and $\Pi_{01}$ for $P_{01}$.
    Letting $M'^k = (\tfrac{(1+o(1))}{n^2} P_{01} P_W )^k$, we have by the cyclic property of the trace that $\E[\Tr(M^k)] \le \Tr(M'^k)$.

    We consider the expression for $\E[\Tr(M'^k)]$.
    Let a \emph{chain} consists of a set of quadruples $\{a_\ell,b_\ell,c_\ell,d_\ell\}_{\ell\in[k]} \in [n]^4$ such that for each $\ell \in [k]$, we have $\|\{ a_\ell, b_\ell\} \cap \{c_{\ell-1},d_{\ell-1}\}| \ge 1$ (where we identify $a_\ell$ with $a_{\ell \mod k}$).
    Let $\cC_k$ denote the set of all chains of size $k$.
    We have that,
    $$
    \Tr(M^k) \le
    \Tr(M'^k)
    =
    \sum_{i_1,\ldots,i_k} \sum_{\{a_\ell,b_\ell,c_\ell,d_\ell\}_{\ell \in [k]} \in \cC_k} \prod_{\ell = 1}^k \frac{1 + o(1)}{n^2} \cdot r_\ell\cdot  A_{i_\ell, a_\ell} A_{i_\ell, b_\ell} A_{i_\ell, c_{\ell}} A_{i_\ell,d_{\ell}},
    $$
    where $r_\ell = \tfrac{1}{n-1}$ or $\tfrac{2}{n-1}$ depending on whether one or both of $a_\ell, b_\ell$ are common with the following link in the chain.
    The quantity $\Tr(M^k)$ consists of cycles of $k$ links, each link is a star on $4$ outer vertices $a_\ell, b_\ell, c_\ell, d_\ell$ with center vertex $i_\ell$, and the non-central vertices of the link must have at least one vertex in common with the next link, so each link has $4$ edges and the cycle is a connected graph.
    See the figure below for an illustration (dashed lines indicate vertex equality, and are not edges).
    \begin{center}
    \begin{tikzpicture}[scale = 0.75]
    \def\hspc{2}
    \def\vspc{0.7}
    \def\shift{6}

\draw (\hspc, -\vspc) node[circle,fill=white] (d) {$d_\ell$}
--(0,0) node[circle,fill=white] (i) {$i_\ell$}
-- (\hspc, \vspc) node[circle,fill=white] (c) {$c_\ell$};
\draw (-\hspc, -\vspc) node[circle,fill=white] (b) {$b_\ell$}
-- (i)
-- (-\hspc, \vspc) node[circle,fill=white] (a) {$a_\ell$};
\draw (\shift+ \hspc, -\vspc) node[circle,fill=white] (d2) {$d_{\ell+1}$}
--(\shift,0) node[circle,fill=white] (i2) {$i_{\ell+1}$}
-- (\shift + \hspc, \vspc) node[circle,fill=white] (c2) {$c_{\ell+1}$};
\draw (\shift-\hspc, -\vspc) node[circle,fill=white] (b2) {$b_{\ell+1}$}
-- (i2)
-- (\shift -\hspc, \vspc) node[circle,fill=white] (a2) {$a_{\ell+1}$};
\draw[dashed] (c) -- (a2);
\draw[dashed] (c2) -- (\shift + \hspc + 1,\vspc);
\draw[dashed] (d2) -- (\shift + \hspc + 1,-\vspc);
\draw (-\shift+ \hspc, -\vspc) node[circle,fill=white] (d0) {$d_{\ell-1}$}
--(-\shift,0) node[circle,fill=white] (i0) {$i_{\ell-1}$}
-- (-\shift + \hspc, \vspc) node[circle,fill=white] (c0) {$c_{\ell-1}$};
\draw (-\shift-\hspc, -\vspc) node[circle,fill=white] (b0) {$b_{\ell-1}$}
-- (i0)
-- (-\shift -\hspc, \vspc) node[circle,fill=white] (a0) {$a_{\ell-1}$};
\draw[dashed] (a) -- (d0);
\draw[dashed] (b0) -- (-\shift - \hspc - 1,-\vspc);
\draw[very thick] (-1.2*\hspc,- \vspc - 0.5)--(-1.1*\hspc,- \vspc - 0.7) -- (1.1*\hspc,-\vspc-0.7) -- (1.2*\hspc,- \vspc - 0.5);
\end{tikzpicture}

\end{center}

    Each term in the product has a factor of at most $\tfrac{2(1+o(1))}{n^3}$, due to the scaling of the entries of $P_{01}$ and $P_W$.
    Thus we have
    $$
    \E[\Tr(M'^k)] \le \left(\frac{3}{n^3}\right)^k \sum_{i_1,\ldots, i_k} \sum_{\{a_\ell, b_\ell, c_\ell, d_\ell\}_{\ell\in[k]} \in \cC_n} \E\left[\prod_{\ell = 1}^k A_{i_\ell, a_\ell} A_{i_\ell, b_\ell} A_{i_\ell, c_\ell} A_{i_\ell, d_\ell}\right].
    $$
    The only contributing terms correspond to those for which every edge variable in the product has even multiplicity.
    Each contributing term is a connected graph and has $4k$ edges and at most $5k$ vertices where every labeled edge appears twice, so we may apply \pref{prop:eulerian} to conclude that there are at most $2k+1$ labels in any such cycle.
    We thus have that
    \[
	\E[\Tr(M'^k)] \le \Paren{\frac{3}{n^3}}^{k} \cdot n^{2k + 1} \cdot (5k)!,
    \]
    and applying \pref{lem:trace-power}, we conclude that $\|M\| \lesssim \frac{\log^2 n}{n}$ with probability $1 - O(n^{-\omega(\log n)})$, as desired.
\end{proof}

We combine the above lemmas to bound the norm of one final matrix that arises in the computations in \pref{thm:psd}.
\begin{lemma}\label{lem:W2}
    With probability $1-O(n^{-5})$,
    \[
	\sum_w D_w \Pi_2 \Pi_W \Pi_2 D_w \preceq O(n)\cdot \Pi_0 + O(\log^2 n)\cdot \Id_n.
    \]
\end{lemma}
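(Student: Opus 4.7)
The strategy is to use \pref{lem:w-proj} to replace $\Pi_W$ with the explicit approximation $\frac{1+o(1)}{n^2}P_W$ where $P_W = \sum_i a_i a_i^\top$, and then split the resulting expression along whether $w=i$ or $w \ne i$: the diagonal part will produce the $O(n)\Pi_0$ term, and the off-diagonal part will produce the $O(\log^2 n)\Id$ term. Concretely, sandwiching by $\Pi_2 D_w$ and summing yields
\[
\sum_w D_w \Pi_2 \Pi_W \Pi_2 D_w \preceq \frac{1+o(1)}{n^2} \sum_{w,i} (D_w \Pi_2 a_i)(D_w \Pi_2 a_i)^\top.
\]

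For the diagonal terms ($w = i$), a direct computation shows $D_w a_w = \chi_w$ where $\chi_w(\{j,k\}) = \Ind[w \notin \{j,k\}]$; equivalently, $\chi_w = \ovec - \delta_w$ with $\delta_w = \sum_{j \ne w} e_{\{w,j\}}$. Writing $D_w \Pi_2 a_w = \chi_w - D_w \Pi_{01} a_w$ and using $(X-Y)(X-Y)^\top \preceq 2XX^\top + 2YY^\top$ reduces the task to bounding $\sum_w \chi_w \chi_w^\top$ and $\sum_w (D_w \Pi_{01}a_w)(D_w \Pi_{01}a_w)^\top$. The first is explicit: its $(A,B)$-entry equals $n - |A \cup B|$, so $\sum_w \chi_w \chi_w^\top = (n-4) J_{\binom{n}{2}} + (n-1)P_{01}$; combining with $J \preceq \binom{n}{2}\Pi_0$ and $P_{01} \preceq \Id$ from \pref{lem:P01}, this gives $O(n^3)\Pi_0 + O(n)\Id$, contributing $O(n)\Pi_0 + O(1)\Id$ after $1/n^2$-normalization. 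For the second, \pref{lem:01w} gives $\|\Pi_{01}\Pi_W\Pi_{01}\| \le O(\log^2 n / n)$, so combined with $P_W \preceq (1+o(1))n^2 \Pi_W$ from \pref{lem:w-proj} we obtain $\|\Pi_{01}P_W\Pi_{01}\| \le O(n \log^2 n)$, and the operator-norm bound $\sum_w D_w \Pi_{01} P_W \Pi_{01} D_w \preceq \|\Pi_{01}P_W\Pi_{01}\|\sum_w D_w^2 \preceq O(n \log^2 n) \cdot (n-2)\Id = O(n^2 \log^2 n)\Id$ upper-bounds the diagonal sum and yields the claimed $O(\log^2 n)\Id$ contribution after normalization.

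For the off-diagonal terms ($w \ne i$), I apply the trace power method to
\[
S \;=\; \sum_{w \ne i}(D_w a_i)(D_w a_i)^\top \;=\; \sum_w D_w P_W D_w - \sum_w \chi_w\chi_w^\top,
\]
where the $\Pi_{01}$ corrections from $\Pi_2 a_i = a_i - \Pi_{01}a_i$ are absorbed by the very same operator-norm bound used in the diagonal case. A short computation using independence of the $A_{uv}$'s shows $\E[S] = (n-2)(n-3)\Id$, which already contributes only $O(1)\Id$ after normalization. To bound $\|S - \E[S]\|$, I expand $\E[\Tr((S-\E[S])^k)]$ as a sum over closed labeled walks of length $k$: each ``link'' is a double star with two centers $w_\ell, i_\ell$ and four peripheral vertices shared with the adjacent links, contributing $8$ edges per link. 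The centering removes the configurations in which all $8$ edges of a link pair only internally (these are exactly what produced $\E[S]$). Applying \pref{prop:eulerian} to the contracted multigraphs that survive, together with a careful case analysis of how the remaining edges pair across consecutive links, I expect $\E[\Tr((S-\E[S])^k)] \le n^{2k + O(1)}(Ck)!$, so that \pref{lem:trace-power} yields $\|S - \E[S]\| \le O(n^2 \log^2 n)$ and hence a contribution of $O(\log^2 n)\Id$ after $1/n^2$-normalization.

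The main obstacle will be this last trace-moment bound for $S - \E[S]$: a naive application of \pref{prop:eulerian} yields at most $|V(F)| \le 4k$ distinct labels per walk, giving $n^{4k}$ contributing terms, which is much too large. Extracting the required factor-of-$n^{2k}$ savings requires the centering to rule out the dominant ``self-pairing'' configurations within each link, after which the requirement $w_\ell \ne i_\ell$ together with the outer-index adjacency constraints between consecutive links must force additional label coincidences across the walk. This parallels, but substantially refines, the combinatorics used to prove \pref{lem:w-proj} and \pref{lem:01w}, where the per-link structure was simpler.
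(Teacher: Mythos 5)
Your reduction is structurally sound and, up to the substituted decomposition, parallels the paper: replacing $\Pi_W$ by $\tfrac{1+o(1)}{n^2}P_W$ (\pref{lem:w-proj}), absorbing the $\Pi_{01}$ corrections via $\|\Pi_{01}P_W\Pi_{01}\|\lesssim n\log^2 n$ (\pref{lem:01w}), and your explicit computation of the $w=i$ part ($\sum_w\chi_w\chi_w^\top=(n-4)J+(n-1)P_{01}$, giving the $O(n)\Pi_0$ term) are all correct; this diagonal piece is exactly the $\tfrac1n J$ term in the paper's proof. The genuine gap is the off-diagonal piece $S=\sum_{w\neq i}(D_wa_i)(D_wa_i)^\top$: the entire lemma hinges on $\|S\|\lesssim n^2\log^2 n$, and you only state that you ``expect'' the moment bound $\E[\Tr((S-\E[S])^k)]\le n^{2k+O(1)}(Ck)!$, while yourself identifying the factor-of-$n^{2k}$ savings over the naive count as an unresolved obstacle. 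As written, the central estimate is a conjecture, not a proof, and carrying out that walk-counting argument (links with two centers and eight edges, centering removing self-pairings) would amount to reproving a Wigner-type bound from scratch.

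The paper closes precisely this step without any new trace computation: the $(ab,cd)$ entry of the $w\neq i$ sum is $\sum_{w\neq i}A_{wa}A_{wb}A_{wc}A_{wd}\,A_{ia}A_{ib}A_{ic}A_{id}$, which is (twice) the Gram matrix $BB^\top$ of the matrix $B(ab,\{i,j\})=A_{ia}A_{ib}A_{ja}A_{jb}$. Since $A_{ii}=0$, $B$ coincides with the matrix $J_{4,1}$ of Deshpande--Montanari (up to the $\alpha_4$ scaling), so \pref{lem:wigner} gives $\|B\|\lesssim\barn$ with probability $1-O(n^{-5})$, hence $S\preceq 2\|B\|^2\cdot\Id\lesssim n^2\log^2 n\cdot\Id$ directly; note also that $S\succeq 0$ as a Gram matrix, so your centering at $\E[S]=(n-2)(n-3)\Id$ is unnecessary for a one-sided bound. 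If you replace your conjectured trace-moment estimate with this identification, your argument becomes a complete and essentially equivalent proof (the paper handles the $\Pi_{01}$ cross terms by Cauchy--Schwarz on the quadratic form rather than your $2XX^\top+2YY^\top$ inequality, but that difference is immaterial).
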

\begin{proof}
    We begin by replacing $\Pi_2$ with $(1-\Pi_{01})$, as by \pref{lem:01w}, $\Pi_{01}$ can be replaced by $P_{01}$ which has a convenient form.
    For any vector $x \in \R^{\binom{n}{2}}$,
    \begin{align*}
	x^T\left(\sum_i D_i \Pi_2 \Pi_W \Pi_2 D_i\right)x
	&= x^T\Paren{\sum_i D_i \Pi_W  D_i}x - 2x^T\Paren{\sum_i D_i \Pi_{01}\Pi_W  D_i}x + x^T\Paren{\sum_i D_i \Pi_{01} \Pi_W \Pi_{01}D_i}x\\
	&\le \sum_{i} \left(\|\Pi_W D_i x\|^2 + 2\|\Pi_W\Pi_{01} D_i x\|\cdot \|\Pi_W D_i x\| + \|\Pi_W\Pi_{01}D_i x\|^2\right)\\
	&\le 2 x^T\Paren{\sum_{i} D_i \Pi_W D_i} x + 2\Paren{\sum_i (D_ix)^T \Pi_{01} \Pi_W \Pi_{01} D_ix}\\
	&\le 2 x^T\Paren{\sum_{i} D_i \Pi_W D_i}x +2 n \left\| \Pi_{01} \Pi_W\Pi_{01}\right\| \cdot \|x\|^2,
    \end{align*}
    where to obtain the second line we have applied Cauchy-Schwarz, to obtain the third line we have used the fact that $a^2 + b^2 \ge 2ab$, and to obtain the final line we have used the fact that $\|D_i x\| = \|x\|$.

    Now,  the second term is $O(\log^2 n)\cdot \|x\|^2$ with overwhelming probability by \pref{lem:01w}.
    It remains to bound the first term.
    To this end, we apply \pref{lem:w-proj} to replace $\Pi_W$ with $\tfrac{1+o(1)}{n^2}\cdot P_W =\tfrac{1+o(1)}{n^2}\cdot \sum_{i} a_i a_i^T$.
Let $M = \frac{1}{n^2}\cdot\sum_{i} D_i P_W D_i$. An entry of $M$ has the form
    \[
	M(ab,cd) = \frac{1}{n^2}\Paren{ n + \sum_{i \neq j} A_{ia}A_{ib} A_{ic} A_{id} A_{ja} A_{jb} A_{jc} A_{jb}}.
    \]
    Thus we can see that $M = \tfrac{1}{n} J_{\binom{n}{2}} + \tfrac{1+o(1)}{n^2} BB^\top$, where $J_{\binom{n}{2}}$ is the all-ones matrix in $\R^{\binom{n}{2}\times \binom{n}{2}}$ and $B$ is the matrix whose entries have the form
    \[
	B(ab,ij) =  A_{ia} A_{ib} A_{ja} A_{jb}.
    \]
    The matrix $B$ is actually equal to the matrix $J_{4,1}$ from \cite{DeshpandeM15}, and by \pref{lem:wigner} has $\|B\|\lesssim \barn $ with probability $1-O(n^{-5})$.
    We can thus conclude that with probability $1-O(n^{-5})$,
    $\|M - \tfrac{1}{n} J_{\binom{n}{2}}\| \le \tfrac{1 + o(1)}{n^2}\|B\|^2 \le \tilde O(1)$,
    and so $x^T M x \le \tfrac{1+o(1)}{n} \iprod{x,\ovec_{\binom{n}{2}}}^2 + x^T(M-n^{-1}J)x \le O(n) \cdot \|\Pi_0 x\|^2 + \tilde{O}(1)\cdot\|x\|^2$, which gives the desired result.
\end{proof}

\section*{Acknowledgements}
We thank Satish Rao for many helpful conversations. \hfill

We also greatfully acknowledge the comments of anonymous reviewers in helping us improve the manuscript.
\addreferencesection
\bibliographystyle{amsalpha}
\bibliography{writeup}

\appendix
\section{Matrix Norm Bounds from Deshpande and Montanari}\label{app:dm-bounds}
In this appendix, we give for completeness a list of the bounds proven by Deshpande and Montanari \cite{DeshpandeM15} that were not included in the body above out of space or expository considerations.

\begin{definition}\label{def:Js}

    Let $A = \{a,b\} \subset [n] $ be disjoint from $B = \{c,d\}\subset [n]$.
    For $\eta \in \{1,\ldots,4\}$ and for $\nu(\eta) \in [\binom{4}{\eta}]$ we define the matrices $\tilde J'_{\eta,\nu(\eta)}$ as follows:
	\[
    \begin{array}{llll}
{\tilde J'}_{1,1}(A,B) = A_{ac}
&
{\tilde J'}_{2,1}(A,B) = A_{ac}A_{bd}
&
{\tilde J'}_{3,1}(A,B) = A_{ac}A_{ad}A_{bc}
&
{\tilde J'}_{4,1}(A,B) = A_{ac} A_{ad} A_{bc} A_{bd}
\\
{\tilde J'}_{1,2}(A,B) = A_{ad}
    &
{\tilde J'}_{2,2}(A,B) = A_{ac}A_{bc}
    &
{\tilde J'}_{3,2}(A,B) = A_{ac}A_{bc}A_{bd}
&\\
{\tilde J'}_{1,3}(A,B) = A_{bc}
&
{\tilde J'}_{2,3}(A,B) = A_{ac}A_{ad}
&
{\tilde J'}_{3,3}(A,B) = A_{ac}A_{ad}A_{bd}
&\\
{\tilde J'}_{1,4}(A,B) = A_{bd}
&
{\tilde J'}_{2,4}(A,B) = A_{ad}A_{bd}
&
{\tilde J'}_{3,4}(A,B) = A_{ad}A_{bc}A_{bd}
&\\
&
{\tilde J'}_{2,5}(A,B) = A_{bc}A_{bd}
&&\\
&
{\tilde J'}_{2,6}(A,B) = A_{ad}A_{bc}
&&
\end{array}
\]
And further, letting $\cP: \R^{\binom{n}{2} \times \binom{n}{2}}$ to be the matrix projector such that
\[
    (\cP M)_{A,B} = \begin{cases}
	M_{A,B} & |A \cup B| = 4\\
	0 &\text{otherwise}\mcom
    \end{cases}
\]
we define $J'_{\eta,nu} = \cP J_{\eta,\nu}$,
and finally we define $J_{\eta,\nu} \defeq \tfrac{2^\eta}{16}\alpha_4 \cdot J_{\eta,\nu}'$ (as in Deshpande and Montanari), so that $Q = \sum_{\eta=1}^4\sum_{\nu=1}^{\binom{4}{\eta}} J_{\eta,\nu}$.

\end{definition}

Notice that since we have defined $A_{ii} = 0$ and since $|\{a,b\}| = |\{c,d\}| = 2$, we have
$\tilde J_{4,1} = J_{4,1}$.
For some of the terms, the $\tilde J$ is never considered; however for some terms it is cleaner to bound the spectral norm of $\tilde J$ in the subspace $V_2$, and so Deshpande and Montanari provide trace power method bounds on the difference in norm:
\begin{lemma}[Lemma 4.26 in \cite{DeshpandeM15-arxiv}]\label{lem:tildes}
    With probability at least $1 - 6n^{-5}$, for each $\eta \le 2$ and for each $\nu \le \binom{4}{\eta}$,
    \[
	\|J_{\eta,\nu} - \tilde J_{\eta,\nu}\| \lesssim \alpha_4 \bar{n}.
    \]
\end{lemma}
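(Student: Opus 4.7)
My plan is to apply the trace power method (\prettyref{lem:trace-power}) to each matrix $D_{\eta,\nu} \defeq J_{\eta,\nu} - \tilde J_{\eta,\nu}$ separately, exploiting the fact that $D_{\eta,\nu}$ is supported only on pairs $(A,B)$ with $A \cap B \neq \emptyset$ (equivalently $|A \cup B| \le 3$). This single structural constraint on the support is what gains a factor of $\bar n^{1/2}$ compared to the $\alpha_4 \bar n^{3/2}$ bound that $Q$ itself obeys.

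\textbf{Step 1: reduce to a trace computation.} Fix $\eta \in \{1,2\}$ and $\nu \le \binom{4}{\eta}$, and write $D = D_{\eta,\nu}$. For an even integer $k$, $\|D\|^k \le \Tr(D^k)$, and
\[
\E\Brac{\Tr(D^k)} \;=\; \sum_{A_1,\ldots,A_k}\E\Brac{\prod_{\ell=1}^{k} D(A_\ell,A_{\ell+1})},
\]
with indices taken mod $k$, and where each summand vanishes unless $A_\ell \cap A_{\ell+1}\neq\emptyset$ for every $\ell$.

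\textbf{Step 2: structural observations for contributing cycles.} Each entry of $D$ is, up to the scaling $\frac{2^\eta}{16}\alpha_4$, a product of at most $\eta \le 2$ variables $A_{uv}$ with $u,v \in A_\ell \cup A_{\ell+1}$. Thus each term of the trace expectation is a monomial in at most $k\eta \le 2k$ centered $\{\pm 1\}$ variables, multiplied by $(\alpha_4)^k$ up to a constant factor. Contributing (nonzero) monomials are exactly those in which every distinct edge variable $A_{uv}$ appears with even multiplicity, so the associated multigraph $F$ has at most $k\eta/2 \le k$ distinct edges after identifying parallel copies. The key structural point is that each pair $A_\ell$ is a size-$2$ subset of $[n]$ sharing at least one vertex with $A_{\ell+1}$, so the set $V := \bigcup_\ell A_\ell$ has cardinality at most $k+1$: start with $|A_1|=2$, and each new $A_{\ell+1}$ can introduce at most one genuinely new vertex. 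Every vertex label appearing in the monomial lies in $V$, so the number of distinct labels in a contributing cycle is at most $k+1$.

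\textbf{Step 3: plug into the trace bound.} Combining the above, the number of contributing labelings is at most $n^{k+1}$ times a combinatorial factor counting the shape of the cycle (number of ways to choose the shared vertex between consecutive $A_\ell$, choice of $\nu$-type, and matching of repeated edge labels), which is at most $(Ck)^{O(k)}$ for some absolute constant $C$. Hence
\[
\E\Brac{\Tr(D^k)}\;\lesssim\; \alpha_4^{k}\cdot n^{k+1}\cdot (Ck)^{O(k)}.
\]
Invoking \prettyref{lem:trace-power} with $\alpha=1$ and taking $k\asymp\log n$ yields $\|D\|\lesssim \alpha_4\cdot n\cdot \polylog n = \alpha_4 \bar n$ with probability $1-O(n^{-5})$. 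A final union bound over the (constantly many) pairs $(\eta,\nu)$ with $\eta\le 2$ produces the stated $1-6n^{-5}$ probability.

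\textbf{Main obstacle.} The conceptual step is Step 2, but the bookkeeping that I expect to be most delicate is the case analysis that verifies $D(A,B)$ really has the advertised form on non-disjoint pairs: for $\eta=2$ and the various $\nu$, depending on whether $|A\cap B|=1$ or $A=B$, some of the four edge-variables in $\tilde J'_{2,\nu}$ collapse via $A_{ii}=0$ to yield $D(A,B)=0$, while others survive as products of one or two variables. One has to verify that in every surviving case the monomial still involves only variables indexed by pairs inside $A_\ell\cup A_{\ell+1}$ (so the vertex-count argument applies) and still has at least one $\pm 1$ factor that must be matched by the rest of the cycle (so the ``every variable appears evenly'' constraint is nontrivial). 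Once this case analysis is in place, the counting in Step 3 is routine and mirrors the arguments used in \prettyref{lem:01w} and \prettyref{lem:w-proj}.
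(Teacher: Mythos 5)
Your argument is correct, and it is essentially the standard trace/moment-method strategy used throughout this paper and in Deshpande--Montanari; note that the paper itself does not prove this statement but imports it as Lemma 4.26 of \cite{DeshpandeM15-arxiv}, so there is no in-paper proof to compare against. The one thing worth pointing out is that the lemma is even easier than your write-up suggests. The difference $D=J_{\eta,\nu}-\tilde J_{\eta,\nu}$ is supported on pairs with $A\cap B\neq\emptyset$, so each row and each column of $D$ has at most $2n$ nonzero entries, each of magnitude at most $\tfrac{2^{\eta}}{16}\alpha_4\le\tfrac14\alpha_4$; the elementary bound $\|D\|\le\sqrt{\|D\|_{1\to1}\,\|D\|_{\infty\to\infty}}$ (maximum absolute column sum times maximum absolute row sum) then gives $\|D\|\le\tfrac12\alpha_4 n$ \emph{deterministically}, with no randomness, no trace method, and no need for the case analysis you flag as the main obstacle: all that matters is the entrywise bound and the support constraint, not which variables survive the collapse $A_{ii}=0$. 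For the same reason, your concern in Step 2 that each surviving entry must retain a $\pm1$ factor ``to be matched'' is moot --- your own count of at most $k+1$ distinct labels comes purely from the chaining constraint $A_\ell\cap A_{\ell+1}\neq\emptyset$ and works with the worst-case bound $|D(A,B)|\le\alpha_4/4$ on every supported entry. Two cosmetic points: as you run it, the trace-power route yields $\alpha_4 n\cdot\log^{O(1)}n$ rather than $\alpha_4\bar n=\alpha_4 n\log n$ on the nose (the paper is equally loose with such factors in its other applications of \prettyref{lem:trace-power}, and the deterministic bound removes the issue entirely), and the union bound is over $4+6=10$ matrices rather than $6$; the constant in the failure probability is immaterial.
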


\bigskip
Deshpande and Montanari use the trace power method to bound the norm of $Q$ by bounding the norms of the $J_{\eta,\nu}$ individually.
Some of the $J_{\eta,\nu}$ matrices have Wigner-like behavior.
\begin{lemma}[Lemmas 4.21, 4.22 in \cite{DeshpandeM15-arxiv}]\label{lem:wigner}
    With probability $1 - O(n^{-5})$, we have that for each $(\eta,\nu) \in \{(2,1),(2,6),(3,\cdot),(4,1)\}$,
    \[
	\left\| J_{\eta,\nu} \right\| \lesssim \alpha_4 \cdot \barn.
    \]
\end{lemma}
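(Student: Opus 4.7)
The plan is to establish each bound in \pref{lem:wigner} via the trace power method developed in \pref{sec:matrix-conc}. For each of the specified $(\eta,\nu)$, I would fix an even integer $k = \Theta(\log n)$, bound $\E[\Tr(J_{\eta,\nu}^k)]$ by an expression of the form $n^{O(1)} \cdot (\alpha_4 \barn)^k \cdot (\mathrm{poly}(k))^k$, and then invoke \pref{lem:trace-power} to convert this moment bound into the high-probability spectral norm bound $\|J_{\eta,\nu}\| \lesssim \alpha_4 \barn$ with probability at least $1 - O(n^{-5})$.

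Expanding the trace gives
\[
\E\bigl[\Tr(J_{\eta,\nu}^k)\bigr] \;=\; \sum_{A_1,\ldots,A_k \in \binom{n}{2}} \E\Bigl[\prod_{\ell=1}^{k} J_{\eta,\nu}(A_\ell, A_{\ell+1})\Bigr]\mcom
\]
with indices taken mod $k$. By \pref{def:Js}, each entry $J_{\eta,\nu}(A,B)$ is (up to the normalization $2^\eta \alpha_4/16$) a product of $\eta$ of the centered edge random variables $A_{ij}$ following a fixed bipartite pattern between $A$ and $B$. So the product along a closed walk of length $k$ is a monomial in the $\{A_{ij}\}$, and the expectation vanishes unless every distinct edge variable appears with even multiplicity. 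This reduces the problem to counting contributing labeled closed walks; in the terminology of \pref{sec:matrix-conc}, each walk corresponds to a labeled multigraph $F$ on the vertex labels used by $A_1,\ldots,A_k$, with $\eta k$ labeled edges in total, and we want the number of such $F$ in which every edge has even multiplicity.

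For the cleanest case $(4,1)$, nonzero entries force $|A_\ell \cup A_{\ell+1}| = 4$, so each step contributes a $K_{2,2}$ between the pairs; the induced graph $F$ has $4k$ edges and is connected because consecutive pairs share the bipartite endpoints. Applying \pref{prop:eulerian} to the even-multiplicity collapse, any contributing $F$ has at most $2k+O(1)$ distinct vertex labels, yielding at most $n^{2k+O(1)} \cdot \mathrm{poly}(k)^{k}$ contributing configurations; multiplying by $\alpha_4^k$ per cycle gives the desired moment bound. The cases $(3,\nu)$ and $(2,1),(2,6)$ are handled by the same template, with $3k$ and $2k$ edges per walk respectively — the reduced edge count is compensated by the smaller normalization factor per entry, so that the final Wigner-type scaling $\alpha_4 \barn$ is preserved. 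Throughout, the $\tilde J$ versus $J$ distinction is handled by \pref{lem:tildes}, which allows us to work with $\tilde J_{\eta,\nu}$ whenever that is more convenient for bookkeeping.

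The main obstacle will be making the combinatorial counting of vertex labels tight for the low-$\eta$ cases. With only $2$ or $3$ edges per entry, consecutive pairs $A_\ell, A_{\ell+1}$ have more flexibility in how they may share vertex labels without forcing the corresponding edge variables to coincide, so the naive application of \pref{prop:eulerian} can overcount by a polynomial-in-$k$ factor. I would handle this by induction on the cycle length, as in the proof of \pref{lem:w-proj}: at each step either identify a vertex label of multiplicity one (whose neighbors in $F$ must then coincide, reducing to a shorter walk on fewer labels) or peel off an entire degenerate link. Once the induction is set up for each pattern $(\eta,\nu)$, the rest of the computation is essentially mechanical, and absorbing the $\mathrm{poly}(k)$ factors into the $\log^{O(1)} n$ slack inside $\barn = n \log n$ via \pref{lem:trace-power} completes the proof.
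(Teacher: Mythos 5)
First, note that the paper itself does not prove this lemma: it is imported verbatim from Deshpande--Montanari (their Lemmas 4.21--4.22) and merely restated in \pref{app:dm-bounds}, so the only "proof" in the paper is a citation. Your plan of reproving it by the trace power method is the right general template (it is what Deshpande and Montanari do), but as written the sketch has a genuine quantitative gap, and it occurs in the case you call cleanest. For $(\eta,\nu)=(4,1)$ the walk of length $k$ has $4k$ edges, so the generic even-multiplicity bound of \pref{prop:eulerian} gives at most $|E|/2+O(1)=2k+O(1)$ distinct vertex labels --- which is essentially vacuous, since the $k$ pairs $A_1,\dots,A_k$ only involve $2k$ vertices to begin with. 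A count of $n^{2k+O(1)}\cdot k^{O(k)}$ contributing walks, fed into \pref{lem:trace-power}, yields only $\|J_{4,1}\|\lesssim \alpha_4\, n^{2}\,\polylog(n)$, which is the trivial bound for an $\binom{n}{2}\times\binom{n}{2}$ matrix with entries of size $\alpha_4$; it does not give the claimed $\alpha_4\barn$. To reach $\alpha_4\barn$ you need $\E[\Tr((J_{4,1}/\alpha_4)^k)]\le n^{k+O(1)}\cdot k^{O(k)}$, i.e.\ at most $k+O(1)$ distinct labels, and this cannot come from \pref{prop:eulerian} alone: it requires exploiting the specific $K_{2,2}$ (resp.\ near-matching) pattern of each link --- every vertex of each pair is covered by the pattern in a connected way, so repeated edge variables force repeated \emph{pairs}, and the walk on pair-indices is tree-like as in a Wigner matrix of dimension $\binom{n}{2}$. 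This structural property is exactly what separates the patterns $(2,1),(2,6),(3,\cdot),(4,1)$ of this lemma from the patterns $(1,\cdot),(2,2),(2,3),(2,4),(2,5)$ of \pref{lem:not-wigner}, whose norms really are of order $\alpha_4\barn^{3/2}$; a proof that never uses this distinction cannot be correct, since it would "prove" the stronger bound for those matrices too.

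Two further points in the sketch compound the problem. The claim that for $\eta=2,3$ "the reduced edge count is compensated by the smaller normalization factor per entry" is false: by \pref{def:Js} every $J_{\eta,\nu}$ carries the same prefactor $\tfrac{2^{\eta}}{16}\alpha_4$, which differs only by absolute constants across $\eta$, so there is no compensating normalization, and the differing spectral norms come entirely from the combinatorics of the edge patterns. And your diagnosis of where the difficulty lies is inverted: for $\eta=2$ with the matching patterns $(2,1),(2,6)$ the multigraph has $2k$ edges and $O(1)$ components, so the generic Eulerian count already gives the right order $n^{k+O(1)}$ (modulo verifying the component bound), whereas for $\eta=3,4$ the naive count is off by a factor of $n^{\Omega(k)}$ in the trace --- not by a $\mathrm{poly}(k)$ factor --- so the inductive repair you propose, modeled on \pref{lem:w-proj}, has to be redesigned to show that unrepeated labels force entire links (pairs) to coincide, which is the actual content of the Deshpande--Montanari argument.
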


A select few of the $J_{\eta,\nu}$ have larger eigenvalues.
\begin{lemma}[Lemmas 4.23, 4.24 in \cite{DeshpandeM15-arxiv}]\label{lem:not-wigner}
    With probability $1 - O(n^{-4})$, we have that for each $(\eta,\nu) \in \{(1,\cdot),(2,2),(2,3),(2,4),(2,5)\}$,
    \[
	\left\| J_{\eta,\nu} \right\| \lesssim \alpha_4 \cdot \barn^{3/2}.
    \]
\end{lemma}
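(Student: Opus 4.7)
The plan is to apply the trace power method of \pref{lem:trace-power} separately to each matrix in the list, using the expansion of $\E[\Tr((J_{\eta,\nu} J_{\eta,\nu}^T)^k)]$ for a suitable even $k = \Theta(\log n)$. The essential observation is that each matrix in the list is \emph{asymmetric} in its treatment of row-pair and column-pair labels: the entry $J_{\eta,\nu}(A,B)$ never references at least one label of $A \cup B$. When one expands a power trace around a cyclic labeling $A_1, \ldots, A_{2k}$ of pairs, these unreferenced labels become ``free'' --- they contribute a factor of $n$ each to the count of labelings without imposing any constraint --- in sharp contrast to the Wigner-like matrices bounded in \pref{lem:wigner}, where every label of every pair is referenced in every entry.

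For $J_{1,\nu}$ with $\nu \in \{1,2,3,4\}$, each entry is a scalar multiple of $\alpha_4$ times a single adjacency variable $A_{xy}$, with $x$ and $y$ being specific (depending on $\nu$) labels of the row and column pair. Across the trace expansion, exactly one of the two labels in each of the $2k$ pairs is used, yielding $2k$ free labels (factor $n^{2k}$) and $2k$ constrained edge variables. The constrained edges form a closed walk of length $2k$, so \pref{prop:eulerian} implies that contributing labelings (those with every edge variable appearing an even number of times) use at most $k+1$ distinct constrained labels. This gives
\[
    \E\!\left[\Tr\left((J_{1,\nu} J_{1,\nu}^T)^k\right)\right] \lesssim \alpha_4^{2k} \cdot n^{3k+1} \cdot O(k)^{O(k)},
\]
and \pref{lem:trace-power} yields $\|J_{1,\nu}\| \lesssim \alpha_4 \bar{n}^{3/2}$ with probability $1 - O(n^{-5})$. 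For $J_{2,\nu}$ with $\nu \in \{2,3,4,5\}$, each entry is a product of two adjacency variables sharing a common index, with the shared index lying on a fixed side (row or column) determined by $\nu$; the opposite side contributes only one referenced label per entry. The resulting trace has $4k$ edges on $3k$ constrained labels plus $k$ free labels, and the constrained subgraph is connected --- it is a cycle of four-edge ``stars'' centered at the shared-side labels. Applying \pref{prop:eulerian} gives at most $2k+1$ distinct constrained labels, so
\[
    \E\!\left[\Tr\left((J_{2,\nu} J_{2,\nu}^T)^k\right)\right] \lesssim \alpha_4^{2k} \cdot n^{3k+1} \cdot O(k)^{O(k)},
\]
yielding the same bound $\|J_{2,\nu}\| \lesssim \alpha_4 \bar{n}^{3/2}$ via \pref{lem:trace-power}.

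The main technical obstacle is the bookkeeping: for each $(\eta,\nu)$ one must correctly identify which labels remain free under that specific asymmetry, and verify that the constrained subgraph is connected so that \pref{prop:eulerian} saturates at $|E|/2 + 1$ rather than a weaker count. The very fact that these eight matrices satisfy the $\alpha_4 \bar{n}^{3/2}$ bound rather than the tighter $\alpha_4 \bar{n}$ bound of \pref{lem:wigner} is a direct consequence of this asymmetry: in the symmetric cases $J_{2,1}, J_{2,6}, J_{3,\cdot}, J_{4,1}$ every label of every pair appears in some adjacency variable, so there are no free labels and the $n^k$ (or $n^{2k}$) boost is absent. A union bound over the eight matrices, each failing with probability $O(n^{-5})$, yields the claimed overall probability $1 - O(n^{-4})$.
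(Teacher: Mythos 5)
The paper does not actually prove this lemma --- it is imported verbatim from Deshpande--Montanari (their Lemmas 4.23 and 4.24), so there is no internal proof to compare against; your argument is essentially the same moment-method proof as in the cited source. Concretely, expanding $\E[\Tr((J_{\eta,\nu}J_{\eta,\nu}^\top)^k)]$, noting that the label of each pair that the entry never references is free (a factor $n^{2k}$ for $\eta=1$, $n^{k}$ for $\eta=2$), bounding the distinct constrained labels via \pref{prop:eulerian} using connectivity of the constrained walk, and invoking \pref{lem:trace-power} does give $n^{3k+1}k^{O(k)}$ and hence $\|J_{\eta,\nu}\|\lesssim \alpha_4\barn^{3/2}$, with a union bound over the eight matrices; the counting checks out. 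One sentence is backwards: for $J_{2,\nu}$ it is the side containing the shared index that contributes only one referenced label (its other label being free), while the opposite side contributes two referenced labels --- but the tallies you then use ($3k$ constrained labels plus $k$ free) are the correct ones, so this is merely a slip of wording rather than a gap.
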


We also give a short proof of an observation of Deshpande and Montanari, which states that some of the $J_{\eta,\nu}$ vanish when projected to $V_2$:
\begin{observation}[Lemmas 4.23, 4.24 in \cite{DeshpandeM15}]\label{obs:nullspace}
    Let $\Pi_2$ be the projector to $V_2$.
    Then always,
    \[
	\left\|\Pi_2 \Paren{\sum_{\nu = 1}^4 \tilde J_{1,\nu}}\right\| = 0,\quad \text{and similarly,}\qquad
	\|\Pi_2 (\tilde J_{2,3} + \tilde J_{2,5})\| = 0.
    \]
\end{observation}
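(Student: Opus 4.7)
The plan is to prove the identity deterministically (pointwise in the random graph), by showing that the column ranges of both matrices are contained in $V_0 \cup V_1 = V_2^\perp$; this will force $\Pi_2 M = 0$ as a matrix, and hence $\|\Pi_2 M\| = 0$, matching the word ``always'' in the statement. There is no probabilistic content to extract: the result is a purely algebraic consequence of how the entries of the $\tilde J$'s decompose.

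By \pref{prop:eigenspaces}, the subspace $V_0 \cup V_1$ consists exactly of those vectors $u \in \R^{\binom{n}{2}}$ admitting a representation $u_{\{a,b\}} = x_a + x_b$ for some $x \in \R^n$ (the part with $\iprod{x,\ovec}=0$ gives $V_1$, and a constant shift of $x$ produces the $V_0$ component). So it is enough to recognize such an $x$ column by column.

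For the first identity, fix a column label $B = \{c,d\}$ and compute directly from \pref{def:Js}:
\[
    \Paren{\sum_{\nu=1}^{4} \tilde J_{1,\nu}}(\{a,b\},\{c,d\}) \;=\; A_{ac} + A_{ad} + A_{bc} + A_{bd} \;=\; h_B(a) + h_B(b),
\]
where $h_B(i) \defeq A_{ic} + A_{id}$ (using $A_{ii}=0$ in the overlapping cases, so the formula is valid for every $A,B \in \binom{n}{2}$). Thus the column is of the form asked by \pref{prop:eigenspaces} and lies in $V_0 \cup V_1$. For the second identity, the same computation yields
\[
    \bigl(\tilde J_{2,3} + \tilde J_{2,5}\bigr)(\{a,b\},\{c,d\}) \;=\; A_{ac}A_{ad} + A_{bc}A_{bd} \;=\; g_B(a) + g_B(b),
\]
with $g_B(i) \defeq A_{ic}A_{id}$, and again each column lies in $V_0 \cup V_1$.

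Since every column of each matrix lies in $V_2^\perp$, the same is true of the range, so $\Pi_2$ annihilates the matrix and the claimed operator norms are zero. I do not expect any real obstacle here; the only bookkeeping issue is that the entries depend on a choice of ordering of the unordered pair $\{a,b\}$, but because the two sums $\sum_\nu \tilde J_{1,\nu}$ and $\tilde J_{2,3}+\tilde J_{2,5}$ are manifestly symmetric under $a\leftrightarrow b$, the decomposition $h_B(a)+h_B(b)$ (resp.\ $g_B(a)+g_B(b)$) is well-defined as a function of the unordered set $\{a,b\}$.
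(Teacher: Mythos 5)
Your proof is correct and follows essentially the same route as the paper's: both show that the range of each matrix lies in $V_0 \oplus V_1$ by exhibiting the entries in the form $x_a + x_b$ and invoking the characterization of the eigenspaces in \pref{prop:eigenspaces} (the paper checks this for $Mv$ with $v$ arbitrary, you check it column by column — the same argument). Your explicit handling of the overlapping-index cases via $A_{ii}=0$ and of the symmetry in $a \leftrightarrow b$ is a welcome bit of extra care.
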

\begin{proof}
    The proof follows from noting that the range of both of these sums of $J_{\eta,\nu}$ is in $V_1$.
    Consider some vector $v \in \R^{\binom{n}{2}}$;
    let $v' \defeq \left(\sum_{\nu = 1}^4 J_{1,\nu}\right) v$.
    We will look at the entry of $v'$ indexed by the disjoint pair $A = \{a,b\}$.
    By definition of the $J_{1,\nu}$, we have that
    \begin{align*}
	v'_{A} &= \sum_{c,d\in[n]} \bigg((A_{a,c} + A_{a,d}) + ( A_{b,c} + A_{b,d}\bigg)v_{c,d}\\
				       &= \Paren{\sum_{c,d} ( A_{a,c} +  A_{a,d})v_{c,d}}
				       + \Paren{\sum_{c,d} ( A_{b,c} +  A_{b,d})v_{c,d}},
    \end{align*}
    and so by the characterization of $V_1$ from \pref{prop:eigenspaces} the vector $v' \in V_1$.
    The conclusion follows.

    A similar proof holds for the matrix $\tilde J_{2,3} + \tilde J_{2,5}$.
\end{proof}

Finally, we use a bound on the norm of the matrix $K$, which is the difference of $H_{2,2}$ and the non-multilinear entries.
\begin{lemma}[Lemma 4.25 in \cite{DeshpandeM15-arxiv}] \label{lem:K-term}
    Let $K$ be the restriction of $H_{2,2} - \E[H_{2,2}]$ to entries indexed by sets of size at most $3$.
    With probability at least $1 - n^{-5})$,
    \[
	\|K\| \le \tilde{O}(\alpha_3 n^{1/2}).
    \]
\end{lemma}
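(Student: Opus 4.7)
The plan is to bound $\|K\|$ by decomposing $K$ into a constant number of sub-matrices, each controlled via the trace power method of \pref{lem:trace-power}. First, partition $K$ by the value of $|A \cup B|$. When $A = B = \{a,b\}$ (so $|A \cup B| = 2$), the entry is $H_{22}(A,A) - \E[H_{22}(A,A)] = \tfrac{1}{2}\alpha_2 A_{ab}$, contributing a diagonal matrix of operator norm $\alpha_2/2$. For the parameter choice in \eqref{eq:alpha-choice}, one has $\alpha_2/(\alpha_3 n^{1/2}) = 1/(\gamma^2\rho) = \polylog(n)$, so the diagonal piece is within the claimed bound.

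For the off-diagonal case $|A\cup B| = 3$, write $A = \{a,b\}$ and $B = \{a,c\}$ with $a$ the shared vertex. Expand
\[
\cG_{abc} = \tfrac{1}{8}\prod_{e \in \binom{\{a,b,c\}}{2}} (1+ A_e),
\]
so that
\[
K(A,B) = \alpha_3\bigl(\cG_{abc} - \tfrac{1}{8}\bigr) = \tfrac{\alpha_3}{8} \sum_{\emptyset \neq S \subseteq \binom{\{a,b,c\}}{2}} \prod_{e \in S} A_e.
\]
Decompose the off-diagonal part of $K$ into seven matrices $K^{(S)}$ indexed by nonempty $S$ and bound each in turn; a union bound at the end combines all eight contributions.

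For each $K^{(S)}$, apply the trace power method: expand $\E[\Tr((K^{(S)})^{2k})]$ as a sum over closed walks $A_1 \to A_2 \to \cdots \to A_{2k} \to A_1$ of pairs in $\binom{n}{2}$ whose consecutive entries share a vertex, each step weighted by a product of $A$-variables determined by $S$. Only walks in which every $A$-variable occurs with even multiplicity contribute in expectation. A graph-theoretic counting in the style of \pref{prop:eulerian}, together with an inductive argument that collapses unmatched edges as in the proof of \pref{lem:w-proj}, bounds the number of contributing labelings by $n^{k+O(1)}\cdot (O(k))^{O(k)}$. This yields $\E[\Tr((K^{(S)})^{2k})] \le (\alpha_3/8)^{2k}\cdot n^{k+O(1)}\cdot (O(k))^{O(k)}$, and \pref{lem:trace-power} then gives $\|K^{(S)}\| \lesssim \alpha_3 \sqrt{n}\cdot \polylog n$ with probability $1-O(n^{-5})$.

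The main obstacle is the counting of contributing closed walks for each monomial type $S$. The linear case $|S| = 1$ is the hardest: only one $A$-variable is forced per step, so the walk has the most freedom, and the bound reflects Wigner-type behavior on an effective-dimension-$n$ object, producing the $\sqrt{n}$ factor. The quadratic and cubic cases $|S|\ge 2$ impose additional pairing constraints per step (two or three edges must match up at every link), making the counting strictly tighter and the resulting spectral bound no worse. Union bounding over the seven monomial types and the diagonal piece completes the proof with total failure probability $O(n^{-5})$.
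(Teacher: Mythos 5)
First, note that the paper does not prove this lemma at all: it is imported verbatim as Lemma 4.25 of \cite{DeshpandeM15-arxiv} and merely restated in the appendix, so there is no in-paper proof to compare against. Judged on its own terms, your proposal contains a genuine error at the very first step: you expand the wrong matrix entry. By \pref{lem:schur}, $H_{2,2}$ is a block of $N(G,\underline\alpha) - \cA$, where $N_{A,B} = \alpha_{|A\cup B|}\prod_{i\in A\setminus B,\, j\in B\setminus A}\cG_{ij}$ is the \emph{bipartite} clique indicator, not the full clique indicator used in $M$. Hence for $A=\{a,b\}$, $B=\{a,c\}$ the entry is $\alpha_3\,\cG_{bc}$, so $K(A,B) = \alpha_3(\cG_{bc}-\tfrac12) = \tfrac{\alpha_3}{2}A_{bc}$ (and the diagonal of $K$ is exactly zero, since $N_{A,A}=\alpha_2$ is deterministic). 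Your expansion $\alpha_3(\cG_{abc}-\tfrac18) = \tfrac{\alpha_3}{8}\sum_{S\neq\emptyset}\prod_{e\in S}A_e$ describes the $M$-matrix version instead.

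This is not a cosmetic discrepancy, because two of your seven pieces violate the bound you assert for them, and the lemma is in fact \emph{false} for the $M$-version of $K$. Concretely, the singleton piece $S=\{a,b\}$ is $K^{(\{ab\})} = \tfrac{\alpha_3}{8}\,\diag(A_{ab})\,T$, where $T$ is the $0/1$ matrix of ``share exactly one vertex'' on $\binom{n}{2}$; since $T$ has top eigenvalue $2(n-2)$ and the diagonal factor is a sign matrix, $\|K^{(\{ab\})}\| = \Theta(\alpha_3 n)$, not $\tilde O(\alpha_3\sqrt n)$ (and symmetrically for $S=\{a,c\}$). Nor do these contributions cancel: taking $x=\novec$ and $y=\diag(A_{ab})\novec$ gives $\E[y^\top K_M x] = \tfrac{\alpha_3(n-2)}{4}$ with lower-order fluctuations, so the $M$-version of $K$ genuinely has norm $\Theta(\alpha_3 n)$. (This is precisely why \Sref{sec:DM-wit} reduces $M\succeq 0$ to $N\succeq 0$ in the first place.) With the correct entry $\tfrac{\alpha_3}{2}A_{bc}$ there is nothing to decompose: $K$ is a single Wigner-like matrix, and the trace-method count you sketch for the cross-edge case --- contributing closed walks of length $2k$ carry at most $k+O(1)$ distinct labels --- is the whole proof, yielding $\|K\|\le\tilde O(\alpha_3 n^{1/2})$ as in \cite{DeshpandeM15-arxiv}.
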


We also require bounds on the matrices used in the Schur complement steps.
The bounds of Deshpande and Montanari suffice for us, since we do not modify moments of order less than $4$.
\begin{lemma}[Consequence of Proposition 4.19 in \cite{DeshpandeM15-arxiv}]\label{lem:H11}
    Define $Q_n \in \R^{n,n}$ to be the orthogonal projection to the space spanned by $\vec{1}$.
    Suppose that $\underline{\alpha}$ satisfies $\alpha_1 - \alpha_2 \ge \Omega(\alpha_2 n^{-1/2})$ and $\alpha_2 - 2\alpha_1^2 \ge 0$, $\alpha_1 \ge 0$.
    Then with probability at least $1-n^{-5}$,
    \begin{align*}
	H_{1,1} &\succeq 0\\
	H_{1,1}^{-1} &\preceq \frac{1}{n(\alpha_2 p - \alpha_1^2 p)} Q_n + \frac{2}{\alpha_1} Q_n^\perp.
    \end{align*}
\end{lemma}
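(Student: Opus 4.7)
The plan is to decompose $H_{1,1} = \E[H_{1,1}] + Z$ with $Z = H_{1,1} - \E[H_{1,1}]$ the centered noise, analyze the two pieces separately on the eigenspaces $\range(Q_n)$ and $\range(Q_n^\perp)$ of $\E[H_{1,1}]$, and finish by inverting the resulting spectral lower bound.

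First I would compute the expectation explicitly by inspecting the entries of $N(G,\underline\alpha) - \cA$ indexed by pairs of singletons: $H_{1,1}(i,i) = \alpha_1 - \alpha_1^2$ deterministically, and $H_{1,1}(i,j) = \alpha_2 \cdot \Ind[(i,j) \in E(G)] - \alpha_1^2$ for $i \ne j$. Taking expectations under $G \sim \mathbb{G}(n,\tfrac{1}{2})$ yields
\[
\E[H_{1,1}] \;=\; \tfrac{1}{2}(\alpha_2 - 2\alpha_1^2)\, J_n \;+\; (\alpha_1 - \alpha_2/2)\, \Id_n,
\]
which diagonalizes along $Q_n, Q_n^\perp$ as $\mu_0\, Q_n + \mu_1\, Q_n^\perp$ with $\mu_0 = \tfrac{n}{2}(\alpha_2 - 2\alpha_1^2) + (\alpha_1 - \alpha_2/2)$ and $\mu_1 = \alpha_1 - \alpha_2/2$. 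The hypothesis $\alpha_2 - 2\alpha_1^2 \ge 0$ keeps $\mu_0$ at least $\Omega(n(\alpha_2 - 2\alpha_1^2))$, and $\alpha_1 - \alpha_2 \gg \alpha_2/\sqrt{n}$ keeps $\mu_1 \gtrsim \alpha_1$.

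Next I would bound the noise $Z$. Since the diagonal of $H_{1,1}$ is deterministic, $Z$ vanishes on the diagonal; off-diagonal, $Z(i,j) = \tfrac{\alpha_2}{2}\, A_{ij}$ in the centered $\pm 1$ notation of the paper. Thus $Z$ is (up to the scalar $\alpha_2/2$) a symmetric Wigner-type matrix with i.i.d.~$\pm 1$ entries, and a trace-power-method argument in the spirit of \pref{lem:trace-power} yields $\|Z\| \lesssim \alpha_2 \sqrt{n \log n}$ with probability at least $1 - O(n^{-5})$. This is the one genuinely probabilistic step.

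Combining on this high-probability event, the stated hypotheses ensure that the eigenvalues of $\E[H_{1,1}]$ dominate the relevant projections of $Z$, so that $H_{1,1} \succeq \tfrac{1}{2}\E[H_{1,1}] \succeq 0$. Inverting the block-diagonal lower bound then produces $H_{1,1}^{-1} \preceq \tfrac{2}{n(\alpha_2 - 2\alpha_1^2)}\, Q_n + \tfrac{2}{\alpha_1}\, Q_n^\perp$, which matches the stated bound up to constants. The main obstacle is the Wigner-type norm bound on $Z$; this is what sets the $1 - O(n^{-5})$ failure probability. A minor subtlety is that $Q_n$ is an eigenspace only of the expectation and not of $H_{1,1}$ itself, so the cross-block noise $Q_n Z Q_n^\perp$ needs to be absorbed. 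This is done either via the uniform operator-norm bound on $Z$ or, for a tighter treatment of the $Q_n$ direction, by bounding $\ovec^\top Z$ separately through sub-Gaussian concentration, which is much smaller than $\|Z\|$ because $Q_n$ is not the top eigendirection of a Wigner matrix.
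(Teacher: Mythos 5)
The paper itself gives no proof of this lemma: it is imported verbatim as a consequence of Proposition 4.19 of Deshpande--Montanari, so there is nothing internal to compare against. Your reconstruction follows what is surely the intended (and standard) route --- write $H_{1,1} = \alpha_1\Id + (\tfrac{\alpha_2}{2}-\alpha_1^2)J_n - \tfrac{\alpha_2}{2}\Id + \tfrac{\alpha_2}{2}A$, diagonalize the expectation along $Q_n,Q_n^\perp$, control the noise as a scaled Wigner matrix, and invert --- and your explicit computation of $\E[H_{1,1}]$ and of the off-diagonal noise $Z(i,j)=\tfrac{\alpha_2}{2}A_{ij}$ is correct. It is also worth noting that the inverse bound you derive, with denominator $n(\alpha_2-2\alpha_1^2)$ on the $Q_n$ block, is exactly the form actually invoked in the proof of \pref{thm:h21h11h12}, rather than the $n(\alpha_2 p - \alpha_1^2 p)$ appearing in the lemma statement; your version is the one consistent with how the lemma is used.

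There is, however, one step you assert rather than verify, and as stated it does not follow. On $Q_n^\perp$ the expectation contributes only $\mu_1 = \alpha_1 - \alpha_2/2$, while your Wigner bound gives $\|Z\| \approx \alpha_2\sqrt{n}$ (up to logarithms). For $H_{1,1} \succeq \tfrac12\E[H_{1,1}]$ on $Q_n^\perp$ you therefore need $\alpha_1 \gtrsim \alpha_2\sqrt{n\log n}$, and the hypothesis as literally written, $\alpha_1 - \alpha_2 \ge \Omega(\alpha_2 n^{-1/2})$, only yields $\alpha_1 \gtrsim \alpha_2$ --- far too weak. Indeed, since $\lambda_{\min}(A) \approx -2\sqrt{n}$, the conclusion $H_{1,1}\succeq 0$ is simply false unless $\alpha_1 - \alpha_2/2 \gtrsim \alpha_2\sqrt{n}$, so the exponent in the hypothesis should almost certainly be $+1/2$ rather than $-1/2$ (a transcription slip from \cite{DeshpandeM15-arxiv}); the paper's parameters $\alpha_1 = \rho n^{-1/2}$, $\alpha_2 = \gamma\rho^2 n^{-1}$ satisfy the corrected condition with polylogarithmic room. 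Your proof is complete once you replace ``the stated hypotheses ensure\dots'' with the explicit requirement $\alpha_1 - \alpha_2 \ge \Omega(\alpha_2\, \barn^{1/2})$ and check it against the parameter choice in \pref{eq:alpha-choice}.
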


\subsection{Additional Proofs}\label{app:misc}
We prove \pref{lem:probsubspace}, which follows almost immediately from the bounds of \cite{DeshpandeM15}.

\begin{proof}[Proof of \pref{lem:probsubspace}]
    Using the matrices from \pref{def:Js} and \pref{obs:nullspace}, we have that
    \begin{align*}
	\Pi_2 Q &= \Pi_2(\tilde J_{2,4} + \tilde J_{2,2}) + \Pi_2 \Paren{J_{2,4} - \tilde J_{2,4} + J_{2,2} - \tilde J_{2,2}} + \Pi_2\Paren{\sum_{\nu = 1}^4J_{3,\nu} + J_{4,1}},\\
	\Pi_2 \Pi_W Q &= \Pi_2\Pi_W(\tilde J_{2,4} + \tilde J_{2,2}) + \Pi_2 \Pi_W\Paren{J_{2,4} - \tilde J_{2,4} + J_{2,2} - \tilde J_{2,2}} + \Pi_2\Pi_W\Paren{\sum_{\nu = 1}^4J_{3,\nu} + J_{4,1}},
    \end{align*}
    where we have used the fact that the columns of $J_{2,4}$ and $J_{2,2}$ lie in $W$.
    We thus have
    \begin{align*}
	\Pi_2 Q - \Pi_2\Pi_WQ
	&= \Pi_2(I - \Pi_W)\Paren{J_{2,2} - \tilde J_{2,2} + J_{2,4} - \tilde J_{2,4} + \sum_{\nu = 1}^4 J_{3,\nu} + J_{4,1}},
    \end{align*}
    And by the bounds $\|J_{3,\cdot}\| \le \alpha_4 \cdot \barn $ and $\|J_{4,1}\| \le \alpha_4 \cdot \barn$ from \pref{lem:wigner} and the bounds $\|J_{2,4} - \tilde J_{2,4}\| \le \alpha_4 \barn$ and $\|J_{2,2} - \tilde J_{2,2}\| \le \alpha_4 \barn$ from \pref{lem:tildes}, and because $\Pi_2(I - \Pi_W)$ is a projection, the conclusion follows.
\end{proof}

Now, we prove that the trace power method works, for completeness.
\begin{proof}[Proof of \pref{lem:trace-power}]
    The proof follows from an application of Markov's inequality.
    We have that for even $k$,
    \begin{align*}
	\Pr[\|M\| \ge t]
	&= \Pr[\|M^k\| \ge t^k]\\
	&\le \Pr[\Tr(M^k) \ge t^k]\\
	&\le \frac{1}{t^k}\E[\Tr(M^k)]\\
	&\le \frac{1}{t^k} \sqrt{\pi \gamma k} \Paren{\frac{\gamma k}{e}}^k n^{\alpha k + \beta}\mcom
    \end{align*}
    where we have applied Stirling's approximation in the last step.
    Choosing $k = O(\log n)$ and $t = O\Paren{\eta^{-1/k}\cdot \gamma \cdot \log n \cdot n^{\alpha}}$ completes the proof.
\end{proof}

\fi

\ifnum\full=0
\section*{Acknowledgements}
We thank Satish Rao for many helpful conversations.  \hfill

We also greatfully acknowledge the comments of anonymous reviewers in helping us improve the manuscript.

\addreferencesection
\bibliographystyle{amsalpha}
\bibliography{pc}

\appendix

\fi

\end{document}